\newcommand{\RR}{{\mathbb R}}
\newcommand{\NN}{{\mathbb N}}
\title{}
\author{Jeffrey Galkowski}
\email{j.galkowski@ucl.ac.uk}
\address{Department of Mathematics, University College London, WC1H 0AY, UK}
\author{Maciej Zworski} 
\email{zworski@math.berkeley.edu}
\address{Department of Mathematics, University of California, Berkeley, CA 94720}
\email{hertz@math.berkeley.edu}
\address{Department of Mathematics, University of California,
Berkeley, CA 94720, USA.}
\begin{document}

\title{Classical--Quantum correspondence in Lindblad evolution}

\maketitle

\vspace{-0.2in}

\begin{center}
{\sc With an appendix by Zhen Huang and Maciej Zworski}
\end{center}

\vspace{-0.1in}

\begin{abstract}
We show that for the Lindblad evolution defined using (at most) quadratically growing 
classical Hamiltonians and (at most) linearly growing classical jump functions (quantized into 
jump operators assumed to satisfy certain ellipticity conditions and modeling interaction with a larger system), the evolution of a quantum observable remains close to the classical 
Fokker--Planck evolution in the Hilbert--Schmidt norm for times vastly exceeding the Ehrenfest time
(the limit of such agreement with no jump operators). The time scale is the same as in the recent 
papers \cite{hrr,hrrb}  by Hern\'andez--Ranard--Riedel but the statement and methods are different. 
{The appendix presents numerical experiments illustrating the classical/quantum correspondence
in Lindblad evolution and comparing it to the mathematical results.}
\end{abstract}

\section{Introduction}
\label{s:intr}

In quantum mechanics a system is often described using a {\em density matrix}, that is a
 positive operator of trace one on a Hilbert space. In this paper the Hilbert space will be
given by $ L^2 ( \mathbb R^n ) $ so that the density operator is then 
\[      A u ( x ) = \sum_{j} p_j \langle u , u_j \rangle u_j ( x ) , \ \ \ \ p_j \geq 0, \ \ \  \sum_{j} p_j = 1, \ \ \ 
\langle u_j , u_i \rangle = \delta_{ij}  . \]
If the system evolves according to the Schr\"odinger equation $ ( i h \partial_t + P ) v( t ) = 0 $, 
where $ P $ is a self-adjoint unbounded operator on $ L^2 (\mathbb R^n ) $ then 
(note the sign convention) the density matrix evolves by the Schr\"odinger propagation of $ u_j $'s. That gives the following equation:
\begin{equation}
\label{eq:Schr} \partial_t A (t) = \mathcal L_0 A (t) ,  \ \ \mathcal L_0 A := \frac i h [P , A ] , \ \ \
A ( t ) = e^{ t \mathcal L_0} A (0) = e^{ i t P/h} A (0) e^{ - i t P/h }  . 
\end{equation}
This evolution clearly preserves density matrices. 
Gorini--Kossakowski--Sudarshan \cite{gks} and Lindblad \cite{lin} generalized this by showing  (in the setting of
matrices and of bounded operators, respectively) that semigroups preserving
the trace and complete positivity are generated by operators of the form
\begin{equation}
\label{e:Lindblad}
\mathcal L A := \frac{i}{h}[P,A] +\frac{\gamma}{h}\sum_{j=1}^J  \left( L_jA L_j^*-\tfrac{1}{2}(L_j^*L_j A +A L_j^*L_j) \right), 
 \ \ \ \gamma \geq 0. 
\end{equation}
The corresponding evolution equation is called the Lindblad master equation or 
the GKLS equation and, following the long tradition which favours short northern European names, 
we refer to $ \mathcal L $ as the Lindbladian -- see \cite{chp} for a history of this discovery and pointers to the literature. The operators $ L_j $ are called jump operators and they describe a dissipative (see 
\eqref{eq:FP} below) interaction of a system evolving according to \eqref{eq:Schr} with a larger ``open"
system. (Hence the term ``jump'' as $ L_j $ describe the effect of moving to that larger system.)

\begin{figure}[h]
  \centering
  \includegraphics[width=170mm]{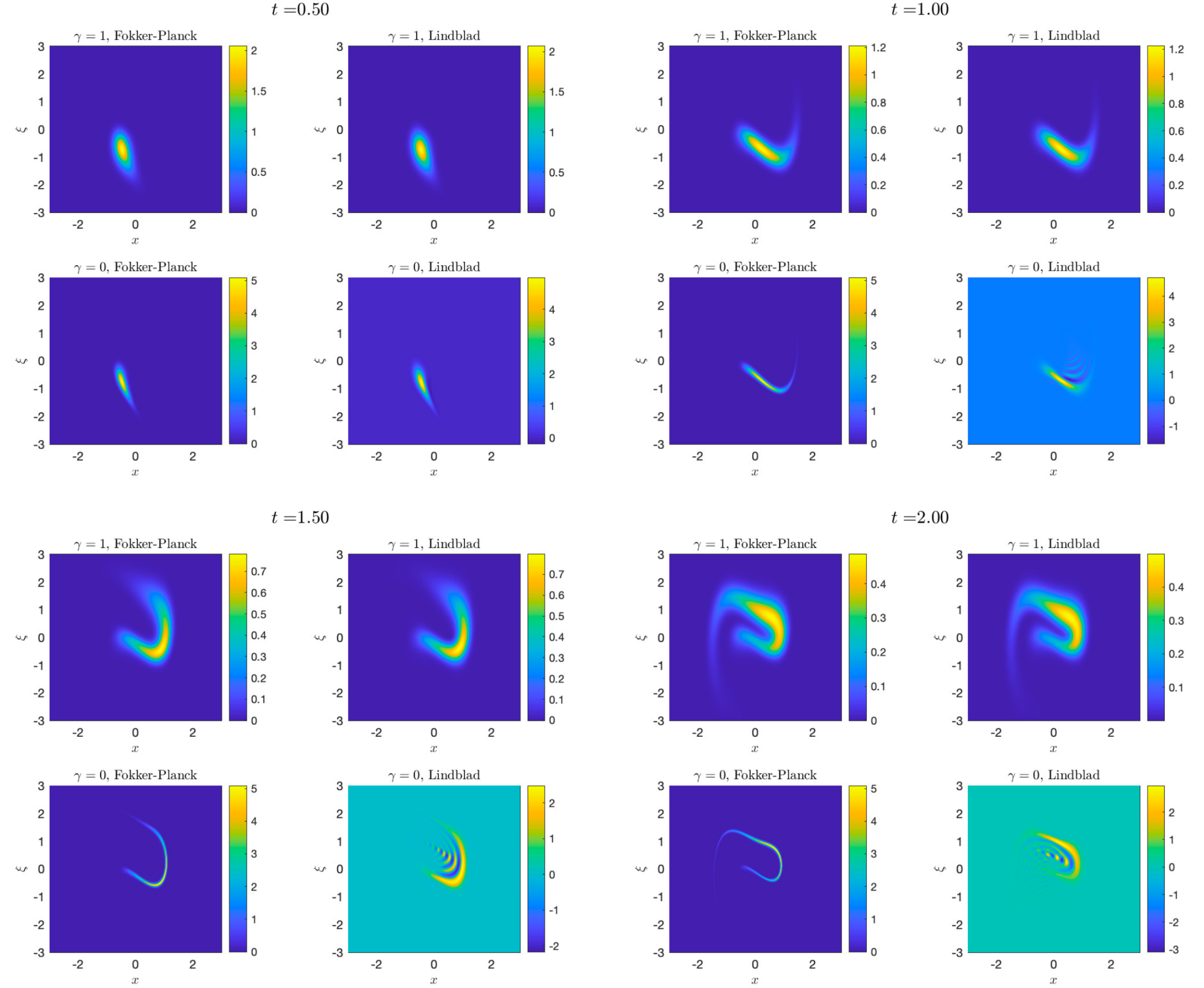}
  \caption{With $P$ as in~\eqref{eq:defV} and $L_j$ as in~\eqref{eq:defL}, we show contour plots of the evolutions of $ a ( t , x, \xi ) $ under the Fokker--Planck equation
  \eqref{eq:FP} and the evolution of $ b ( t, x, \xi ) $, where $ A (t ) = b^{\rm{w}} ( t , x, h D_x ) $ with $A(t)$ evolving under the Lindblad equation~\eqref{e:Lindblad}: i.e. $ b $ is
  the full Weyl symbol of the operator $ A ( t )$.
  The initial data is given by a coherent state \eqref{eq:a0} with $ h_0 = h = 2^{-4} $. 
  The breakdown of the classical/quantum correspondence is clearly visible when $ \gamma = 0$ 
  (Schr\"odinger evolution \eqref{eq:Schr}) while a very good agreement is seen for $ \gamma = 1$. For an animated version, see \url{https://math.berkeley.edu/~zworski/Lin_vs_FP.mp4} and for a description 
  of the numerical schemes, Appendix B.
  \label{f:3}}
\end{figure}

\subsection{Assumptions on $P $ and $ L_j $ and Fokker--Planck evolution} In this paper we will consider \eqref{e:Lindblad} with $P$ and $ L_j$'s given by {\em pseudodifferential 
operators} (see \eqref{eq:quant} for the notation $a^{\rm w}(x,hD)$), that is semiclassical quantizations of classical observables,  satisfying the following assumptions:
\begin{equation}
\label{eq:assa}
\begin{gathered} 
P = p^{\rm{w}} ( x, hD ) ,  \ \ \  |\partial^\alpha p ( x, \xi ) | \leq C_\alpha,  \ \ | \alpha | \geq 2, \ \ \ p = \bar p, \\
L_j = \ell_j^{\rm{w}} ( x, hD ) , \ \ \ |\partial^\alpha \ell_j ( x, \xi ) | \leq C_\alpha,  \ \ | \alpha | \geq 1 , 
 \ \ 1 \leq j \leq J .
\end{gathered}
\end{equation}

If in \eqref{e:Lindblad}, $ A = a^{\rm{w}} ( x, h D) $, then the leading part of the semiclassical 
expansion of $ \mathcal L A $ (see the derivation in \S \ref{s:cla}) is given by 
the action, $ Q a $, of the following Fokker--Planck operator 
\begin{equation}
\label{eq:FP}
Q:=H_p+\frac{\gamma}{2i}\sum_{j=1}^J (2 \{\ell_j,\bar{\ell}_j\} -\ell_j H_{\bar{\ell}_j}+\bar{\ell}_jH_{\ell_j}) +\frac{h\gamma}{4}\sum_{j=1}^J(H_{\bar{\ell}_j}H_{\ell_j}  +H_{\ell_j}H_{\bar{\ell}_j}) .
\end{equation}
Here, $ H_f := \sum_{ j=1}^n \partial_{\xi_j} f \partial_{x_j } - \partial_{x_j} f \partial_{\xi_j } $ is the 
Hamiltonian vector field of $ f = f ( x, \xi ) $, and $ \{ f, g \} := H_f g $ is the Poisson bracket. We note that $ H_p $ is anti-selfadjoint with respect to the standard measure on $ \mathbb R^n \times \mathbb R^n $. 
Since 
\begin{equation}
\label{eq:Bj} \begin{gathered} 
\tfrac 1{ 2i} \left( 2\{\ell_j,\bar{\ell}_j\} -\ell_j H_{\bar{\ell}_j } +\bar{\ell}_jH_{\ell_j} \right)=: \tfrac1i \{ \ell_j, \bar \ell_j \} + B_j , \ \ \ B_j^* = - B_j ,
\end{gathered}\end{equation}
the self-adjoint contribution to the second term is given by  the real valued function
\begin{equation}
\label{eq:fric}
\mu := \frac 1 {2i} \sum_{j=1}^J  \{ \ell_j, \bar \ell_j \} .
\end{equation}
It is interpreted as  friction.
Finally, the last term in \eqref{eq:FP} is self-adjoint and non-negative.  Assumptions \eqref{eq:assa} show that $ 
\mu $ is bounded ($\mu \in S(1) $ in the notation of \S \ref{s:sym}). 

\noindent
{\bf Example.} Suppose $ J = 2n $ and $ \ell_j = x_j $, 
$\ell_{ j + n } = \xi_j $  for $ j \leq n $. Then 
\begin{equation}
\label{eq:exam} Q = H_p + \tfrac12 \gamma h ( \Delta_x + \Delta_\xi ) . 
\end{equation}

When $ \gamma = 0 $  (that is,  when we consider \eqref{eq:Schr}) classical quantum correspondence in the evolution is described using 
{\em Egorov's theorem} -- see \cite[Theorem 11.12, \S 11.5]{z12} and references given there. Here we present it slightly differently, using the Hilbert--Schmidt norm of the operator
-- see Theorem \ref{t:eg} for a general version.
For the evolution \eqref{eq:Schr} with $ A ( 0) =  ( 2 \pi h)^{n/2} a_0^{\rm{w}} ( x , h D)$ where $a_0\in C_c^\infty(\mathbb{R}^{2n})$ is $h$-independent (so that $ \|A (0) \|_{\mathscr L_2 } = \| a_0 \|_{L^2} $) we have
\begin{equation}
\label{eq:eg1} 
 \begin{gathered} 
\|A(t)-\Op((\exp t H_p)^* a_0)\|_{\operatorname{\mathscr{L}_2}}\leq C e^{3 \Gamma t }  h^{2} , 
\end{gathered}
\end{equation}
where $ \| \bullet \|_{\mathscr L_2} $ denotes the Hilbert--Schmidt norm and 
\begin{equation} 
\label{eq:Lyap} \Gamma := \sup_{ |\alpha| = 2 }  \sup_{\mathbb{R}^{2n}}|\partial^\alpha p ( x, \xi ) | .
\end{equation}
 For a more precise 
version of $ \Gamma $, under additional assumptions on $ p $, in terms of Lyapunov exponents of the flow of $H_p $ see 
\cite[Appendix C]{obw} and references given there. For a relation between \eqref{eq:Lyap} and the flow 
see Lemma \ref{l:G2phi}. 

The estimate \eqref{eq:eg1} is not optimal, but as  $\|a^{\rm w}(x,hD)\|_{
\mathscr L_2 } = ( 2 \pi h)^{-n/2} \| a \|_{ L^2 ( \mathbb R^{2n} )} $, 
\eqref{eq:eg1} indicates the basic principle that the agreement with classical evolution breaks down at times proportional to $ \log (1/h) $, the Ehrenfest time.

Motivated by recent papers \cite{hrr, hrrb} by Hern\'andez--Ranard--Riedel we consider the question of an agreement with classical evolution for much longer times: the quantum evolution is given by $ e^{ t \mathcal L} $
where $ \mathcal L $ is the Lindblad operator \eqref{e:Lindblad} and the classical evolution by $ e^ {t Q }$,
where $ Q $ is the Fokker--Planck operator \eqref{eq:FP}. The results are shown in Theorem \ref{t:1} for the special case of $ h$-independent symbols, and in Theorem \ref{t:2} for the more general situation of initial condition in exotic symbol classes.  We show that agreement holds in Hilbert--Schmidt norms. 
The main advantage lies in an easy characterization of Hilbert--Schmidt pseudodifferential operators and in the simplicity of $ L^2$ estimates for the Fokker--Planck evolution defined using \eqref{eq:FP}. 

\noindent
{\bf Remark.} As was shown by Davies \cite{dav}, 
the operator of the form \eqref{e:Lindblad} generates a positivity preserving contraction on the Banach space of self-adjoint trace class operators provided that 
\[ Y := i P - \tfrac12 \sum_{j=1}^J L_j^* L_j \]
is the infinitesimal generator of a strongly continuous one parameter contraction semigroup on 
$ L^2 ( \mathbb R^n ) $. In our case, this follows from the Hille--Yosida theorem and Proposition \ref{p:spec}
(see the proof of Proposition \ref{p:HY} for a similar argument with $ \mathscr L_2 $ playing the role of $ L^2$).

As in \cite{hrr,hrrb} we make a strong non-degeneracy assumption: 
\begin{equation} 
\label{eq:nondeg}
\mathbf H \mathbf H^* \geq c I_{  \mathbb C^{2n} }  , 
 \ \ \ \ 
\mathbf H := [ H_{\ell_1} , \cdots , H_{ \ell_J}, H_{\bar \ell_1} , \cdots , H_{ \bar \ell_J} ] \in M_{ 2n \times 2J } ( \mathbb C ) .
\end{equation}
This cumbersome looking condition corresponds to ellipticity of the second order operator appearing in the classical Fokker--Planck
equation \eqref{eq:FP}  corresponding to \eqref{e:Lindblad} -- see example \eqref{eq:exam} and Remark 5 after Theorem \ref{t:1}.
We also need a more technical condition
\begin{equation}
\label{eq:tech}
|\partial^\alpha \Im \ell_j||\ell_j |+|\Im \ell_j||\partial^\alpha \ell_j|\leq C_\alpha, \ \ |\alpha| \geq 2. 
\end{equation}

\subsection{Lindblad propagation for $h$-independent observables}
With this notation in place we have a special case of Theorem \ref{t:2} in \S \ref{s:aqd}:
\begin{theo}
\label{t:1}
Suppose that $ \mathcal L $ is given by \eqref{e:Lindblad}, assumptions \eqref{eq:assa}, \eqref{eq:nondeg}, and \eqref{eq:tech} hold and $ h^{\frac 13}\leq\gamma \leq h^{-1} $. 
If $a_0\in C_c^\infty(\mathbb{R}^{2n})$ is $h$-independent and $A(t)$ satisfies
\begin{equation*}
\partial_t A ( t) = \mathcal L A ( t ) , \  \  A(0) = ( 2 \pi h )^{n/2} a_0^{\rm{w}} ( x, h D ) , \ \ 
\| A ( 0 ) \|_{ \mathscr L_2 } = \| a_0 \|_{ L^2 ( \mathbb R^{2n}) }, 
\end{equation*}
then for some constant $ C $, 
\begin{equation}
\label{eq:t21noGauss}
\begin{gathered}
\|A(t)- a(t)^{\rm{w}} ( x, h D ) \|_{\operatorname{\mathscr{L}_2}}\leq C e^{(M_0 + C_0 h ) \gamma t} t h^{\frac{1}{2}}\gamma^{-\frac{3}{2}} (1+\gamma)
 (1+ t\gamma^{\frac 32} h^{\frac12}  ),
\end{gathered}
\end{equation}
where 
$$
 (\partial_t-Q)a(t)=0,\ \ \ a(0)= ( 2 \pi h)^{n/2} a_0 ,\qquad M_0:=\sup \mu.
$$
When $ \mu \equiv 0 $ (see \eqref{eq:fric}) then \eqref{eq:t21noGauss} improves to 
\begin{equation}
\label{eq:nofric0}
\begin{gathered}
\|A(t)- a(t)^{\rm{w}} ( x, h D ) \|_{\operatorname{\mathscr{L}_2}}\leq C e^{C h^2 \gamma t }
th^{\frac{1}{2}}\gamma^{-\frac{3}{2}}(1+\gamma).
\end{gathered}
\end{equation}
\end{theo}

\begin{figure}
\centering
\includegraphics[width=14cm]{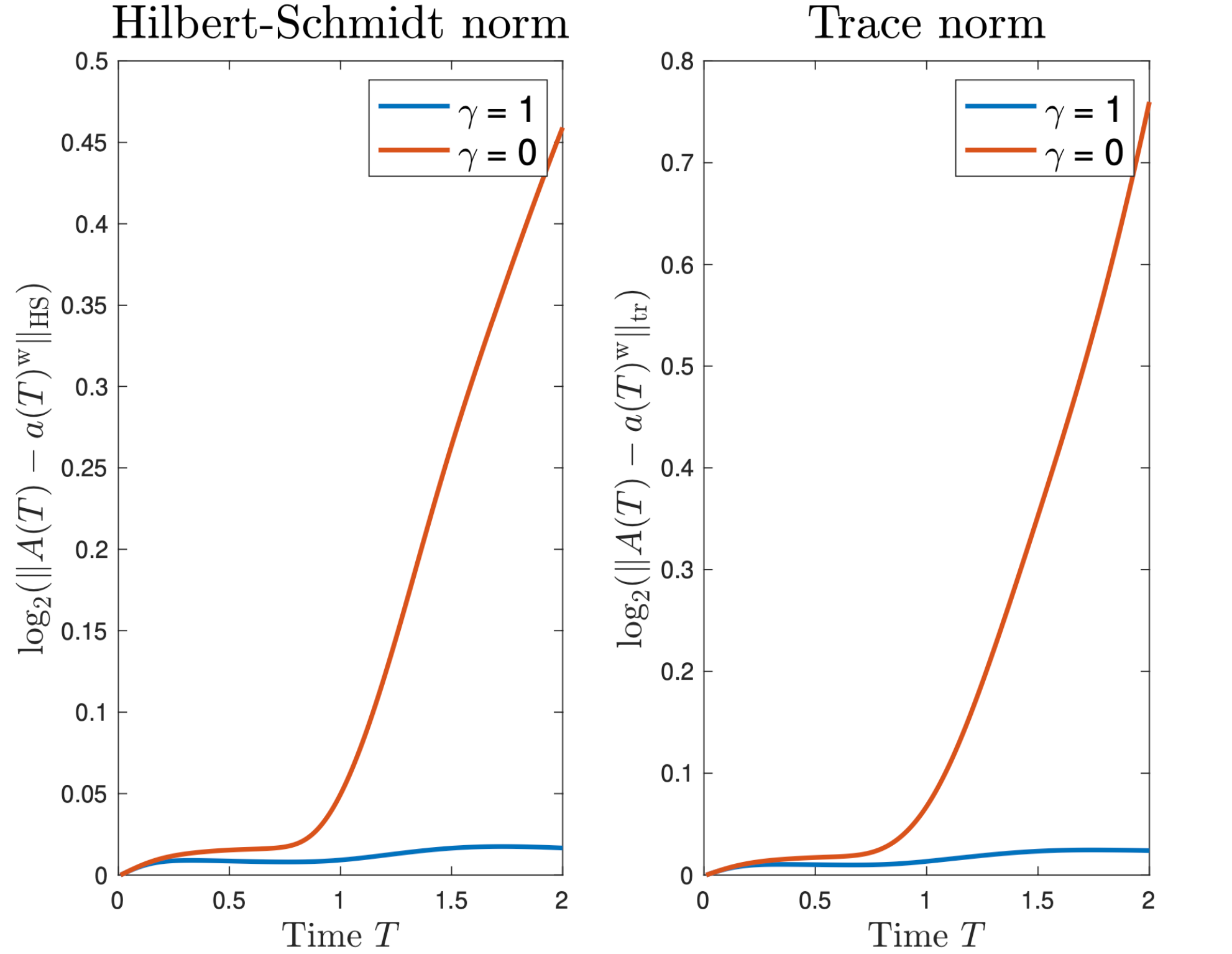}
          \caption{A comparison of classical evalutions with and without damping in the setting
          of Theorem \ref{t:1} with $P$ as in~\eqref{eq:defV} and $L_j$ as in~\eqref{eq:defL}. We choose
       a fixed $ a_0 $ given by \eqref{eq:a0} with           
          $h_0=2^{-3}$ and take $h=2^{-5}$. More quantitative results are presented in 
          Appendix B but the striking difference between the evolutions is clearly visible.
          \label{f:1}}
  \end{figure}

\noindent
{\bf Remarks.} 1. When in \eqref{eq:assa} $ p $ is quadratic and $ \ell_j $'s are linear than the agreement 
of the two evolutions is exact. This corresponds to the same phenomenon in the case of Egorov's theorem -- see
\cite[Theorem 11.9]{z12}. 

\noindent 2. When $ p ( x, \xi ) $ is confining (for instance $ p ( x, \xi ) \geq  |x|^2 + |\xi|^2 $ 
and subharmonic outside of a compact set) then Proposition \ref{p:confined} shows that in Example \eqref{eq:exam} (and most likely in 
greater generality),   $ \| a^{\rm{w}} ( t )  \|_{ \mathscr L_2 }  \geq \| a^{\rm{w}} ( 0) \|_{\mathscr L_2 } /C $ for $ t \leq h^{-\nu } $, $ \nu > 0 $. That means that
the estimates  \eqref{eq:t21noGauss} and \eqref{eq:nofric0} are meaningful for long times.

\noindent
3. To see the reason for the powers of $ h$, $ \gamma $, and $ t $ in \eqref{eq:t21noGauss} consider the simplest case given in 
\eqref{eq:exam}. The classical (Fokker--Planck) evolution is then
\[ ( \partial_t - H_p - \varepsilon^2 \Delta_{x,\xi} ) a ( t ) = 0 , \  \ \ \varepsilon := \sqrt { \gamma h/2 } , 
  \ \ \ a (t )= a (t,x,\xi).  \]
The solutions satisfy the following estimate (immediate if $ H_p = 0 $), see Proposition \ref{p:advect}
(see \eqref{eq:estU0}):
\begin{equation}
\label{eq:est0}   \sum_{ | \alpha | \leq k } \| ( \varepsilon \partial_{x,\xi} )^\alpha a (t) \|_{L^2_{x,\xi} } \leq C \sum_{ | \alpha | \leq k }
\| ( \varepsilon \partial_{x,\xi} )^\alpha a (0) \|_{L^2_{x,\xi} } . \end{equation} 
The key fact is that there is \emph{no} dependence on $ t $ -- that is not the case for the evolution by $ H_p $ alone, see \eqref{eq:rhots}. The composition formula for pseudodifferential operators in Lemma \ref{l:compose}
shows that $ \mathcal L \left[ a(t)^{\rm{w}} (x, h D )\right]  \equiv ( Q a ( t ) )^{\rm{w}}( x, h D )  $ modulo terms quantizing functions bounded by 
the size of $ (1+\gamma)h^2 \partial^3 a (t ) $. These can be estimated using \eqref{eq:est0} where in the case of 
\eqref{eq:cohde} and for $ | \alpha | = 3$, 
\begin{equation}
\label{eq:h2} 
\begin{split} (1 + \gamma ) h^2 \|   \partial_{x,\xi} ^\alpha  a (t) \|_{L^2_{x,\xi} } &\leq 
C (1 + \gamma ) h^2 \varepsilon^{-3} \sum_{ | \alpha | \leq 3 }
\| ( \varepsilon \partial_{x,\xi} )^\beta a (0) \|_{L^2_{x,\xi} } \\
& \leq ( 2 \pi h )^{\frac n2}
C ( 1 + \gamma ){\gamma^{-\frac32}  }
h^{\frac12 }  .
\end{split} \end{equation}
To get \eqref{eq:nofric} we write
\[ A(t) - a(t)^{\rm{w}} (x, h D )  = \int_0^{t} e^{ (t-s) \mathcal L } (  
   \mathcal L  a(t)^{\rm{w}} (x, h D ) - ( Q a ( t ) )^{\rm{w}}( x, h D ) ) ds , \]
which together with \eqref{eq:h2} and the fact that $\|A ( 0 ) \|_{\mathscr{L}_2}=\|a_0\|_{L^2}$, gives \eqref{eq:nofric0}. The extra growth in~\eqref{eq:t21noGauss} results from friction which is absent in this example. We used here the fact that in the example
$ e^{t \mathcal L} $ is a contraction -- in general there could be exponential growth produced by the 
friction term; this is reflected by the exponential prefactor in~\eqref{eq:t21noGauss}. 

\noindent
4. The class of operators $ P$ satisfying \eqref{eq:assa} includes Schr\"odinger operator whose classical dynamics 
exhibits chaotic behaviour. In that case  one expects optimality of $ t \sim \log (1/h) $
limit for classical--quantum correspondence for \eqref{eq:Schr}. For instance we could take
\[ p ( x, \xi ) = \xi_1^2 + \xi_2^2 + x_1^2 + x_2^2  + \lambda ( x) ( x_1^2 x_2 - \tfrac 13 x_2^3) , 
\] 
where $ \lambda \in C_{\rm{c}}^\infty ( \mathbb R^2 ; [ 0, 1]) $ 
and $ \lambda = 1 $ near $0 $. 

\noindent
5. Compared to the models used in the physics literature -- see Unruh--Zurek \cite{unzu} for 
the pioneering discussion of the classical/quantum correspondence for open systems -- 
the ellipticity hypothesis \eqref{eq:nondeg} made in \cite{hrr} and here is too strong. Rather than \eqref{eq:exam}, one
should consider $ \ell_j = x_j $, $ 0 \leq j \leq J = n $ so that the Fokker--Planck operators is given by 
$ Q = H_p + \frac12 \gamma h \Delta_\xi $. This would require more subtle subelliptic estimates (see
Smith \cite{smith} for a recent treatment with an asymptotic parameter) than \eqref{eq:est0}. 
Gong--Brumer \cite{gob} showed numerically that for such operators with chaotic classical dynamics for
$ p$,  
the classical/quantum correspondence persists for long times.

\subsection{Lindblad propagation for mixtures of Gaussian states}

We now state a special case of our theorem where we consider mixtures of Gaussian states in a sense similar to that in~\cite{hrr}. For that we define the standard coherent states:
\[ \psi_{(x_0, \xi_0 )} = ( 2 \pi h )^{-\frac n 4 } e^{ - (x-x_0)^2/2h} e^{ i (x-x_0 ) \xi_0/h }, 
\ \ \ \| \psi_{(x_0, \xi_0 )} \|_{L^2(\mathbb{R}^n)} = 1. \]
The corresponding density operator is 
\begin{equation}
\label{eq:cohde}   
\begin{gathered} A_{(x_0,\xi_0)}  u :=  \psi_{(x_0,\xi_0)} \langle u , \psi_{(x_0,\xi_0)} \rangle, 
\ \ \ \ A_{(x_0,\xi_0)} = a_0^{\rm{w}}( x, h D) , \\
a_{(x_0,\xi_0)} ( x, \xi ) = 2^{ n }\exp\left( - \frac 1 {h}  \left( ( x - x_0)^2 + (\xi -\xi_0)^2 \right) \right). 
\end{gathered}
\end{equation}
We note that in our result the Gaussian $  ( 2 \pi h )^{-\frac n 4 } e^{ - (x-x_0)^2/2h}$ 
could be replaced by $ \alpha h^{-n/4} \psi ( (x - x_0)/ \sqrt h ) $ where $ \psi \in \mathscr S ( \mathbb R^{2n} ) $ and $ \alpha = 1/\| \psi \|_{ L^2} $.

 For a probability measure  $\lambda_h$ on $\mathbb{R}^n\times \mathbb{R}^n$ we define the mixture of Gaussian states:
\begin{equation}
\label{e:mixture}   
\begin{gathered} A_{\lambda_h} u :=  \int \psi_{(x_0,\xi_0)} \langle u , \psi_{(x_0,\xi_0)} \rangle d\lambda_h(x_0,\xi_0), 
\ \ \ \ A_{\lambda_h} = a_{\lambda_h}^{\rm{w}}( x, h D) , \\
a_{\lambda_h} ( x, \xi ) = 2^{ n }\int \exp\left( - \frac 1 {h}  \left( ( x - x_0)^2 + (\xi -\xi_0)^2 \right) \right)d \lambda_h(x_0,\xi_0). 
\end{gathered}
\end{equation}
We note that $ \| A_{\lambda_h } \|_{\mathscr L_1 } = 1 $. {For the Hilbert Schmidt norm we calculate
\[  ( 2 \pi h )^{-n}  2^{2n} \int e^{- \frac 1 {h}  ( ( x - x_0)^2 +  ( x - y_0 )^2 + (\xi -\xi_0)^2 + ( \xi - \eta_0)^2) }
d x d \xi  =
e^{ - \frac 1 h ( ( x_0 - y_0 )^2 + ( \xi_0 - \eta_0 )^2) }  , 
 \]
 so that 
  \[ \| A_{\lambda_h } \|_{ \mathscr L_2 }^2 = \int\!\!\!\int \exp \left( - \frac 1 h \left( ( x_0 - y_0 )^2 + ( \xi_0 - \eta_0 \right )^2 \right) d \lambda_h ( x_0, \xi_0 ) d\lambda_h ( y_0, \eta_0 ) . \]
If $ \lambda_h = \mu (x, \xi ) dx d\xi $, where $ \mu $ is smooth and $ h$ independent we are close to 
the case considered in Theorem \ref{t:1} and $\| A_{\lambda_h } \|_{\mathscr L_2 } \sim h^{n/2} $.}

As in \eqref{eq:eg1} when $ \gamma = 0 $, we obtain a version of Egorov's Theorem: for the solution of \eqref{eq:Schr} with $ A ( 0) =  a_{\lambda_h}^{\rm{w}} ( x , h D)$, 
\begin{equation}
\label{eq:eg} 
 \begin{gathered} 
\|A(t)-((\exp t H_p)^* a_{\lambda_h})^{\rm{w}} ( x, hD ) \|_{\operatorname{\mathscr{L}_2}}\leq C e^{3 \Gamma t }  h^{\frac{1}{2}} {\|A(0)\|_{\mathscr{L}_2}}.
\end{gathered}
\end{equation}

On the other hand, for the Lindblad evolution, the quantum-- classical agreement lasts substantially longer as can be seen from the following special case of Theorem~\ref{t:2}
\begin{theo}
\label{t:1Gauss}
Suppose that $ \mathcal L $ is given by \eqref{e:Lindblad}, assumptions \eqref{eq:assa}, \eqref{eq:nondeg}, and \eqref{eq:tech} hold and $ \gamma < h^{-1} $. 
If, in the notation of \eqref{e:mixture}, $A(t)$ satisfies
\begin{equation}
\label{eq:evolGauss} 
\partial_t A ( t) = \mathcal L A ( t ) , \  \  A(0) = A_{\lambda_h} , 
\end{equation}
then for some constant $ C $, 
\begin{equation}
\label{eq:t21}
\begin{gathered}
\|A(t)- a(t)^{\rm{w}} ( x, h D ) \|_{\operatorname{\mathscr{L}_2}}\leq Ce^{(M_0 + C h )\gamma t} t
( \gamma + \gamma^{-\frac{3}{2}} ) 
 h^{\frac12} (1+ t\gamma^{\frac32} h^{\frac12}   ){\|A_{\lambda_h}\|_{\mathscr{L}_2}} , 
\end{gathered}
\end{equation}
where 
$$
 (\partial_t-Q)a(t)=0,\ \ \ a(0)= a_{\lambda_h} ,\qquad M_0:=\sup \mu.
$$
When $ \mu \equiv 0 $ (see \eqref{eq:fric}) then \eqref{eq:t21} improves to 
\begin{equation}
\label{eq:nofric}
\begin{gathered}
\|A(t)- a(t)^{\rm{w}} ( x, h D ) \|_{\operatorname{\mathscr{L}_2}}\leq C e^{C h^2 \gamma t }  t
( \gamma + \gamma^{-\frac{3}{2}} ) 
 h^{\frac12}{\|A_{\lambda_h}\|_{\mathscr{L}_2}}, 
\end{gathered}
\end{equation}
\end{theo}

\begin{figure}
      \centering
                \includegraphics[width=14cm]{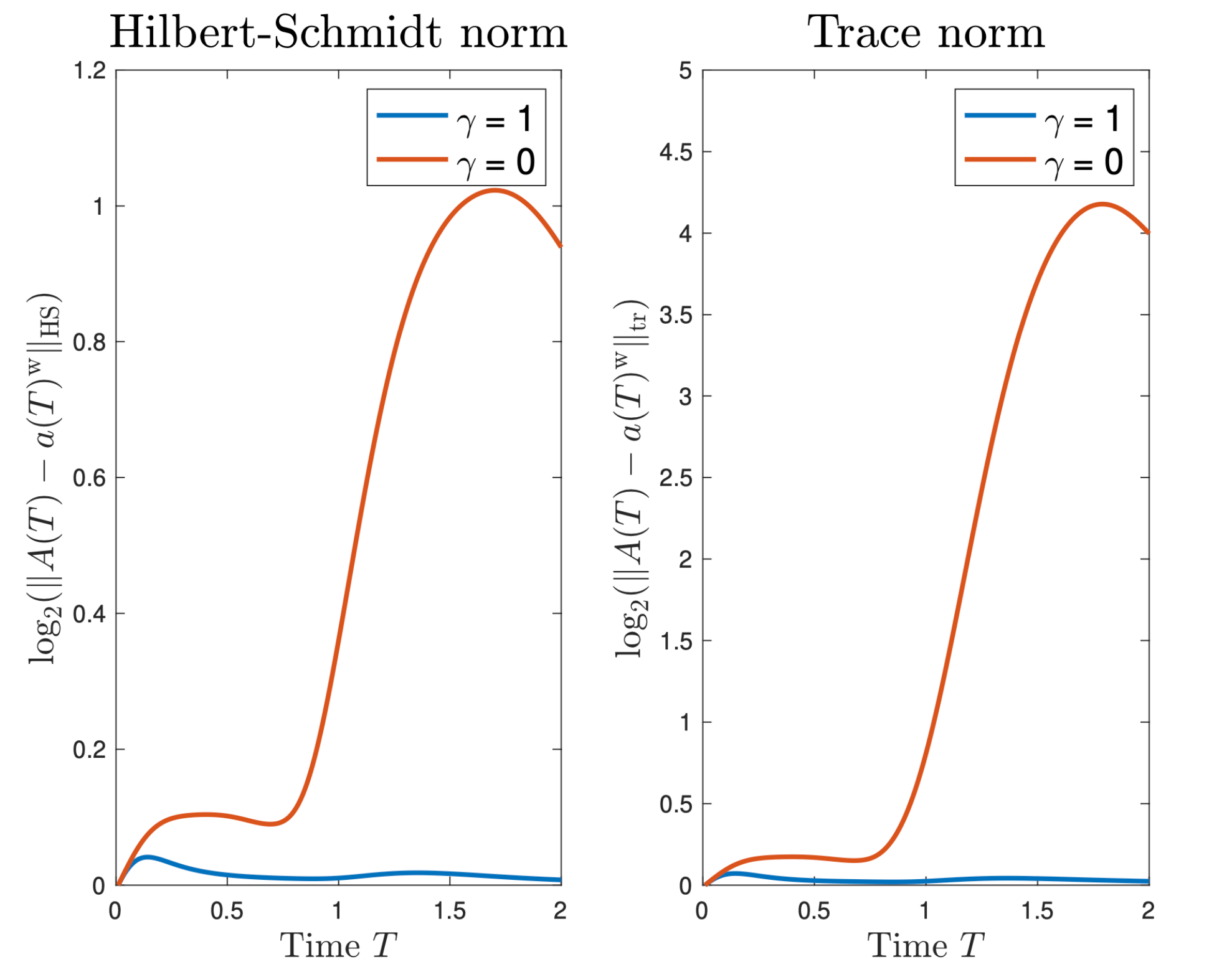}
          \caption{The analogue of Figure \ref{f:1} with the initial condition given 
         by a  coherent state \eqref{eq:a0} with $ h_0 = h = 2^{-5}$. Again the improvement
         in the agreement of quantum and classical evolutions is striking. \label{f:2}}
              \end{figure}

\noindent
{\bf Remark.} The time scales appearing in Theorems \ref{t:1}, \ref{t:1Gauss} and \ref{t:2}
agree with the time scales 
 in \cite{hrr}, as long as $ \gamma \leq 1 $:  Theorem 3.1 there gives the bound $ C \max( 1, \gamma^{-\frac 32})  h^{\frac12} t $  for a two tier comparison of evolution of specially constructed Gaussian states.
Under the assumptions in Theorem \ref{t:1} it reads as
 \begin{equation}
 \label{eq:HRR}   \begin{split}  \| A ( t ) - \tilde a (t)^w ( x, h D ) \|_{ \mathscr L_1} &  \leq C \max( 1, \gamma^{-\frac32})  h^{\frac12} t , \\
h^{-n}   \| \tilde a ( t ) - a ( t) \|_{ L^1 ( \mathbb R^{2n}) } &  \leq  C \max( 1, \gamma^{-\frac32})  h^{\frac12} t , 
 \end{split} \end{equation}
where $ \| \bullet \|_{ \mathscr L_1 } $ is the trace class norm. Remarkably, since the semigroup $ e^{ t 
\mathcal L} $  is contracting  on trace class operators,  there is no exponential growth even when friction is positive. The estimate does {\em not} provide a bound on  $ A ( t ) - a (t)^w ( x, h D )$ in any norm, 
but has the following natural consequence \cite[(1.7)]{hrr}:
\[  \begin{split}  &  \tr \left( A ( t ) b^{\rm{w}} ( x, h D )\right) -  ( 2 \pi h )^{-n} \int_{ \mathbb R^{2n} } a (t, x, \xi )  b ( x, \xi) dx d\xi \\
& \ \ \ \ \ \ \ =
\mathcal O\big( t \max( 1, \gamma^{-\frac32})  h^{\frac12}  \big) ( \| b^{\rm{w}} ( x, h D ) \|_{L^2 \to L^2} + \| b \|_{ L^\infty } ) . \end{split} \]
This is (typically) stronger than the corresponding consequence of \eqref{eq:t21}:
\[  \begin{split} &  \tr \left( A ( t ) b^{\rm{w}} ( x, h D )\right)  -  ( 2 \pi h )^{-n} \int_{ \mathbb R^{2n} }  a (t, x, \xi )  b ( x, \xi)  dx d\xi \\
& \ \ \ \ \ \ \ =
\mathcal O  \big(e^{(M_0 + C h )\gamma t} t
( \gamma + \gamma^{-\frac32} ) 
h^{\frac12} (1+ t\gamma^{\frac32} h^{\frac12} )  \big) ( 2 \pi h )^{-n/2}   \| b \|_{L^2 } . \end{split} \]
We stress, however, that Theorem 
 \ref{t:2} below applies to very general initial states $ A(0)$ of which Gaussian states or their {mixtures} are an example. In addition,  at the cost of further terms in the expansion, it gives approximation of the Lindblad evolution modulo $ \mathcal O(e^{(M_0+Ch)\gamma t}(th^{1/2}\gamma^{-\frac32})^N(1+\gamma)^{N}
 (1+ t\gamma^{\frac32} h^{\frac12 } )^{\frac{N+2}2 }  $ for any $N$. 
 For instance, when 
 $$\gamma = h^{\delta},  \ \ \delta <  \tfrac13, \ \ \ 
t\leq h^{-\nu}, \ \  \nu < \tfrac12 \min (2 \delta, 1- 3 \delta ) , $$ this gives an expansion modulo $\mathcal O(h^\infty)$.

This paper is self-contained except for some basic facts about semiclassical quantization from 
\cite[Chapter 4]{z12}. It is organized as follows. In \S \ref{s:sym} we review the definition of 
pseudodifferential operators and symbol classes. We introduce a new $L^2$-based symbol class
which is natural for the study of Hilbert--Schmidt operators, and show the properties of the
corresponding pseudodifferential calculus. In \S \ref{s:ego} we present a variant of Egorov's theorem 
with Hilbert--Schmidt norm and in \S \ref{s:semiL} we prove mapping properties of $ e^{ t \mathcal L}$.
\S \ref{s:cla} is then devoted to estimates on the Fokker--Planck evolution. A general 
result about agreement of classical and quantum dynamics in Hilbert--Schmidt norm is proved in 
\S \ref{s:aqd}. In \S \ref{s:lindbladBound}, we consider situations where we can effectively control the Hilbert--Schmidt norm of the Lindblad evolution from below. Finally, in the Appendix~\ref{a:app}, we review some properties of pseudodifferential operators with quadratic symbol growth and in Appendix~\ref{a:num} we describe some numerical experiments on the Linblad evolution. 

\medskip

\noindent
{\bf Acknowledgements.}  We would like to thank Simon Becker for pointing out reference \cite{dav}
which clarified the trace class properties of $ e^{ t \mathcal L } $ and Zhenhao Li for helpful comments on early versions of this paper. We are also grateful to Felipe Hern\'andez, Daniel Ranard and  Jess Riedel for an illuminating discussion of the results and methods of \cite{hrr}.

JG acknowledges support
from EPSRC grants EP/V001760/1 and EP/V051636/1 and MZ from the NSF grant
DMS-1901462 and the Simons Foundation under a “Moir\'e Materials Magic” grant.
MZ would like to thank Paul Brumer for introducing him, many years ago in the context of discussing 
\cite{gob}, to the problem of classical/quantum correspondence for Lindbladians. MZ is grateful to 
University College London for the generous hospitality during the writing of this paper.

\section{Symbol spaces and quantization}
\label{s:sym}

The operators introduced in \S \ref{s:intr} are defined using {\em pseudodifferential operators}
which are obtained by a Weyl quantization process: at first for $ a \in \mathscr S ( \mathbb R^{n}_x
\times \mathbb R^n_\xi ) $ (here $ \mathscr S $ denotes Schwartz functions, that is functions $ u$
for which $ x^\alpha \partial^\beta u $ are bounded for all multiindices $ \alpha $ and $ \beta$;
 $ \mathscr S' $ denotes its dual, the space of tempered distributions -- see \cite[Chapter 3]{z12}) we define
\begin{equation}
\label{eq:quant}
 \Op( a) u =  a^{\rm{w}} ( x, h D , h ) u := 
\frac{1}{ (2 \pi h)^n} \int a \left( \frac{x+y} 2 , \xi \right) e^{ \frac i h \langle x- y , \xi \rangle } u ( y ) d y d \xi.
\end{equation}
The Hilbert--Schmidt norm has a clean expression in terms of the symbol $ a $ (PDE parlance for classical observables):
\[ \| \Op ( a ) \|_{\mathscr L_2 }^2 = \tr \Op ( a ) \Op ( a)^* = 
\frac{1}{(2 \pi h)^n } \int_{\mathbb R^{2n} } | a( x, \xi ) |^2 d x d\xi .  \]
This is in contrast with the trace class norm which does not have an easy characterization in terms of $ a $ and its estimates require $L^1 $ norm of derivatives of $ a $ -- 
see \cite[Chapter 9]{DiSj}. 

In this paper we consider different classes of symbols for which \eqref{eq:quant} remains valid 
and has interesting composition properties (as an operator $ \Op ( a ) : \mathscr S \to \mathscr S' $
the operator \eqref{eq:quant} is well defined for $ a \in \mathscr S' ( \mathbb R^{2n} ) $ -- see 
\cite[Theorem 4.2]{z12}). We first recall the standard symbol class: for 
$ m : \mathbb R^{2n} \to [0, \infty ) $ satisfying $ m(z)/m(w) \leq C ( 1  + |z-w|)^N $, 
\begin{equation}
\label{eq:Sdm} a \in S_{\delta} ( m ) \ \Longrightarrow \ | \partial_z^\alpha a ( z, h ) | \leq C_\alpha h^{-\delta|\alpha|} m ( z ) , \ \ \ 
z = ( x, \xi ) \in \mathbb R^{2n} . \end{equation}
When $ \delta = 0 $ we write $ S (m ) $ and when $ m = 1$, $ S_\delta $. 

The next class corresponds to the conditions in \eqref{eq:assa}: for smooth function on $ \mathbb R^{2n} $, 
\begin{equation}
\label{eq:defSk} u (z, h ) \in S_{(k)} \ \Longleftrightarrow \  | \partial^\alpha_z  u ( z , h ) | \leq C_\alpha,  \ \ |\alpha| \geq k , \end{equation}
with constants $ C_\alpha $ independent of $ h $.
The seminorms are given by the best constants $ C_\alpha $. 

In dealings with Hilbert--Schmidt operators it is natural to consider symbols whose bounds are defined
using $ L^2 $ norms. For smooth functions on $ C^\infty ( \mathbb R^{2n} ) $ depending
on parameters $ h $ we define, for $ 0 \leq \rho < 1 $, 
\begin{equation}
\label{eq:SL2} a  \in S^{L^2}_{\rho} \ \Longleftrightarrow \ 
h^{-\frac n2} \| \partial_z^\alpha  a \|_{L^2 ( \mathbb R^{2n} ) }  \leq C_\alpha h^{-\rho |\alpha | } . \end{equation}
with the obvious seminorms.  We note that the Sobolev embedding theorem and an interpolation 
argument show that
$ | \partial_z^\alpha a | \leq C_\alpha'  h^{-\rho (|\alpha | + n + \delta ) + \frac n2} $, for any 
$ \delta > 0 $.  Hence for $ \rho = 0 $
the $ L^2 $ based spaces are contained in $ h^{\frac n2} S ( 1) $ defined above, and in general 
\begin{equation}
\label{eq:L221} S^{L^2}_{ \rho } \subset h^{ - \rho ( n + ) + \frac n2 } S_{
\rho} ( 1 ) . 
\end{equation}

It will also be useful to consider mixed spaces obtained by taking tensor products:
\begin{equation} 
\label{eq:defT} c ( z, w ) \in S \otimes S^{L^2}_{\rho } \ \Longleftrightarrow \ 
h^{-\frac n2}  \| \sup_{ z } \partial^\alpha_z \partial^\beta_w c ( z, \bullet ) \|_{L^2 } \leq C_{\alpha \beta } h^{-\rho|\beta|} , \ \ z , w \in \mathbb R^{2n} . 
\end{equation}
We stress that we always demand that $ 0 \leq \rho < 1 $. 

\noindent
{\bf Remark} Another choice of the norm could be given by $ \sup_z \|  \partial^\alpha_z \partial^\beta_w c ( z, \bullet ) \|_{L^2 } $ and both agree on
products. The choice in definition \eqref{eq:defT} is motivated by the fact that $\| f ( w, w ) \|_{L^2_w} \leq \| \sup_z| f ( z, w )| \|_{L^2_w} $
{which does not work for the other choice}.

For the properties of operators which are quantizations of $ a \in S_\delta ( m ) $ see \cite[Chapter 4]{z12}.
The same methods apply to operators obtained from $ a \in S_{(k)}$ and are reviewed in 
the appendix. In particular we obtain 
spectral properties
of operators quantizing $ S_{(2)} $. Since the properties of $ S_{\rho}^{L^2} $ and
$ 0 \leq \rho < 1 $ are more unusual we present them in this section. We start with 
\begin{lemm}
\label{l:Q2Tens}
Suppose that $ Q : \mathbb R^{2n} \times \mathbb R^{2n } \to \mathbb R $ is a non-degenerate bilinear quadratic form. Then, using definition \eqref{eq:defT}, 
\begin{equation}
\label{eq:Q}
 e^{ i h Q ( D_z, D_w )} :  S \otimes S^{L^2}_{\rho}  \to S  \otimes S^{L^2}_{\rho} ,  
 \end{equation}
is continuous and for every $ N $
\begin{equation}
\label{e:asymptoticQ}
e^{ihQ(D_z,D_w)}a-e^{\frac{i\pi}{4}\sgn Q}\sum_{k=0}^{N-1} \Big(\frac{h}{i}\Big)^k\frac{1}{k!}Q(D_{z},D_{w})^ka(z,w)\in h^{N(1-\rho)}S  \otimes S^{L^2}_{\rho},
\end{equation}
where $ \sgn Q $ is the signature of $ Q $ considered as a quadratic form on $ \mathbb R^{2n} 
\times \mathbb R^{2n} $. 
\end{lemm}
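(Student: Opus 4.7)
The plan is to adapt the standard stationary phase expansion for Fourier multipliers with nondegenerate quadratic phases (compare Theorem 4.17 in Zworski's book) to the asymmetric symbol class $S\otimes S^{L^2}_\rho$. Writing $Q(\zeta,\eta)=\langle A\zeta,\eta\rangle$ for an invertible $2n\times 2n$ matrix $A$, Fourier inversion produces the Fresnel representation
\begin{equation*}
e^{ihQ(D_z,D_w)}a(z,w) = \frac{e^{\frac{i\pi}{4}\sgn Q}}{(2\pi h)^{2n}|\det A|}\iint e^{-\frac{i}{h}\langle A^{-1}z',w'\rangle}\,a(z+z',w+w')\,dz'\,dw',
\end{equation*}
with the signature phase coming from the Gaussian normalization.

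Next I would Taylor-expand $a(z+z',w+w')$ about $(z,w)$ to order $N$ and substitute into the Fresnel integral. Evaluating the moments $\iint (z')^\alpha(w')^\beta e^{-\frac{i}{h}\langle A^{-1}z',w'\rangle}dz'dw'$ by integrations by parts in the phase produces precisely the polynomial sum $e^{\frac{i\pi}{4}\sgn Q}\sum_{k<N}(h/i)^k Q(D_z,D_w)^k a(z,w)/k!$. Since $Q(D_z,D_w)$ commutes with the multiplier, the Taylor remainder rearranges, via the integral form of Taylor's theorem, to
\begin{equation*}
R_N = \frac{(h/i)^N\,e^{\frac{i\pi}{4}\sgn Q}}{(N-1)!}\int_0^1 (1-t)^{N-1}\,e^{ithQ(D_z,D_w)}\bigl[Q(D_z,D_w)^N a\bigr]\,dt.
\end{equation*}

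The gain of $h^{N(1-\rho)}$ then comes from a structural observation that exploits the \emph{bilinearity} of $Q$. Writing $Q(D_z,D_w)=\sum_{i,j}A_{ij}\partial_{z_i}\partial_{w_j}$, each application of $Q(D_z,D_w)$ produces one derivative in $z$ and one in $w$, so $Q(D_z,D_w)^N a$ carries a sum of terms with exactly $N$ derivatives of each kind. By the very definition \eqref{eq:defT} of $S\otimes S^{L^2}_\rho$, the $z$-derivatives are free (they are controlled by the $\sup_z$ component) while each $w$-derivative costs only a factor of $h^{-\rho}$; hence $Q(D_z,D_w)^N a\in h^{-N\rho}(S\otimes S^{L^2}_\rho)$. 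Combined with the prefactor $h^N$ and the uniform-in-$t$ continuity of $e^{ithQ(D_z,D_w)}$ on the symbol class, this yields $R_N\in h^{N(1-\rho)}(S\otimes S^{L^2}_\rho)$, which is exactly the claim \eqref{e:asymptoticQ}; the boundedness statement \eqref{eq:Q} also follows, since the leading polynomial terms already lie in $S\otimes S^{L^2}_\rho$ by the same bilinearity argument.

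The main obstacle is the continuity $e^{ithQ(D_z,D_w)}\colon S\otimes S^{L^2}_\rho\to S\otimes S^{L^2}_\rho$, uniformly in $t\in[0,1]$, since the $\sup_z L^2_w$ structure of the class is compatible with neither $L^2_{z,w}$ (on which the multiplier is unitary) nor directly with Fourier transform in $w$ alone. The approach is to exploit the identity $\mathcal F_w[e^{ithQ(D_z,D_w)}b](z,\eta)=\hat b_w(z+thA^T\eta,\eta)$, which realizes the multiplier as a $z$-translation at each dual frequency $\eta$: Plancherel in $w$ combined with translation invariance of $\sup_z$ reduces the estimate to controlling a mixed norm of $\hat b_w$, which in turn is handled by integrations by parts in $w$ to extract $\langle\eta\rangle^{-M}$ decay, followed by a Sobolev-type estimate in $z$ at the cost of a bounded number of additional $h^{-\rho}$ factors absorbed into the seminorms. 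Once this continuity is established, the three preceding steps assemble into both \eqref{eq:Q} and \eqref{e:asymptoticQ}.
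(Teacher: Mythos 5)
Your overall decomposition — Fresnel representation, exact Taylor expansion with an integral remainder $R_N=\tfrac{(h/i)^N}{(N-1)!}\int_0^1(1-t)^{N-1}e^{ithQ(D)}\bigl[Q(D)^Na\bigr]\,dt$, and the observation that $Q(D_z,D_w)^Na\in h^{-N\rho}(S\otimes S^{L^2}_\rho)$ by the mixed $z$/$w$ derivative count — is correct, but it is the strategy the paper uses for the \emph{composition} lemma (Lemma~\ref{l:compose}), which \emph{cites} Lemma~\ref{l:Q2Tens} for the one ingredient that makes it run: uniform-in-$t$ continuity of $e^{ithQ(D_z,D_w)}$ on $S\otimes S^{L^2}_\rho$. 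That continuity is exactly the $N=0$ content of Lemma~\ref{l:Q2Tens}, so in your plan it cannot simply be invoked; it must be proved from scratch, and this is where the genuine gap lies.

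The route you sketch for establishing that continuity — Fourier transform in $w$, the identity $\mathcal F_w\bigl[e^{ithQ(D_z,D_w)}b\bigr](z,\eta)=\hat b_w(z+thA^{T}\eta,\eta)$, and then Plancherel plus a ``Sobolev-type estimate in $z$'' — does not close. The norm in definition~\eqref{eq:defT} is $\|\sup_z|\partial_z^\alpha\partial_w^\beta c(z,\cdot)|\|_{L^2_w}$, with the $\sup_z$ \emph{inside} the $L^2_w$. Plancherel in $w$ is compatible with $\sup_z\|\cdot\|_{L^2_w}$ (the other, strictly smaller norm discussed in the remark after~\eqref{eq:defT}), but it does not commute with the interior $\sup_z$; and once you try to recover $\sup_z$ via Sobolev embedding you need global $L^2_z$ control of $\hat b_w(\cdot,\eta)$, which the class $S\otimes S^{L^2}_\rho$ does not supply (local Sobolev only reproduces the same $\sup_z$-type quantity and is circular). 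There is also a secondary difficulty: $b$ is only $L^2$, not $L^1$, in $w$, so the integration-by-parts plan to extract $\langle\eta\rangle^{-M}$ pointwise decay of $\hat b_w$ is not immediately justified. The published proof circumvents all of this by staying in physical space: after the Fresnel representation it rescales $w_1=h^\rho v_1$ (making $h^{1-\rho}$ the effective asymptotic parameter), splits the integral into a near piece $c_1$ and a far piece $c_2$ via a compactly supported cutoff, applies classical stationary phase with an explicit remainder on $c_1$ (Sobolev embedding is used on the compact ball $B(0,1)$ in the \emph{integration} variables $(z_1,v_1)$, not in $z$), and gains arbitrary powers of $h^{1-\rho}$ on $c_2$ via non-stationary phase. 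Both pieces respect the $\|\sup_z|\cdot|\|_{L^2_w}$ norm because the argument shifts are by the integration variables $(z_1,v_1)$, so $\sup_z$ is translation-invariant and $L^2_w$ is translation-invariant, whereas in your Fourier-side picture the shift $thA^T\eta$ depends on the $w$-dual variable and destroys this compatibility.
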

\begin{proof}
We denote by 
$ B $ the symmetric matrix corresponding to our quadratic form: 
$Q(\zeta,\omega)=\frac{1}{2}\langle B(\zeta,\omega),(\zeta,\omega)\rangle.$ For $a\in S (1)
\otimes S^{L^2}_{ \rho} \subset h^{ - \rho ( n + ) + \frac n2 } S( 1 ) \otimes S_{\rho} (1) $, hence the expression 
$$
c(z,w):=  e^{ i h Q ( D_z, D_w )} a(z,w)
$$
makes sense as an element in $\mathscr{S}'$ (to see this, we apply e.g.~\cite[Theorem 4.17]{z12} with for each fixed value of $h$)
and 
 by~\cite[Theorem 4.8]{z12}, for $a\in \mathscr{S}$, 
$$
c ( z, w ) =\frac{|\det B |^{-\frac{1}{2}}}{(2\pi h)^{2n}}\int_{\mathbb{R}^{2n}}\int_{\mathbb{R}^{2n}}e^{\frac{i}{h}\varphi(z_1,z_2)}a(z+z_1,w+w_1)dz_1dw_1,
$$
 where
 $$
 \varphi(z_1,w_1)=-\tfrac{1}{2}\langle B^{-1}(z_1,w_1),(z_1,w_1)\rangle.
 $$
 Since $a\in h^{ - \rho ( n + ) + \frac n2 } S( 1 ) \otimes S_{\rho} (1) $, this integral  can be understood in the sense of oscillatory integrals and defines an element of $\mathscr{S}'$ -- see \cite[\S 3.6]{z12}. Recall also that oscillatory integrals allow for integrations by parts.
 
 Set $v_1=h^{-\rho}w_1$, and $\chi \in C_c^\infty(\mathbb{R}^{2n}\times \mathbb{R}^{2n})$ with $\chi \equiv 1$ near $0$ and $\supp \chi \subset B(0,1)$. Then using the fact that $ w_1 
\mapsto \varphi( z_1, w_1 ) $ is linear, we obtain
 $$
\begin{aligned}
c(z,w)&=\frac{|\det B |^{-\frac{1}{2}}}{(2\pi h^{1-\rho})^{2n}}\int_{\mathbb{R}^{2n}}\int_{\mathbb{R}^{2n}}e^{\frac{i}{h^{1-\rho}}\varphi(z_1,v_1)}a(z+z_1,w+h^{\rho}v_1)dz_1dv_1,\\
&=\frac{|\det B |^{-\frac{1}{2}}}{(2\pi h^{1-\rho})^{2n}}\int_{\mathbb{R}^{2n}}\int_{\mathbb{R}^{2n}}e^{\frac{i}{h^{1-\rho}}\varphi(z_1,v_1)}\chi(z_1,v_1)a(z+z_1,w+h^{\rho}v_1)dz_1dv_1\\
& \qquad  + \frac{|\det B |^{-\frac{1}{2}}}{(2\pi h^{1-\rho})^{2n}}\int_{\mathbb{R}^{2n}}\int_{\mathbb{R}^{2n}}e^{\frac{i}{h^{1-\rho}}\varphi(z_1,v_1)}(1-\chi(z_1,v_1))a(z+z_1,w+h^{\rho}v_1)dz_1dv_1\\
&=:c_1(z,w)+ c_2(z,w)
\end{aligned}
$$
We start by considering $c_1$. In this case, the integrand is compactly supported and we may apply the method of stationary phase~\cite[Theorem 3.16 and Theorem 3.17]{z12}. That gives
\begin{align*}
&\Big| \partial_{z}^{\alpha_1}\partial_w^{\alpha_2}\Big(c_1(z,w)-e^{\frac{i\pi}{4}\sgn B}\sum_{k=0}^{N-1} \Big(\frac{h^{1-\rho}}{i}\Big)^k\frac{1}{k!}\Big(Q(D_{z_1},D_{v_1})^ka(z+z_1,w+h^{\rho}v_1)|_{z_1=v_1=0}\Big)\Big)\Big|\\
&\leq C_Nh^{(1-\rho)N}\sum_{|\beta_1|+|\beta_2|\leq 2N+4n+1}h^{-\rho|\alpha_2|}\sup_{|(z_1,v_1)|< 1}|\partial_{z_1}^{\beta_1+\alpha_1}\partial_{v_1}^{\beta_2+\alpha_2}a(z+z_1,w+h^{\rho}v_1)|\\
&=:C_Nh^{(1-\rho)N}\sum_{|\beta_1|+|\beta_2|\leq 2N+4n+1}R_{\alpha\beta}(z,w), 
\end{align*}
with the estimates on the remainder provided by Sobolev's embedding:
\begin{align*}
|R_{\alpha\beta}(z,w)|^2&\leq h^{-2\rho|\alpha_2|}\sum_{\gamma\leq 2n+1}\|\partial_{z_1}^{\beta_1+\alpha_1+\gamma_1}\partial_{v_1}^{\beta_2+\alpha_2+\gamma_2}a(z+\cdot,w+h^{\rho}\cdot)\|_{L^2(B_{ \mathbb R^{2n}}(0,1))}^2.
\end{align*}
Hence, with $ B := B_{ \mathbb R^{4n} } (0,1)$,
\begin{align*}
\int_{\mathbb{R}^{2n} }\sup_z |R_{\alpha\beta}(z,w)|^2 dw
&\leq h^{-2\rho|\alpha_2|}\sum_{\gamma\leq 2n+1}\int_{\mathbb{R}^{2n}}\int_{B}\sup_z|\partial_{(z_1,v_1)}^{\alpha+\beta+\gamma}a(z+z_1,w+h^{\rho}v_1)|^2dz_1dv_1dw\\
 &\leq   \int_{B}h^{2\rho(|\gamma_2|+|\beta_2|)}\|\sup_z|\partial_{(z,w)}^{\alpha+\beta+\gamma}a(z,\cdot)|\|_{L^2}^2dz_1dv_1\\
& \leq  C  \sum_{\gamma\leq 2n+1}h^{2\rho(|\gamma_2|+|\beta_2|)}\|\sup_z|\partial_{(z,w)}^{\alpha+\beta+\gamma}a(z,\cdot)|\|_{L^2}^2.
\end{align*}
In particular, this implies that 
$$
c_1(z,w)-e^{\frac{i\pi}{4}\sgn B}\sum_{k=0}^{N-1} \Big(\frac{h^{1-\rho}}{i}\Big)^k\frac{1}{k!}\Big(Q(D_{z_1},D_{v_1})^ka(z+z_1,w+h^{\rho}v_1)\Big)|_{v_1=z_1=0}$$
is in $ h^{(1-\rho)N}S \otimes S^{L^2}_{\rho} $.

We now consider the remaining term in $ c$, $ c_2 $, and note that on $\supp (1-\chi)$, $|\partial_{(z_1,v_1)}\varphi(z_1,v_1) |\geq c\langle (z_1,v_1)\rangle $. Hence, integration by parts 
(justified by the definition of the oscillatory integral) yields, for $N>2n+1$,
\begin{align*}
&\ \ \ \ \ \ \ \ h^{(\rho-1)(2N-4n)}\|\sup_z\partial_{(z,w)}^\alpha c_2(z,\cdot)\|^2_{L^2}\\
&\leq C_N \int \sup_z\Big(\iint\sum_{|\beta_1|+|\beta_2|\leq N}\langle (z_1,v_1\rangle^{-N}\Big|\partial_{(z,w)}^\alpha \Big(\partial_{z}^{\beta_1}(h^{\rho}\partial_{w})^{\beta_2}a(z+z_1,w+h^{\rho}v_1)\Big)\Big|dz_1dv_1\Big)^2dw\\
&\leq C_N \iiint\sum_{|\beta_1|+|\beta_2|\leq N}\langle (z_1,v_1)\rangle^{-2N+2n+1}\sup_z\Big| \partial_{(z,w)}^\alpha \partial_{z}^{\beta_1}(h^{\rho}\partial_{w})^{\beta_2}a(z+z_1,w+h^{\rho}v_1)\Big|^2dz_1dv_1dw\\
&\leq  C_N \iint\sum_{|\beta_1|+|\beta_2|\leq N}\langle (z_1,v_1)\rangle^{-2N+2n+1}\big\|\sup_z \partial_{(z,w)}^\alpha \partial_{z}^{\beta_1}(h^{\rho}\partial_{w})^{\beta_2}a(z,\cdot)\big\|_{L^2}^2dz_1dv_1\\
&\leq C_N \sum_{|\beta_1|+|\beta_2|\leq N}\big\|\sup_z \partial_{(z,w)}^\alpha \partial_{z}^{\beta_1}(h^{\rho}\partial_{w})^{\beta_2}a(z,\cdot)\big\|_{L^2}^2 .
\end{align*}
Hence, we have  $c_2\in h^{(N-2n)(1-\rho)}S \otimes S^{L^2}_{\rho} $ for arbitrary $N$ and 
$ c\in  S  \otimes S^{L^2}_{\rho}$.  The argument also shows that that the map from $a$ to $c$ is continuous, and~\eqref{e:asymptoticQ} holds.
\end{proof}

We can write the composition law for operators in $S^{L^2}_{\rho}$ with $S_{(k)}.$
\begin{lemm}
\label{l:compose}
Let $0\leq \rho <1$, $k\geq 0$, $a\in S_{(k)},b\in S^{L^2}_{\rho}$. Then,  
$$
\Op(a)\Op(b)=\Op(c),
$$
where $ c $ has the following expansion: for $N\geq k$,
\begin{equation}
\label{e:L2Comp}
c(x,\xi)- \sum_{j=0}^{N-1} \frac{1}{j!} \left(\frac{  h} {2i }\sigma(D_x,D_\xi,D_y,D_\eta)^ja(x,\xi)b(y,\eta)\right)|_{\substack{y=x\\\eta=\xi}}\in h^{N(1-\rho)}S^{L^2}_{\rho}.
\end{equation}
\end{lemm}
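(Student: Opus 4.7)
My plan is to combine the Moyal composition formula with Lemma~\ref{l:Q2Tens}, after first disposing of the polynomial growth of $a$ in $z$ by Taylor-expanding the exponential in an auxiliary semigroup parameter. For Schwartz symbols the Weyl composition satisfies
\[ c(x,\xi)=e^{ihQ(D_z,D_w)}[a(z)b(w)]\big|_{z=w=(x,\xi)}, \]
where $Q(D_z,D_w)=-\tfrac12\sigma(D_z,D_w)$ is a nondegenerate quadratic form on $\mathbb{R}^{4n}$ matching the form in Lemma~\ref{l:Q2Tens}, and this extends to $a\in S_{(k)}$, $b\in S^{L^2}_\rho$ as tempered distributions. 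I then Taylor-expand $s\mapsto e^{ishQ(D)}$ at $s=0$ to order $N$, so that
\[ e^{ihQ(D)}=\sum_{j=0}^{N-1}\frac{(ih)^j}{j!}Q(D)^j+\frac{(ih)^N}{(N-1)!}\int_0^1(1-s)^{N-1}e^{ishQ(D)}Q(D)^N\,ds. \]
Applying this to $a(z)b(w)$ and restricting to $z=w=(x,\xi)$, the polynomial-in-$h$ terms reproduce precisely the expansion on the left-hand side of~\eqref{e:L2Comp} (after converting $(ih)^j Q^j$ back to $(h/(2i))^j\sigma^j$), so the task reduces to showing that the corresponding remainder lies in $h^{N(1-\rho)}S^{L^2}_\rho$.

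The key observation is that $Q(D_z,D_w)^N[a(z)b(w)]$ is a finite linear combination of terms $\partial^{\alpha}_z a(z)\,\partial^{\beta}_w b(w)$ with $|\alpha|=|\beta|=N$. Since $N\geq k$ and $a\in S_{(k)}$, every $\partial^{\alpha}a$ belongs to $S(1)$ with bounded seminorms; combined with $b\in S^{L^2}_\rho$, a direct check against the definition \eqref{eq:defT} yields $Q(D)^N(a\otimes b)\in h^{-\rho N}S\otimes S^{L^2}_\rho$, with seminorms controlled by finitely many seminorms of $a$ and $b$. In particular, the polynomial growth of $a$ in $z$ has been neutralized, and what remains is to propagate this symbol by $e^{ishQ(D)}$ and restrict to the diagonal.

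Applying Lemma~\ref{l:Q2Tens} to the propagator $e^{ishQ(D)}$ uniformly in $s\in[0,1]$ is the main obstacle, because the constants in that lemma depend on $|\det B|^{-1/2}$, which degenerates as $s\to 0$. I handle this by splitting the integral: for $s\in[\tfrac12,1]$ the form $sQ$ is uniformly nondegenerate and Lemma~\ref{l:Q2Tens} applies directly; for $s\in[0,\tfrac12]$ I factor $e^{ishQ(D)}=e^{-i(1-s)hQ(D)}\circ e^{ihQ(D)}$ and apply Lemma~\ref{l:Q2Tens} to each factor, each time with a uniformly nondegenerate form. This produces $e^{ishQ(D)}Q(D)^N(a\otimes b)\in h^{-\rho N}S\otimes S^{L^2}_\rho$ uniformly in $s$. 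To conclude, I restrict to the diagonal $z=w$: the precise choice of seminorm in~\eqref{eq:defT}, with $\sup_z$ placed inside the $L^2_w$-norm, was designed so that $f(z,w)\mapsto f(w,w)$ sends $S\otimes S^{L^2}_\rho$ continuously into $S^{L^2}_\rho$ (after applying the Leibniz rule to the pullback $\partial^\alpha[f(w,w)]$ and the remark following~\eqref{eq:defT}). Integrating over $s\in[0,1]$ against the bounded weight $(1-s)^{N-1}$ then yields a remainder in $h^N\cdot h^{-\rho N}S^{L^2}_\rho=h^{N(1-\rho)}S^{L^2}_\rho$, completing the proof of~\eqref{e:L2Comp}.
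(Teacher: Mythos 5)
Your proof follows essentially the same strategy as the paper's: apply the Moyal composition formula, Taylor-expand $e^{ishQ(D)}$ to order $N$, observe that $Q(D)^N(a\otimes b)\in h^{-\rho N}S\otimes S^{L^2}_\rho$ once $N\geq k$ has neutralized the polynomial growth of $a$, propagate the remainder with Lemma~\ref{l:Q2Tens} uniformly in $s\in[0,1]$, and restrict to the diagonal using the seminorm choice in \eqref{eq:defT}. The factoring detour for $s\in[0,\tfrac12]$ is not needed: writing $e^{ishQ(D)}=e^{ih'Q(D)}$ with $h'=sh$, the prefactor in Lemma~\ref{l:Q2Tens} is $|\det B|^{-1/2}(2\pi h')^{-2n}$, so the apparent blow-up of $|\det(sB)|^{-1/2}$ is exactly cancelled by $(2\pi sh)^{-2n}$, and the effective semiclassical parameter $sh^{1-\rho}\leq h^{1-\rho}$ only improves the stationary-phase and non-stationary error bounds as $s\to 0$; the paper correspondingly asserts uniformity in $t\in[0,1]$ without further ado.
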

\begin{proof}
Writing $ z = ( x, \xi ) $, $ w = ( y , \eta ) $, we have
$$
\Op(a)\Op(b)= \Op(c),\qquad c(z):= e^{ihA(D_{z,w})}a(z)b(w)|_{z=w},
$$
where 
$
A ( D_{z,w} ):= - \tfrac12 \sigma ( D_x, D_\xi , D_y, D_\eta ) .
$
By Taylor's formula
\[
 c  ( z , h ) = \sum_{ \ell=0}^{N-1} \frac{1}{\ell!} ( i h A ( D ) )^\ell ( a ( z ) b ( w) )|_{
z = w} + R_N ( z , h ) \]
where 
\[ R_N ( z, h ) : \frac{1}{(N-1)!} \int_0^1 ( 1 - t)^{N-1} e^{ i t h A ( D ) } ( i h A ( D ) )^{N } ( 
( a ( z ) b ( w) )|_{z, w} dt . \]
For  $N\geq k$,
$$
A(D_{z,w})^N a(z)b(w)\in h^{-N\rho}S  \otimes S^{L^2}_{\rho}.
$$
Hence, Lemma~\ref{l:Q2Tens} applies and $ e^{ i h t A ( D ) } :S  \otimes S^{L^2}_{\rho}  \to S  \otimes S^{L^2}_{\rho}$ has uniform bounds in $t\in[0,1]$. 
Now, for $e\in S  \otimes S^{L^2}_{\rho}$, we have
$$
\|\partial_{w}^\alpha e(w,w)\|_{L^2}\leq C \sum_{|\beta|\leq |\alpha|}\|\partial_{(z,w)}^{\beta}e(z,w)|_{w=z}\|_{L^2_w}\leq C \sum_{|\beta|\leq |\alpha|}\|\sup_{z}|\partial_{(z,w)}^{\beta}e(z,\cdot)|\|_{L^2}.
$$
We conclude that 
$ R_N \in h^{(1-\rho)N} S^{L^2}_{\rho}  $ which is \eqref{e:L2Comp}.
\end{proof}

\section{Egorov's theorem revisited}
\label{s:ego} 

We give a variant of Egorov's theorem which is analogous to Theorems \ref{t:1} and \ref{t:2} and 
uses propagation of quantum observables in symbol classes $ S_\rho^{L^2} $ introduced in \S \ref{s:sym}. In fact, the proof
of Theorem \ref{t:2} follows the same strategy with improved estimates coming from diffusion estimates: Lemma \ref{l:G2phi} below (see also \eqref{eq:rhots}) is replaced by Proposition \ref{p:advect}.

We start with a lemma relating the constant $ \Gamma $ in \eqref{eq:Lyap} to the properties of the flow (see \cite[Lemma 11.11]{z12} for a slightly different version)
\begin{lemm}
\label{l:G2phi}  
Let $ \varphi_t := \exp t H_p $ where $ p $ satisfies \eqref{eq:assa}. Then 
\begin{equation} 
\label{eq:estph}
| \partial^\alpha \varphi_t ( x, \xi) |_{\ell^\infty ( \mathbb R^{2n} ) } \leq  C_{\alpha } 
e^{ \Gamma |\alpha| t } ,  \ \ \alpha \in \NN^{2n} , \ \ | \alpha |
> 0 . 
\end{equation}
\end{lemm}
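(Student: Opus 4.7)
The plan is to derive the bound by induction on $|\alpha|$, working with the variational equations obtained by differentiating the ODE $\partial_t \varphi_t(z) = (J\nabla p)(\varphi_t(z))$ (with $J$ the standard symplectic matrix), starting from $\varphi_0(z)=z$.

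For $|\alpha|=1$, I would differentiate once in $z$ to obtain the linear matrix ODE
\begin{equation*}
\partial_t (\partial_z\varphi_t) = (J\nabla^2 p)(\varphi_t)\,\partial_z\varphi_t, \qquad \partial_z\varphi_0 = I.
\end{equation*}
Since $|\partial^\beta p|\le \Gamma$ for $|\beta|=2$ by the definition of $\Gamma$ in \eqref{eq:Lyap}, the coefficient matrix is bounded in $\ell^\infty$ by a dimensional multiple of $\Gamma$, and Gronwall's inequality gives $|\partial_z\varphi_t|\le C_1 e^{\Gamma t}$. (Any dimensional loss in passing from the componentwise Hessian bound to an operator bound can be absorbed by enlarging $C_1$, since only the exponential rate must match $\Gamma$ on the nose.)

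For $|\alpha|\ge 2$, I would differentiate the ODE $|\alpha|$ times and apply the Faà di Bruno formula to $\partial^\alpha[(J\nabla p)(\varphi_t)]$. The resulting identity has the structure
\begin{equation*}
\partial_t\partial^\alpha\varphi_t = (J\nabla^2 p)(\varphi_t)\,\partial^\alpha\varphi_t + R_\alpha(t),
\end{equation*}
where $R_\alpha$ is a finite sum over partitions of $\alpha$ into at least two nonempty multi-indices $\alpha_1,\dots,\alpha_r$ ($r\ge 2$), each term being a product of one factor $(J\nabla^{r+1}p)(\varphi_t)$ with $\partial^{\alpha_1}\varphi_t\cdots\partial^{\alpha_r}\varphi_t$. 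By \eqref{eq:assa}, all derivatives of $p$ of order $\ge 2$ are bounded, so the $\nabla^{r+1}p$ factor contributes $O(1)$. By the inductive hypothesis, each $|\partial^{\alpha_j}\varphi_t|\le C_{\alpha_j}e^{\Gamma|\alpha_j|t}$, and since $|\alpha_1|+\cdots+|\alpha_r|=|\alpha|$, the products are bounded by $C e^{\Gamma|\alpha| t}$. Hence $|R_\alpha(t)|\le C_\alpha e^{\Gamma|\alpha| t}$.

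Since $\partial^\alpha\varphi_0=0$ for $|\alpha|\ge 2$, Duhamel/variation of constants gives
\begin{equation*}
|\partial^\alpha\varphi_t| \le \int_0^t e^{\Gamma(t-s)}|R_\alpha(s)|\,ds \le C_\alpha\int_0^t e^{\Gamma(t-s)}e^{\Gamma|\alpha| s}\,ds = \frac{C_\alpha}{\Gamma(|\alpha|-1)}\bigl(e^{\Gamma|\alpha| t}-e^{\Gamma t}\bigr),
\end{equation*}
which is bounded by $C_\alpha' e^{\Gamma|\alpha| t}$, closing the induction. The only mildly delicate points are the combinatorics of Faà di Bruno (straightforward once one tracks which multi-indices appear) and the reconciliation of the componentwise Hessian bound with an operator-norm Gronwall constant; the latter is harmless because the exponential rate is all that enters the conclusion and the prefactors are absorbed into $C_\alpha$. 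If the rate $\Gamma$ must be sharp in the linear step (not merely a dimensional multiple), one can instead treat the single scalar quantity $\log|\partial_z\varphi_t|$ and use that the symmetric part of the symplectic matrix $J\nabla^2 p$ has operator norm bounded exactly by $\sup_{|\beta|=2}|\partial^\beta p|$, which is the sharpest form of this step.
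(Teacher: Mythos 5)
Your argument is essentially the paper's proof: induction on $|\alpha|$ via the variational ODE for $\partial^\alpha\varphi_t$, whose leading term is linear in $\partial^\alpha\varphi_t$ with coefficient $\partial H_p(\varphi_t)$, plus a chain-rule remainder built from lower-order derivatives whose multi-indices sum to $\alpha$; you close the induction with Duhamel where the paper uses its Gr\"onwall lemma (Lemma~\ref{l:gron}), and the two are interchangeable here. One small inaccuracy in your final aside: the operator norm of the symmetric part of $J\nabla^2 p$ is \emph{not} bounded by $\Gamma$ exactly when $n>1$ — take the Hessian $\Gamma\,\mathbf{1}\mathbf{1}^{\mathrm T}$ with $\mathbf{1}\in\mathbb{R}^{2n}$ the all-ones vector (all second derivatives equal to $\Gamma$), for which the symmetric part of $J\nabla^2 p$ has norm $n\Gamma$ — so this does not rescue sharpness; but the same dimensional loss is implicitly present in the paper's passage from the componentwise bound $|(\partial H_p)_{ij}|\le\Gamma$ to the $\ell^\infty$ integral inequality, so this does not affect the assessment of your main argument, only the claimed justification for sharpness of $\Gamma$ in the exponent.
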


In the proof of Lemma \ref{l:G2phi} we use the following version of Gr\"onwall's inequality:
\begin{lemm}
\label{l:gron}
Let $\Gamma \in \mathbb{R}$ and suppose that $u:\mathbb{R}\to \mathbb{R}$ is continuous and satisfies
\begin{equation}
\label{e:u}
u(t)\leq v(t)+\Gamma\int_0^tu(s)ds.
\end{equation}
Then,
$$
u(t)\leq v(t)+\Gamma \int_0^te^{\Gamma (t-s)}v(s)ds,  \ \ \ t\geq 0.
$$
\end{lemm}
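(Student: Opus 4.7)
The plan is to reduce the integral inequality to a differential inequality via the integrating factor trick. Set $U(t):=\int_0^t u(s)\,ds$, so that $U$ is continuously differentiable with $U'(t)=u(t)$ and $U(0)=0$. The hypothesis \eqref{e:u} becomes
\[
U'(t)-\Gamma U(t)\leq v(t),\qquad t\geq 0.
\]

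Next I multiply both sides by the (strictly positive) integrating factor $e^{-\Gamma t}$; note this step is valid regardless of the sign of $\Gamma$, which is important since the lemma is not stated with a sign assumption. The left-hand side becomes an exact derivative, yielding
\[
\frac{d}{dt}\bigl(e^{-\Gamma t}U(t)\bigr)\leq e^{-\Gamma t}v(t).
\]
Integrating from $0$ to $t$ and using $U(0)=0$ gives
\[
e^{-\Gamma t}U(t)\leq \int_0^t e^{-\Gamma s}v(s)\,ds,
\]
so that $U(t)\leq \int_0^t e^{\Gamma(t-s)}v(s)\,ds$. Substituting back into the original estimate $u(t)\leq v(t)+\Gamma U(t)$ produces the claimed bound.

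There is essentially no obstacle: the only thing to be mindful of is that the inequality $U'-\Gamma U\leq v$ can be converted to an equality by adding a nonnegative correction, which does not affect monotonicity after multiplication by the positive factor $e^{-\Gamma t}$. No sign assumption on $\Gamma$ or $v$ is needed for the derivation, which matches the statement of the lemma.
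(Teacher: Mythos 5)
Your proof is essentially identical to the paper's: the paper also sets $w(t)=\int_0^t u(s)\,ds$, converts the integral inequality to the differential inequality $w'\leq v+\Gamma w$, conjugates by $e^{-\Gamma t}$, integrates, and substitutes back into \eqref{e:u}. So the approach matches.

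One caveat, though: your closing remark that ``No sign assumption on $\Gamma$\dots is needed'' is not quite right, and the issue is precisely in the last step you describe. The integrating-factor argument indeed yields $U(t)\leq\int_0^t e^{\Gamma(t-s)}v(s)\,ds$ for any real $\Gamma$, but when you then substitute into $u(t)\leq v(t)+\Gamma U(t)$ you multiply that inequality by $\Gamma$, which preserves its direction only if $\Gamma\geq 0$. For $\Gamma<0$ the conclusion can in fact fail (e.g.\ take $v\equiv 0$, $\Gamma=-1$, and a continuous $u$ with $u(0)<0$ that rises to a positive value while $\int_0^t u$ stays sufficiently negative). The paper's statement writes $\Gamma\in\mathbb{R}$ and is silent on this, so it carries the same implicit restriction; it is harmless there because the $\Gamma$ of \eqref{eq:Lyap} is a supremum of absolute values and hence nonnegative. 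But you should not assert the sign is irrelevant when it is in fact used.
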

\begin{proof}
Define
$
w(t):=\int_0^t u(s)ds.
$
Then, $w$ is continuously differentiable and satisfies
$$
w'(t)\leq v(t)+\Gamma w(t), \ \ \ w(0) = 0 .
$$
Hence, conjugating by $e^{-\Gamma t}$ and integrating gives 
$$
w(t)\leq \int_0^te^{\Gamma (t-s)}v(s)ds,
$$
which, after substitution in~\eqref{e:u}, finishes the proof.
\end{proof}

\begin{proof}[Proof of Lemma \ref{l:G2phi}]
The proof of \eqref{eq:estph} is an  induction on  $ | \alpha | $. The first step is the case of $ |\alpha | =1$. Since 
$ (d /{ dt } ) \varphi_t  = H_{p} ( \varphi_t  )$, 
\begin{equation}
\label{eq:first} \frac d { dt }  \left( \partial^\alpha \varphi_t \right)  
=  \partial H_{p} ( \varphi_t  )  \partial^\alpha \varphi_t  ,  \ \ \  \partial^\alpha \varphi ( 0 ) =  \alpha . \end{equation}
Since the entries of the matrix  $ \partial H_{p} $ are bounded by $ \Gamma $, integration gives
\[ \sup_{ \mathbb R^{2n} } |\partial^\alpha \varphi_t |_{\ell^\infty} \leq 1 + \Gamma \int_0^t \sup_{ \mathbb R^{2n} } |\partial^\alpha \varphi_s |_{\ell^\infty} ds . \]
Lemma~\ref{l:gron} then gives \eqref{eq:estph} for $ |\alpha | = 1 $.

Now assume $|\alpha| = \ell $ and suppose the estimate \eqref{eq:estph} is
valid for all multiindices $ \beta $ with $ 1 \leq |\beta| < \ell$.  We differentiate \eqref{eq:first}, to find
\begin{equation}
\label{eq:faada}
\frac d { dt }  \left( \partial^\alpha \varphi_t \right)  
=  \partial H_{p} ( \varphi_t  )  \partial^\alpha \varphi_t  
+ g(t) , 
\end{equation}
where $ g(t)  $ is a sum of terms having the form 
\[ 
g_{ \alpha \beta } \circ \varphi_t  \,  \partial^{\beta_1} \varphi_t \cdots  \partial^{\beta_k} \varphi_t, \ \ \  g_{\alpha \beta } \in S ( 1 ) , 
  \]   
for $\beta_1 + \cdots \beta_k = \alpha$ and $0 < | \beta_j | < |\alpha| = \ell$
 $(j = 1, \dots, k)$.
The induction hypothesis implies
$  \sup_{ \mathbb R^{2n} }| g ( t )   |_{\ell^\infty}  \leq C e^{ \Gamma | \alpha | | t | } $. Integrating as above, we obtain
\[ \sup_{ \mathbb R^{2n} } |\partial^\alpha \varphi_t |_{\ell^\infty} \leq Ce^{\Gamma |\alpha|t} + \Gamma \int_0^t \sup_{ \mathbb R^{2n} } |\partial^\alpha \varphi_s |_{\ell^\infty} ds . \]
 and we can use Lemma~\ref{l:gron} 
to obtain \eqref{eq:estph}. 
\end{proof}

\begin{theo}
\label{t:eg}
Suppose that $ \mathcal L_0 $ is given by \eqref{eq:Schr} with $ P $ satisfying \eqref{eq:assa} and $0\leq \rho<\frac{2}{3}$. 
If   $A(t)$ satisfies (in the notation of \S \ref{s:sym})
\begin{equation*}
\partial_t A ( t) = \mathcal L_0 A ( t ) , \  \  \ A(0) = \Op (a_0 ) , \ \  a_0 \in S^{L^2}_{\rho} ,  
\end{equation*}
Then, for every $ N $ there exist $C_N>0$ and $ a ( t ) \in S_{\rho(t)}^{L^2} $ such that 
for $ \Gamma $ given by \eqref{eq:Lyap} and 
\begin{equation}
\label{eq:rhot}   \rho(t) := \rho +  \frac{ \Gamma t}{ | \log h | } \leq \frac{2}{3}, \end{equation}
$  a(t)- (\exp t H_p)^* a_0  \in h^{2-3 \rho}  e^{3 \Gamma t }  S_{\rho(t)}^{L^2} $ and 
\begin{equation}
\label{eq:tego}
\begin{gathered}
\|A(t)-\Op(a(t))\|_{\operatorname{\mathscr{L}_2}}\leq C_N  e^{3 N\Gamma t }  h^{N(2- 3 \rho)} . 
\end{gathered}
\end{equation}
\end{theo}
\begin{proof}
We define 
\[ U_0 ( t ) b := (\exp t H_p)^* b , \ \ \  \partial_t U_0 ( t ) = H_p U_0 ( t ) , 
\ \ \ U_0 (0) = I ,\]
and note that using the definition \eqref{eq:rhot} and Lemma \ref{l:G2phi} we have 
\begin{equation}
\label{eq:rhots} U ( t -s ) : S_{ \rho(s) }^{L^2} \to S_{ \rho(t ) }^{L^2} ,
\end{equation}
To construct $ a ( t ) $ we start with  $a_0(t):= U_0(t)a_0$ so that
$a_0(t)\in   S_{\rho(t)}^{L^2} $.  Set $A_0(t):=\Op(a_0(t))$. Then, using Lemma~\ref{l:compose} we obtain
$$
\begin{aligned}
\dot A_0(t)&= \Op(\dot a_0(t))
= \Op( H_p a_0(t))
= \mathcal{L}_0 A_{{0}}(t)+\Op(e_0(t)), \ \ \ e_0(  t ) \in h^{(2-3 \rho) } e^{ 3  \Gamma t } S_{\rho(t)}^{L^2} . 
\end{aligned} $$

Suppose now that we found 
\[ a_j(t)\in h^{(2-3 \rho)j } e^{ 3 j \Gamma t } S_{\rho(t)}^{L^2} \ \ \ \ j=0,\dots, N-1 \]
 such that, with $A_{N-1}:=\sum_{j=0}^{N-1}\Op(a_j(t))$, we have
\begin{equation*}
\dot A_{N-1}= \mathcal{L}_0 A_{N-1}(t)+\Op(e_N(t)), \ \ \ e_N(t)\in h^{(2-3 \rho)N } e^{ 3 N \Gamma t } S_{\rho(t)}^{L^2} .
\end{equation*}
Using $ e_N $ we define
$$
a_N(t):= -\int_0^t U_0 (t-s)e_N(s)ds, \ \ \ \partial_t a_N = H_p a_N - e_N ,  \ \ a_N (0 ) = 0. 
$$
Then, using \eqref{eq:rhots}, 
$$
a_N(t)\in h^{(2-3 \rho)N } e^{ 3 N \Gamma t } S_{\rho(t)}^{L^2}, $$
and hence, with $A_N(t)=A_{N-1}(t)+\Op(a_N(t))$, we have
$$
\begin{aligned}
\dot A_N(t)&=\mathcal{L}_0 A_{N-1}(t)+\Op(e_N(t))+\Op(\dot a_N(t))\\
&=\mathcal{L}_0A_{N-1}(t)+\Op(H_p a_N(t))\\
&= \mathcal{L}_0 A_N(t)+\Op(e_{N+1}(t)), \ \ \ e_{N+1} ( t ) \in h^{(2-3 \rho)(N +1)} e^{ 3 (N+1) \Gamma t } S_{\rho(t)}^{L^2} .
\end{aligned}
$$  Note that in the last line we used Lemma~\ref{l:compose} to obtain the estimates on $e_{N+1}$. This gives $ a = \sum_{j \leq N} a_j $.

To compare $ A_N ( t) := \Op ( a ( t )) $ to $ A ( t ) $, we use the fact that $ e^{ t \mathcal L_0 } $ preserves the Hilbert--Schmidt norm
(see \eqref{eq:Schr}): 
\begin{align*}
 \big\| A(t)-A_N(t)\big\|_{\operatorname{\mathscr{L}_2}} & \leq \int_0^t \Big\|e^{(t-s)\mathcal{L}_0}\Op(e_{N+1}(s))\Big\|_{\operatorname{\mathscr{L}_2}}ds
 \leq h^{(2-3 \rho)(N +1)} e^{ 3 (N+1) \Gamma t }  .
\end{align*}
This completes the proof
\end{proof}

\section{The semigroup generated by the Lindbladian.}
\label{s:semiL}

We prove here that the Lindblad evolution is well defined in the space of Hilbert--Schmidt operators. This is done under the assumption \eqref{eq:assa} alone. 

To describe the action of $ \mathcal L $ on operators $ \mathscr S \to \mathscr S' $, we identify such operators  with their Schwartz kernels in  $ 
\mathbb R^{n} \times \mathbb R^n $  and consider 
\begin{equation} 
\label{eq:actS} 
 \mathcal L_1 : \mathscr S' ( \mathbb R^{n} \times \mathbb R^n ) \to \mathscr S' 
 ( \mathbb R^{n} \times \mathbb R^n ), \ \ \ 
 \mathcal L_0  : \mathscr S ( \mathbb R^{n} \times \mathbb R^n ) \to \mathscr S
 ( \mathbb R^{n} \times \mathbb R^n ). 
 \end{equation}
More precisely,  for $K \in\mathscr{S}'(\mathbb{R}^n\times \mathbb{R}^n)$, and 
$ \chi \in \mathscr S ( \mathbb R^n  \times \mathbb R^n ) $ we denote
by $ K ( \chi ) $ the distributional pairing, formally equal  to $ \int K ( x, y) \chi ( x,y ) dx dy $.
Then for $ A , B : \mathscr S ( \mathbb R^n ) \to \mathscr S ( \mathbb R^n ) $ we define, 
$ ( A \otimes B ) K \in \mathscr{S}'(\mathbb{R}^n\times \mathbb{R}^n) $ by
\[ (  A \otimes B ) K ( \varphi \otimes \psi ) := K ( A^t \varphi \otimes B \psi ), \ \ 
\varphi, \psi \in \mathscr S ( \mathbb R^n ) , \ \ (\varphi \otimes \psi) ( x, y ) := \varphi ( x ) \psi ( y) ,  \]
where $ A^t $ is the transpose of $ A $:   for $ f, g \in \mathscr S ( \mathbb R^n )$, 
$ (A f) ( g ) = f ( A^t g ) $ (this also defines the action of $ A $ on $ \mathscr S' $). We note that if we identify the Schwartz kernels with operators then $ ( A \times B ) K = A K B $.

In this notation 
\begin{equation}
\label{eq:defL1}   \begin{split}
 \mathcal{L}_1&:= \frac{i}{h}\big(\Op(p)\otimes I-I\otimes \Op(p)\big) 
+ \frac{\gamma}{h}\sum_j \big((\Op(\ell_j)\otimes I)(I\otimes \Op(\bar{\ell}_j) )\big) \\
& \ \ \ \ \ \ \ \ -\tfrac{1}{2}\sum_j  \big(\Op(\bar{\ell}_j)\Op(\ell_j)\otimes I +I\otimes \Op(\bar{\ell}_j)\Op(\ell_j) \big) ,
 \end{split} \end{equation}
and $\mathcal{L}_0:=\mathcal{L}_1|_{\mathscr{S} ( \mathbb R^{2n} ) }.$
 
 The following lemma describes $ \mathcal L_1 $ in a way that allows an application of
 Proposition \ref{p:spec}, which in turn provides the definition of $ \mathcal L $ as an unbounded operator on $ \mathscr L_2 $. 
\begin{lemm}
 \label{l:applyingSpec}
 The operator $ \mathcal L_1 : \mathscr S' ( \mathbb R^{2n} ) \to \mathscr S' ( \mathbb R^{2n} ) $
 defined by \eqref{eq:defL1} is given by $ \mathcal L_1 = \Op ( L ) $, where
where $L =L (x,\xi,y,\eta)\in C^\infty(\mathbb{R}^{4n})$ satisfies
 \begin{equation}
 \label{eq:estL}
 |\partial^\alpha L |\leq C_\alpha(1+|x|+|\xi|+|y|+|\eta|),\qquad |\alpha|\geq 1.
 \end{equation}
 Moreover,  identifying the Hilbert--Schmidt class $ \mathscr L_2 ( L^2 ( \mathbb R^n ) ) $ with
 $ L^2 ( \mathbb R^{n} \times \mathbb R^n ) $ using Schwartz kernels, the Lindbladian 
 $ \mathcal L $ with the domain
\begin{equation}
\label{eq:domain} 
   \mathcal D ( \mathcal L ) := \{ A \in \mathscr L_2 ( L^2 ( \mathbb R^n ) ) :
   \mathcal L_1 A \in \mathscr L_2 ( L^2 ( \mathbb R^n ) ) \} ,
 \end{equation}
  satisfies
\[  \mathcal L = \overline{ \mathcal L_0}  , \ \ \ \mathcal L^* = \overline {\mathcal L_0^* } , \]
where $ \mathcal L_0^* : \mathscr S ( \mathbb R^{n} \times \mathbb R^n ) \to \mathscr S
 ( \mathbb R^{n} \times \mathbb R^n )  $
is the formal adjoint of $ \mathcal L_0 $. 
\end{lemm}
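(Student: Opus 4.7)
The plan is to proceed in two stages: first, explicitly identify $\mathcal{L}_1$ with the Weyl quantization on $\mathbb{R}^{4n}$ of a specific symbol $L$ and verify the estimate \eqref{eq:estL}; second, apply Proposition \ref{p:spec} to characterize the closure.

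For the first stage, I would use the principle noted just before the lemma that $(A \otimes B)K = AKB$ when Schwartz kernels are identified with operators. Thus $\Op(a) \otimes I$ acts on kernels $K(x,y) \in L^2(\mathbb{R}^{2n})$ as the Weyl quantization (viewed on $\mathbb{R}^{4n}$) with symbol $a(x,\xi)$, independent of $(y,\eta)$, while $I \otimes \Op(b)$ corresponds to right-multiplication by $\Op(b)$, which on the kernel in the $y$-variable is the transpose $\Op(b)^t$. The standard identity $\Op(b)^t = \Op(b^t)$ with $b^t(y,\eta) := b(y,-\eta)$ then shows that $I \otimes \Op(b)$ has Weyl symbol $b(y,-\eta)$. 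Since a symbol depending on $(x,\xi)$ alone and one depending on $(y,-\eta)$ alone have disjoint variable dependencies, their Moyal product collapses to the ordinary product, so a composition like $(\Op(\ell_j) \otimes I)(I \otimes \Op(\bar{\ell}_j))$ has exact Weyl symbol $\ell_j(x,\xi)\bar\ell_j(y,-\eta)$. The remaining pieces $\Op(\bar\ell_j)\Op(\ell_j)$ are handled by a single-sided Moyal expansion, and since $\ell_j \in S_{(1)}$ the result stays in the same class.

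Assembling all contributions produces an explicit formula for $L$ whose structure makes \eqref{eq:estL} immediate: each summand is, up to a factor of $h^{-1}$ or $\gamma h^{-1}$, a product in which at most one factor has linear growth while the remaining factors have bounded derivatives, so that any $\partial^\alpha$ with $|\alpha|\geq 1$ yields a quantity of at most linear growth in the undifferentiated variables.

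For the second stage, I would invoke Proposition \ref{p:spec} from the appendix, which treats Weyl quantizations on $\mathbb{R}^{4n}$ of symbols satisfying bounds of the form \eqref{eq:estL}. That proposition identifies the closure of such a quantization initially defined on $\mathscr{S}$ as the unbounded operator on $L^2$ with domain $\{u : \Op(L) u \in L^2\}$, and identifies the adjoint of the closure as the closure of the formal adjoint $\Op(\bar L)$. Transferring via the kernel identification $\mathscr{L}_2(L^2(\mathbb{R}^n)) \cong L^2(\mathbb{R}^{2n})$ yields the claimed equalities $\mathcal{L} = \overline{\mathcal{L}_0}$ and $\mathcal{L}^* = \overline{\mathcal{L}_0^*}$. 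The main obstacle is the bookkeeping in the first stage—tracking the $\eta$ sign flip from the transpose and verifying that the Moyal series $\bar\ell_j \# \ell_j$ stays within the symbol class allowed by \eqref{eq:estL}—rather than any deep analysis, since Proposition \ref{p:spec} carries the functional-analytic content for stage two.
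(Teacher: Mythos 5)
Your proposal is correct and follows essentially the same two-step strategy as the paper: express $\mathcal{L}_1 = \Op_{\mathbb{R}^{2n}}(L)$ and verify the growth bound \eqref{eq:estL} using the tensor-product/transpose structure together with the composition calculus (Proposition \ref{p:comp}), then transfer Proposition \ref{p:spec} through the kernel identification $\mathscr{L}_2(L^2(\mathbb{R}^n)) \cong L^2(\mathbb{R}^{2n})$. Your explicit observation that the Weyl symbol of $\Op(b)^t$ is $b(y,-\eta)$ is in fact the cleaner and literally correct expression: the paper instead writes $\tilde a = e^{i\langle hD_y,D_\eta\rangle}a$ and controls it via \cite[Theorem 4.17]{z12}, a change-of-quantization formula which does not equal $a(y,-\eta)$ pointwise (already for $a=\eta$), though both versions satisfy the needed growth estimate so the lemma is unaffected.
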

\begin{proof}
Using coordinates $((x,\xi),(y,\eta))\in \mathbb{R}^{2n}\times \mathbb{R}^{2n}$ and 
denoting  $\Op_{\mathbb{R}^{2n}}$ 
the Weyl quantization on $ \mathbb R^{2n} $, 
the definitions above show that 
$$
\Op(a)\otimes I = \Op_{\mathbb{R}^{2n}}(a(x,\xi)),\qquad I \otimes \Op(a)= \Op_{\mathbb{R}^{2n}}(\tilde a(y,\eta)),  \ \ \ \tilde a  := e^{ i \langle h D_y, D_\eta \rangle } a  .
$$
(See \cite[Theorem 4.13]{z12}: if $ \Op (a) = \Op_1 (a_1 ) $, then $ \Op ( \tilde a ) =  \Op ( a )^t = \Op_0 ( 
a_1 ) $.) Consequently $a\in C^\infty(\mathbb{R}^{2n})$ satisfies, 
$
|\partial_{(x,\xi)}^\alpha a(x,\xi)|\leq C_{\alpha}(1+|x|+|\xi|)$ for $ |\alpha|\geq 1
$,
and,  by \cite[Theorem 4.17]{z12}, so does $ \tilde a $. 

Since $\ell_j\in S_{(1)}$, by Proposition~\ref{p:comp} we have 
$
\Op(\bar{\ell}_j)\Op(\ell_j)= \Op(c_j)
$,
for $c_j$ satisfying
$
|\partial_{(x,\xi)}^\alpha c_j(x,\xi)|\leq C_{\alpha}(1+|x|+|\xi|) $ for $  |\alpha|\geq 1
$.
Together with the facts that $\ell_j\in S_{(1)}$ and $p\in S_{(2)}$, this implies
$
\mathcal{L}_1=\Op_{ \mathbb R^{2n} } (L)$, 
where $ L $ satisfies \eqref{eq:estL}. Thus we can apply Proposition~\ref{p:spec} and the lemma follows.
\end{proof}
The next lemma describe the adjoint of $\mathcal{L}$:
\begin{lemm}
The adjoint of the Lindblad operator $\mathcal{L}$, $\mathcal{L}^*,$ is given by
\begin{equation}
\label{eq:Liad}
\mathcal{L}^*B=-\frac{i}{h}[P,B]+\frac{\gamma}{h}\sum_j L_j^*BL_j-\frac{1}{2}(L_j^*L_jB+BL_j^*L_j),
\end{equation}
with domain 
$$
\mathcal{D}(\mathcal{L}^*)=\{ A\in \mathscr{L}_2\,:\, \mathcal{L}^*A\in \mathscr{L}_2\},
$$
where for any $ A \in \mathscr L_2 $, 
$ \mathcal L^* A  $ is defined as an operator  $ \mathscr S  \to  \mathscr S'   $. 
\end{lemm}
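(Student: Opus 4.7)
The plan is to use the identification $\mathcal{L}^*=\overline{\mathcal{L}_0^*}$ already recorded in Lemma \ref{l:applyingSpec}: this reduces the statement to (a) computing the formal adjoint $\mathcal{L}_0^*$ on Schwartz kernels and checking it equals the right-hand side of \eqref{eq:Liad}, and (b) characterizing the domain of its closure on $\mathscr{L}_2$. Step (b) will be obtained by applying Lemma \ref{l:applyingSpec} a second time, now to the operator $\mathcal{L}_0^*$ itself, which will be possible because the expression in \eqref{eq:Liad} is again of Lindblad type.

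For step (a), I would fix operators $A,B\in\mathscr{L}_2$ whose Schwartz kernels lie in $\mathscr{S}(\mathbb{R}^n\times\mathbb{R}^n)$. Under \eqref{eq:assa}, both $\Op(p)$ and $\Op(\ell_j)$ preserve $\mathscr{S}(\mathbb{R}^n)$ (cf.\ Proposition \ref{p:comp} for the products), so every intermediate factor that appears in $(\mathcal{L}_0 A)B^*$ has a Schwartz kernel and is in particular trace class. The Hilbert--Schmidt pairing $\langle\mathcal{L}_0 A,B\rangle_{\mathscr{L}_2}=\tr((\mathcal{L}_0 A)B^*)$ may then be manipulated term by term by cyclicity of the trace: using $P^*=P$ the commutator yields $\tfrac{i}{h}\tr([P,A]B^*)=-\tfrac{i}{h}\tr(A[P,B^*])$, the jump term yields $\tfrac{\gamma}{h}\tr(L_jAL_j^*B^*)=\tfrac{\gamma}{h}\tr(A\,L_j^*B^*L_j)$, and the two anticommutator contributions reshuffle into $-\tfrac{\gamma}{2h}\tr(A(B^*L_j^*L_j+L_j^*L_jB^*))$. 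Collecting terms and taking the Hilbert--Schmidt adjoint inside the trace produces precisely the right-hand side of \eqref{eq:Liad}, so $\mathcal{L}_0^*$, restricted to Schwartz kernels, coincides with that expression.

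For step (b), I would observe that \eqref{eq:Liad} is itself the Lindbladian built from the Hamiltonian $-P$ and jump operators $L_j^*$, which obviously still satisfy the hypotheses \eqref{eq:assa}. Consequently, the tensorial definition used in \eqref{eq:defL1} extends $\mathcal{L}_0^*$ to an operator $\mathcal{L}_1^*:\mathscr{S}'(\mathbb{R}^n\times\mathbb{R}^n)\to\mathscr{S}'(\mathbb{R}^n\times\mathbb{R}^n)$, and the argument of Lemma \ref{l:applyingSpec} (using Proposition \ref{p:comp} to handle the compositions $L_jL_j^*$ and \cite[Theorems 4.13, 4.17]{z12} for the $I\otimes\Op(\,\cdot\,)$ factors) shows $\mathcal{L}_1^*=\Op_{\mathbb{R}^{2n}}(L^*)$ for a symbol $L^*\in C^\infty(\mathbb{R}^{4n})$ obeying the bound \eqref{eq:estL}. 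Proposition \ref{p:spec} then identifies the closure $\overline{\mathcal{L}_0^*}$ on $\mathscr{L}_2$ with the restriction of $\mathcal{L}_1^*$ to $\{A\in\mathscr{L}_2:\mathcal{L}_1^*A\in\mathscr{L}_2\}$. Combining with $\mathcal{L}^*=\overline{\mathcal{L}_0^*}$ gives the stated formula on the stated domain.

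The only point that needs a little care is the justification of cyclicity of the trace in step (a); this is unproblematic because the Schwartz-mapping properties of $\Op(p)$ and $\Op(\ell_j)$ keep every intermediate operator trace class. The key structural observation driving the argument is simply that \eqref{eq:Liad} has exactly the same algebraic form as the original Lindbladian, which is what allows the verbatim reuse of Lemma \ref{l:applyingSpec}.
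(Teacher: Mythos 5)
Your proof is correct and follows essentially the same route as the paper's: reduce to the formal adjoint on operators with Schwartz kernels via Proposition~\ref{p:spec} (through Lemma~\ref{l:applyingSpec}), then compute the formal adjoint term by term using cyclicity of the trace. The paper does exactly this, invoking Proposition~\ref{p:spec} in one line to handle the domain; your step (b), which re-applies the argument of Lemma~\ref{l:applyingSpec} after observing that the right-hand side of~\eqref{eq:Liad} is again a Lindbladian (with Hamiltonian $-P$ and jump operators $L_j^*$), is a slightly more verbose but entirely valid way of reaching the same domain characterization, since Proposition~\ref{p:spec} already records $M_p^* = M_{\bar p}$ directly.
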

\begin{proof}
By Proposition~\ref{p:spec} it is sufficient to compute the formal adjoint in the action on operators $ \mathscr S' \to  \mathscr S $.
Observe that, using cyclicity of the trace, for $ A, B : \mathscr S' \to  \mathscr S $, 
\begin{align*}
\Big\langle \tfrac{i}{h}[P,A],B\Big\rangle_{\mathscr{L}_2}&=\tr \Big(\tfrac{i}{h}[P,A]B^*\Big) =\tfrac{i}{h}\tr \Big((PA-AP)B^*\Big)
=\tfrac{i}{h}\tr \Big(A[P,B]^*\Big)\\
&=\tr \Big(A\big(-\tfrac{i}{h}[P,B]\big)^*\Big)=\langle A, -\tfrac{i}{h}[P,B]\rangle_{\mathscr{L}_2}, \\
\langle L_jAL_j^*,B\rangle_{\mathscr{L}_2}&=\tr \Big(L_jAL_j^*B^*\Big)=\tr \Big(AL_j^*B^*L_j\Big)
=\langle A, L_j^*BL_j\rangle_{\mathscr{L}_2},\\
\langle L_j^*L_jA,B\rangle_{\mathscr{L}_2}&=\tr \Big(L_j^*L_jAB^*\Big)=\tr \Big(AB^*L^*_jL_j\Big)
=\langle A, L_j^*L_jB\rangle_{\mathscr{L}_2},
\end{align*}
and similarly for $ \langle AL_j^*L_j,B\rangle_{\mathscr{L}_2}$.
\end{proof}

We next record some properties of $\mathcal{L}$ and its adjoint.
\begin{lemm}
\label{l:formalComputation}
For $A : \mathscr S' \to \mathscr S$, 
\begin{equation}
\label{e:real1}
2\Re \langle \mathcal{L}A,A\rangle _{\mathscr{L}_2}=-\frac{\gamma}{h}\sum_j \|[L_j,A]\|_{\mathscr{L}_2}^2+\frac{\gamma}{h}\langle \sum_j [L_j,L_j^*]A^*,A^*\rangle_{\mathscr{L}_2},
\end{equation}
and 
\begin{equation}
\label{e:real2}
2\Re \langle \mathcal{L}^*A,A\rangle_{\mathscr{L}_2} =-\frac{\gamma}{h}\sum_j \|[L^*_j,A]\|_{\mathscr{L}_2}^2+\frac{\gamma}{h}\langle \sum_j [L_j,L_j^*]A,A\rangle_{\mathscr{L}_2}.
\end{equation}
\end{lemm}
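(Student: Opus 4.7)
The plan is to pass to the trace formulation, observe that the Hamiltonian part $\frac{i}{h}[P,\cdot]$ contributes nothing to $2\Re\langle\cdot,\cdot\rangle$, and then identify the remaining real combination of ``jump'' terms with the commutator squared norm plus a correction involving $[L_j,L_j^*]$.

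First I would dispose of the commutator with $P$. Using $\langle X,Y\rangle_{\mathscr L_2}=\tr(XY^*)$ and cyclicity,
\[
\langle \tfrac{i}{h}[P,A],A\rangle_{\mathscr L_2}=\tfrac{i}{h}\tr([P,A]A^*)=\tfrac{i}{h}\tr(P[A,A^*]).
\]
Since $P=P^*$ and $[A,A^*]^*=-[A,A^*]$, the number $\tr(P[A,A^*])$ is purely imaginary, so this term cancels when we add its complex conjugate. The same computation with a sign change handles the $-\frac{i}{h}[P,A]$ term in $\mathcal{L}^*$.

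Next I would compute directly for each $j$. Expanding
\[
\|[L_j,A]\|_{\mathscr L_2}^2=\tr\bigl((L_jA-AL_j)(A^*L_j^*-L_j^*A^*)\bigr),
\]
cyclicity gives
\[
\|[L_j,A]\|_{\mathscr L_2}^2 = \tr(L_j^*L_jAA^*)+\tr(L_jL_j^*A^*A)-2\Re\tr(L_jAL_j^*A^*),
\]
where I use $\overline{\tr(L_jAL_j^*A^*)}=\tr(L_j A^*L_j^* A)$ to collect the real part. The three traces appearing on the right are exactly, up to signs, the jump pieces of $2\Re\langle\mathcal L A,A\rangle_{\mathscr L_2}$, since
\[
2\Re\langle L_jAL_j^*,A\rangle_{\mathscr L_2}=2\Re\tr(L_jAL_j^*A^*),\quad 2\Re\langle L_j^*L_jA,A\rangle_{\mathscr L_2}=\tr(L_j^*L_jAA^*)\ \text{(real)},
\]
and similarly $2\Re\langle AL_j^*L_j,A\rangle_{\mathscr L_2}=\tr(L_j^*L_jA^*A)$. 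Solving the displayed identity for $2\Re\tr(L_jAL_j^*A^*)$ and substituting, the $\tr(L_j^*L_jAA^*)$ terms cancel and the remainder collapses to
\[
\frac{\gamma}{h}\sum_j\bigl(\tr([L_j,L_j^*]A^*A)-\|[L_j,A]\|_{\mathscr L_2}^2\bigr),
\]
which is \eqref{e:real1} after rewriting $\tr([L_j,L_j^*]A^*A)=\langle[L_j,L_j^*]A^*,A^*\rangle_{\mathscr L_2}$.

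The second identity \eqref{e:real2} is produced by the same recipe applied to \eqref{eq:Liad}, with $L_j$ and $L_j^*$ interchanged in the quadratic term but with the ``sandwich'' $L_j^*L_j$ unchanged. The corresponding expansion of $\|[L_j^*,A]\|_{\mathscr L_2}^2$ reads
\[
\|[L_j^*,A]\|_{\mathscr L_2}^2 = \tr(L_jL_j^*AA^*)+\tr(L_j^*L_jA^*A)-2\Re\tr(L_j^*AL_jA^*),
\]
and the same cancellation now leaves $\tr([L_j,L_j^*]AA^*)=\langle[L_j,L_j^*]A,A\rangle_{\mathscr L_2}$, giving \eqref{e:real2}. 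Since $A:\mathscr S'\to\mathscr S$, all of the operators $L_j A$, $AL_j$, $L_j^*L_j A$, etc. are of trace class with $\mathscr S'\to\mathscr S$ kernels (using that $L_j$, $L_j^*$, $P$ act continuously on both $\mathscr S$ and $\mathscr S'$ under \eqref{eq:assa} via Proposition \ref{p:spec} and Proposition \ref{p:comp}), so every trace and cyclic permutation above is justified; this is the only slightly delicate point and is the main obstacle, though it is handled by the standing smoothing assumption on $A$.
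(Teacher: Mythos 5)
Your overall approach is essentially the same as the paper's: pass to traces, use cyclicity, and recognize that the jump contribution can be reorganized around $\|[L_j,A]\|_{\mathscr L_2}^2$. The algebra for the jump part and the conclusion for $\mathcal L^*$ both check out, and the closing remark about trace class and cyclicity justification is appropriate.

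However, there is a sign error in the Hamiltonian step that makes your stated reasoning internally inconsistent, even though your conclusion is correct. You write $[A,A^*]^*=-[A,A^*]$; in fact
\[
[A,A^*]^* = (AA^*-A^*A)^* = AA^*-A^*A = [A,A^*],
\]
so $[A,A^*]$ is \emph{self-adjoint}, not anti-self-adjoint. Consequently $\tr(P[A,A^*])$ is \emph{real} (trace of a product of two self-adjoint operators), and it is $\tfrac{i}{h}\tr(P[A,A^*])$ that is purely imaginary, which is exactly what is needed for its real part to vanish. As you wrote it, if $\tr(P[A,A^*])$ were imaginary then $\tfrac{i}{h}\tr(P[A,A^*])$ would be real and would \emph{not} cancel, so the logic would fail; flip the sign and the argument is fine. (A slightly more robust route, and the one the paper uses, is to observe directly that $\tr\bigl([P,A]A^*+A[P,A^*]\bigr)=\tr\bigl(PAA^*-AA^*P\bigr)=0$ by cyclicity.) There is also a minor factor-of-two slip in the display $2\Re\langle L_j^*L_jA,A\rangle_{\mathscr L_2}=\tr(L_j^*L_jAA^*)$, which should read $2\tr(L_j^*L_jAA^*)$; this does not propagate into your final identity once the $\tfrac12$ coefficients from $\mathcal L$ are accounted for, but the intermediate line as written is off.
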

\begin{proof}
First, observe that \eqref{e:Lindblad} and \eqref{eq:Liad} show 
$$
(\mathcal{L}A)^*=\mathcal{L}A^*,\qquad (\mathcal{L}^*A)^*=\mathcal{L}^*A^*.
$$
Thus, we compute
\begin{align*}
2\Re \langle \mathcal{L}A,A\rangle_{\mathscr{L}_2} & = \tr ((\mathcal{L}A )A^*+A (\mathcal{L}A^*))\\
&=\tr\Big( \frac{i}{h}[P,A]A^* +A\frac{i}{h}[P,A^*] +\frac{\gamma}{h}\sum_j \Big(L_jAL_j^*A^* +AL_jA^*L_j^*\Big)\\
&\qquad -\frac{\gamma}{2h}\sum_j(L_j^*L_jAA^*+AL_j^*L_jA^*+AL_j^*L_jA^*+AA^*L_j^*L_j)\Big)\\
&=\tr\Big( \frac{i}{h}[P,A]A^* +A\frac{i}{h}[P,A^*] \Big)\\
&\qquad+\frac{\gamma}{h}\sum_j \tr\Big(L_jAL_j^*A^* +AL_jA^*L_j^*-L_j^*L_jAA^*-AL_j^*L_jA^*\Big).
\end{align*}
Now,
\begin{align*}
\tr\Big( [P,A]A^* +A[P,A^*] \Big)=\tr\Big( PAA^* -AA^*P \Big)=0,
\end{align*} 
and
\begin{align*}
&\tr\Big(L_jAL_j^*A^* +AL_jA^*L_j^*-L_j^*L_jAA^*-AL_j^*L_jA^*\Big)\\
&=\tr\Big(-[L_j,A]([L_j,A])^* +L_jAA^*L_j^*+AL_jL_j^*A^*-L_j^*L_jAA^*-AL_j^*L_jA^*)\\
&=\tr\Big(-[L_j,A]([L_j,A])^* +[L_j,L_j^*]A^*A).
\end{align*}
Hence,~\eqref{e:real1} follows.

The computation for~\eqref{e:real2} is similar. Since the commutator part of $\mathcal{L}^*$ has the same form as that of $\mathcal{L}$, we only need to compute
\begin{align*}
&\tr\Big(L_j^*AL_jA^* +AL_j^*A^*L_j-L_j^*L_jAA^*-AL_j^*L_jA^*\Big)\\
&=\tr \Big( -[L_j^*,A]([L_j^*,A])^* +L_j^*AA^*L_j-L_j^*L_jAA^*\Big)\\
&=\tr \Big( -[L_j^*,A]([L_j^*,A])^* +[L_j,L_j^*]AA^*\Big),
\end{align*}
and~\eqref{e:real2} follows.
\end{proof}

The next lemma will be used to control the second terms on the right hand sides of \eqref{e:real1} and \eqref{e:real2}. 
\begin{lemm}
\label{l:semibounded}
Let $C_0\in \mathbb{R}$ and suppose that $E: \mathscr S  \to L^2  $ is a self-adjoint operator on $ L^2 ( \mathbb R^n) $ satisfying $E\leq C_0$. Then,
for $ B : \mathscr S' \to \mathscr S $, 
\begin{equation}
\label{eq:posi}
\Big\langle EB,B\Big\rangle_{\mathscr{L}_2}\leq C_0\|B\|_{\mathscr{L}_2}^2.
\end{equation}
\end{lemm}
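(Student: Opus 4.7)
The approach is to exploit the fact that $BB^*$ is a positive trace class operator whose eigenvectors (for nonzero eigenvalues) automatically lie in $\mathscr{S}(\mathbb{R}^n)$, so that the quadratic form bound on $E$ applies termwise after a spectral/singular value decomposition.

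Concretely, since $B : \mathscr{S}' \to \mathscr{S}$, the Schwartz kernel of $B$ lies in $\mathscr{S}(\mathbb{R}^n \times \mathbb{R}^n)$, so $B$ is Hilbert--Schmidt (in fact trace class) and $BB^*$ is a positive trace class operator. The first step is to write the singular value decomposition
\[
B = \sum_j s_j \, u_j \langle \,\cdot\,, v_j\rangle,
\]
with $s_j \geq 0$ and $\{u_j\}$, $\{v_j\}$ orthonormal. For every $j$ with $s_j > 0$ we have $u_j = s_j^{-1} B v_j \in \operatorname{Ran}(B) \subset \mathscr{S}(\mathbb{R}^n)$, so $Eu_j$ is well defined. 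This is where the hypothesis $B : \mathscr{S}' \to \mathscr{S}$ enters cleanly: it guarantees that $E$ can be applied to each $u_j$ appearing with positive weight.

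The second step is a direct computation of the inner product. Using the SVD,
\[
(EB) B^* \;=\; \sum_j s_j^2 \, (Eu_j)\langle \,\cdot\,, u_j\rangle ,
\]
which converges in trace norm because $\sum s_j^2 < \infty$ and $E u_j$ is a single vector for each $j$; taking the trace against the orthonormal basis $\{u_j\}$ (completed arbitrarily) gives
\[
\langle EB, B\rangle_{\mathscr{L}_2} \;=\; \tr\bigl((EB)B^*\bigr) \;=\; \sum_j s_j^2 \langle E u_j, u_j\rangle.
\]
The hypothesis $E \leq C_0$ (as a quadratic form on its domain, which contains $\mathscr{S}$) now yields $\langle Eu_j, u_j\rangle \leq C_0 \|u_j\|^2 = C_0$ for each $j$ with $s_j > 0$, so
\[
\langle EB, B\rangle_{\mathscr{L}_2} \;\leq\; C_0 \sum_j s_j^2 \;=\; C_0 \|B\|_{\mathscr{L}_2}^2,
\]
which is \eqref{eq:posi}.

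The only delicate point is the bookkeeping for the trace: one must check that the series defining $(EB)B^*$ genuinely converges and that the trace may be computed in the SVD basis even though $E$ is unbounded. This is handled by truncating to a finite sum $B_N := \sum_{j \leq N} s_j u_j \langle \,\cdot\,, v_j\rangle$, for which the identity $\langle EB_N, B_N\rangle_{\mathscr{L}_2} = \sum_{j \leq N} s_j^2 \langle Eu_j, u_j\rangle$ is an elementary finite-rank computation, and then passing $N \to \infty$: the right-hand side is bounded above by $C_0 \|B\|_{\mathscr{L}_2}^2$ uniformly in $N$, and $B_N \to B$, $EB_N \to EB$ in Hilbert--Schmidt norm (the latter because $EB$ has $L^2$ Schwartz kernel since $EB : \mathscr{S}' \to L^2$). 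I expect this truncation to be the only step requiring any care; everything else is routine.
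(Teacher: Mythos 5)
Your proof is correct and takes essentially the same approach as the paper: you use the singular value decomposition of $B$, which is equivalent to the paper's spectral decomposition of $BB^*$ (the left singular vectors $u_j$ are exactly the eigenvectors of $BB^*$ with eigenvalues $\lambda_j = s_j^2$), and you both observe that these vectors lie in $\mathscr{S}$ when the corresponding eigenvalue is nonzero, so the form bound $E \leq C_0$ applies termwise. The paper states the trace computation more tersely without the truncation argument, but the underlying reasoning is the same.
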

\begin{proof}
To see this, observe that exists an $L^2$-orthonormal basis $u_j $ and $\lambda_j\geq 0$,
$$
BB^*=\sum_j \lambda_j   u_j \otimes u_j ,  \ \ \   (f \otimes g) ( \varphi) := f \langle \varphi, g \rangle.  
$$
We also note that if $ \lambda_j > 0 $ then $ u_j \in \mathscr S \subset \mathcal D ( E ) $. 
Then,  
$$
\begin{aligned}
\Big\langle EB,B\Big\rangle_{\mathscr{L}_2}&=\tr (EBB^*)
=\sum_j \langle EBB^*u_j,u_j\rangle_{L^2} = \sum_{j} \lambda_j\langle Eu_j,u_j\rangle_{L^2}\\ & \leq C_0\sum_j \lambda_j =C_0 \|B\|_{\mathscr{L}_2}^2,
\end{aligned}
$$
which is \eqref{eq:posi}.
\end{proof}
Next, we provide an estimate 
\begin{lemm}
\label{l:smoothEstimate}
Suppose that, as a bounded self-adjoint operator on $L^2(\mathbb{R}^n)$ (see \eqref{eq:fric})
\begin{equation}
\label{eq:sumLL}
\sum_j [L_j,L_j^*]\leq \frac{2M h}{\gamma}.
\end{equation}
Then, for $A : \mathscr S' \to \mathscr S $ and $\lambda>0$,
\begin{equation}
\label{e:injective1}
\lambda \|A\|_{\mathscr{L}_2}\leq \|(\mathcal{L}-M-\lambda)A\|_{\mathscr{L}_2}, \ \ \
\lambda \|A\|_{\mathscr{L}_2}\leq \|(\mathcal{L}^*-M-\lambda)A\|_{\mathscr{L}_2}, 
\end{equation}
\end{lemm}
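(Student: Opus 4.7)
\medskip

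\noindent\textbf{Proof proposal.} The plan is to deduce both inequalities from the identities in Lemma~\ref{l:formalComputation} combined with Lemma~\ref{l:semibounded}, and then conclude via Cauchy--Schwarz. For the first bound in~\eqref{e:injective1}, I would start with $A:\mathscr{S}'\to \mathscr{S}$ (the setting in which Lemma~\ref{l:formalComputation} was proved), so that every bracket and product below makes literal sense. Using~\eqref{e:real1}, the first term on the right is non-positive and can be discarded, leaving
$$
2\Re\langle \mathcal{L}A,A\rangle_{\mathscr{L}_2}\leq \frac{\gamma}{h}\Big\langle \sum_j[L_j,L_j^*]\,A^*,A^*\Big\rangle_{\mathscr{L}_2}.
$$

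Now $E:=\sum_j[L_j,L_j^*]$ is a bounded self-adjoint operator with $E\leq 2Mh/\gamma$ by assumption~\eqref{eq:sumLL}, so Lemma~\ref{l:semibounded} applied to $B=A^*$ gives $\langle EA^*,A^*\rangle_{\mathscr{L}_2}\leq (2Mh/\gamma)\|A^*\|_{\mathscr{L}_2}^2$, and since $\|A^*\|_{\mathscr{L}_2}=\|A\|_{\mathscr{L}_2}$, we conclude
$$
\Re\langle (\mathcal{L}-M)A,A\rangle_{\mathscr{L}_2}\leq 0.
$$
Hence for $\lambda>0$, $\Re\langle (M+\lambda-\mathcal{L})A,A\rangle_{\mathscr{L}_2}\geq \lambda\|A\|_{\mathscr{L}_2}^2$, and Cauchy--Schwarz yields
$$
\lambda\|A\|_{\mathscr{L}_2}^2\leq \|(\mathcal{L}-M-\lambda)A\|_{\mathscr{L}_2}\,\|A\|_{\mathscr{L}_2},
$$
which after division gives the first half of~\eqref{e:injective1}. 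The same recipe applied to~\eqref{e:real2} produces the companion estimate for $\mathcal{L}^*$: the only change is that the first term contributes $-(\gamma/h)\sum\|[L_j^*,A]\|_{\mathscr{L}_2}^2\leq 0$ and the second one is now $(\gamma/h)\langle EA,A\rangle_{\mathscr{L}_2}$, to which Lemma~\ref{l:semibounded} applies directly with $B=A$.

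To handle a general $A:\mathscr{S}\to\mathscr{S}'$ in $\mathcal{D}(\mathcal{L})$ (resp.\ $\mathcal{D}(\mathcal{L}^*)$), I would approximate by smoothing operators, for instance via convolution of the Schwartz kernel with a Gaussian mollifier, or by composition with the spectral cutoff of a harmonic oscillator on each side: such regularizations belong to $\mathscr{S}(\mathbb{R}^{2n})$, converge to $A$ in $\mathscr{L}_2$, and (using the graph-norm closure definition of $\mathcal{L}$ and $\mathcal{L}^*$ recorded in Lemma~\ref{l:applyingSpec}) satisfy $\mathcal{L}A_\varepsilon\to \mathcal{L}A$ in $\mathscr{L}_2$. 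Taking $\varepsilon\to 0$ in the inequality proved for smoothing operators finishes the argument.

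The main (and essentially only) obstacle is the regularization step: one must check that a convenient family of smoothing operators is genuinely in the common domain of $\mathcal{L}$, $\mathcal{L}^*$, and of the individual commutators $[L_j,\cdot]$ used in Lemma~\ref{l:formalComputation}, and that the identities pass to the limit. Because $P\in S_{(2)}$ and $L_j\in S_{(1)}$ all act continuously on $\mathscr{S}(\mathbb{R}^n)$ and on $\mathscr{S}'(\mathbb{R}^n)$, the kernels of Schwartz-kernel mollifications lie in $\mathscr{S}(\mathbb{R}^{2n})$ and hence define operators $\mathscr{S}'\to\mathscr{S}$; continuity of $\mathcal{L}_1$ on $\mathscr{S}'(\mathbb{R}^{2n})$ (Lemma~\ref{l:applyingSpec}) together with the definition of $\mathcal{D}(\mathcal{L})$ in~\eqref{eq:domain} then makes the convergence of the graph-norm routine, so no further technical input is required.
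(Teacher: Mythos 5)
Your core argument is exactly the paper's: plug in identity \eqref{e:real1}, drop the manifestly non-positive $-\tfrac{\gamma}{h}\sum_j\|[L_j,A]\|_{\mathscr{L}_2}^2$, bound the remaining term via Lemma~\ref{l:semibounded} with $B=A^*$ and $C_0 = 2Mh/\gamma$ to get $\Re\langle(\mathcal L - M)A,A\rangle_{\mathscr{L}_2}\leq 0$, and finish with Cauchy--Schwarz. The one divergence is your closing regularization paragraph. The paper intends this lemma only for smoothing operators $A:\mathscr S'\to\mathscr S$ (the class in which Lemmas~\ref{l:formalComputation} and \ref{l:semibounded} are proved; the ``$\mathscr S\to\mathscr S'$'' in the lemma statement appears to be a typo), and defers the passage to general $A\in\mathcal D(\mathcal L)$, via the graph-norm density coming from Proposition~\ref{p:spec}, to the proof of Proposition~\ref{p:HY}. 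So your extra step is not wrong — it is the content of Proposition~\ref{p:HY} folded into this lemma — but be aware that the smoothing approximation you sketch is precisely what $\mathcal L=\overline{\mathcal L_0}$ (Lemma~\ref{l:applyingSpec} via Proposition~\ref{p:spec}) delivers, so you should cite that rather than reprove it with an ad hoc mollifier.
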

\begin{proof}
Observe that by Lemma~\ref{l:formalComputation}, and Lemma~\ref{l:semibounded}
\begin{align*}
2\Re \langle (\mathcal{L}-M-\lambda)A,A\rangle _{\mathscr{L}_2}&\leq -2\lambda \|A\|_{\mathscr{L}_2} -2M\|A\|_{\mathscr{L}_2}+\frac{\gamma}{h} \Big\langle\sum_j [L_j,L_j^*]A^*,A^*\Big\rangle_{\mathscr{L}_2}\\
&\leq -2\lambda \|A\|_{\mathscr{L}_2}.
\end{align*}
Hence,
$$
2\lambda \|A\|_{\mathscr{L}_2}\leq |2\Re \langle (\mathcal{L}-M-\lambda)A,A\rangle _{\mathscr{L}_2}|\leq 2\|(\mathcal{L}-M-\lambda)A\|_{\mathscr{L}_2}\|A\| _{\mathscr{L}_2},
$$
from which the first estimate in~\eqref{e:injective1} follows.
The argument for the second estimate is identical.
\end{proof}

\begin{prop}
\label{p:HY}
Suppose that \eqref{eq:sumLL} holds.
Then the operator $\mathcal{L}$ with domain 
$
\mathcal{D}(\mathcal{L}):=\{ A\in \mathscr{L}_2\,:\, \mathcal{L}A\in \mathscr{L}_2\}
$
generates a strongly continuous semigroup 
$$ e^{t\mathcal{L}}:\mathscr{L}_2\to \mathscr{L}_2 \ \ \text{ and } \ \
\|e^{t\mathcal{L}}\|_{\mathscr{L}_2\to \mathscr{L}_2}\leq e^{Mt}, 
\ \ \ \  t \geq 0. 
$$
\end{prop}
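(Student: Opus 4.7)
The plan is to apply the Hille--Yosida theorem to $\mathcal{L} - M$, which will give a strongly continuous contraction semigroup $S(t)$ on $\mathscr{L}_2$; then $e^{t\mathcal{L}} := e^{Mt} S(t)$ satisfies the claimed bound. Three ingredients are needed: (i) $\mathcal{L} - M$ is closed and densely defined; (ii) for every $\lambda > 0$, the injectivity bound $\|A\|_{\mathscr{L}_2} \leq \lambda^{-1}\|(\mathcal{L} - M - \lambda)A\|_{\mathscr{L}_2}$ holds on $\mathcal{D}(\mathcal{L})$; (iii) the range of $\mathcal{L} - M - \lambda$ is all of $\mathscr{L}_2$. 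Property (i) is immediate from Lemma \ref{l:applyingSpec}, which identifies $\mathcal{L} = \overline{\mathcal{L}_0}$ and shows that $\mathscr{S}(\mathbb{R}^{2n})$ (identified with Schwartz kernels of smoothing operators $\mathscr{S}'\to\mathscr{S}$) lies in $\mathcal{D}(\mathcal{L})$ and is dense in $\mathscr{L}_2 \cong L^2(\mathbb{R}^{2n})$.

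For (ii) I would use Lemma \ref{l:smoothEstimate}, which already supplies the desired bound for smoothing $A$. To extend it to the whole graph-domain I would exploit $\mathcal{L} = \overline{\mathcal{L}_0}$: given $A \in \mathcal{D}(\mathcal{L})$, pick a sequence $A_n$ with Schwartz kernels such that $A_n \to A$ and $\mathcal{L}_0 A_n \to \mathcal{L} A$ in $\mathscr{L}_2$, apply the estimate to each $A_n$, and pass to the limit. The same argument, using $\mathcal{L}^* = \overline{\mathcal{L}_0^*}$, gives the analogous bound for $\mathcal{L}^*$ on $\mathcal{D}(\mathcal{L}^*)$.

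For (iii), the estimate in (ii) implies that $\mathcal{L} - M - \lambda$ has closed range. Since $\mathscr{L}_2$ is a Hilbert space and $\mathcal{L} - M - \lambda$ is closed and densely defined, one has
\[
\operatorname{Ran}(\mathcal{L} - M - \lambda) = \ker\bigl((\mathcal{L} - M - \lambda)^*\bigr)^{\perp} = \ker(\mathcal{L}^* - M - \lambda)^{\perp}.
\]
The extended estimate for $\mathcal{L}^*$ shows that $\mathcal{L}^* - M - \lambda$ is injective, so the kernel is trivial and the range exhausts $\mathscr{L}_2$. Combined with the bound from (ii), this gives $\lambda \in \rho(\mathcal{L} - M)$ with $\|(\mathcal{L} - M - \lambda)^{-1}\|_{\mathscr{L}_2\to\mathscr{L}_2} \leq 1/\lambda$ for all $\lambda > 0$, which is the Hille--Yosida hypothesis.

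The main technical point will be the density step tying Lemma \ref{l:smoothEstimate} (stated for $A:\mathscr{S}'\to\mathscr{S}$, whose kernels are Schwartz so that the formal trace manipulations in Lemma \ref{l:formalComputation} are legitimate) to the maximal graph-domain $\mathcal{D}(\mathcal{L})$; this requires knowing that operators with Schwartz kernels form a core for $\mathcal{L}$ (and for $\mathcal{L}^*$), which is exactly the content of $\mathcal{L} = \overline{\mathcal{L}_0}$ and $\mathcal{L}^* = \overline{\mathcal{L}_0^*}$ in Lemma \ref{l:applyingSpec}. Everything else is standard semigroup theory.
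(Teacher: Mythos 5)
Your proposal is correct and follows essentially the same route as the paper: extend the resolvent estimate of Lemma~\ref{l:smoothEstimate} from smoothing operators to the graph domains of $\mathcal{L}$ and $\mathcal{L}^*$ via the core property $\mathcal{L}=\overline{\mathcal{L}_0}$, $\mathcal{L}^*=\overline{\mathcal{L}_0^*}$ from Lemma~\ref{l:applyingSpec} (which in turn rests on Proposition~\ref{p:spec}), deduce that $\mathcal{L}-M-\lambda$ is bijective with $\|(\mathcal{L}-M-\lambda)^{-1}\|\leq\lambda^{-1}$ for $\lambda>0$, and apply Hille--Yosida. The paper leaves the closed-range/$\ker(\mathcal{L}^*-M-\lambda)^\perp$ step implicit, whereas you spell it out; this is the only (cosmetic) difference.
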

\begin{proof}
By Proposition~\ref{p:spec}, or rather its proof (see \eqref{eq:closure1}), and Lemma~\ref{l:applyingSpec},  for $A\in\mathcal{D}(\mathcal{L})$ there exists a sequence of operators $A_n : \mathscr S' \to \mathscr S $ such that $A_n\overset{\mathscr{L}_2}{\longrightarrow} A$ and $\mathcal{L}A_n\overset{\mathscr{L}_2}{\longrightarrow} \mathcal{L}A$. Hence, for $A\in\mathcal{D}(\mathcal{L})$ and $\lambda>0$, Lemma~\ref{l:smoothEstimate} gives
\begin{align*}
\lambda \|A\|_{\mathscr{L}_2}&=\lambda \lim_{n\to \infty}\|A_n\|_{\mathscr{L}_2}\\
&\leq  \lim_{n\to \infty }\|(\mathcal{L}-M-\lambda)A_n\|_{\mathscr{L}_2}=\|(\mathcal{L}-M-\lambda)A\|_{\mathscr{L}_2}.
\end{align*}

Similarly, for $A\in\mathcal{D}(\mathcal{L}^*)$, we have $A_n:\mathscr{S}'\to \mathscr{S}$ such that $A_n\overset{\mathscr{L}_2}{\longrightarrow} A$ and $\mathcal{L}^*A_n\overset{\mathscr{L}_2}{\longrightarrow} \mathcal{L}^*A$. This implies that for $A\in\mathcal{D}(\mathcal{L}^*)$, and $\lambda>0$
\begin{align*}
\lambda \|A\|_{\mathscr{L}_2}\leq \|(\mathcal{L}^*-M-\lambda)A\|_{\mathscr{L}_2}.
\end{align*}
In particular, $(\mathcal{L}-M-\lambda)^{-1}:\mathscr{L}_2\to \mathcal{D}(\mathcal{L})$ exists and satisfies,
$$
\|(\mathcal{L}-M-\lambda)^{-1}\|_{\mathscr{L}_2\to \mathscr{L}_2}\leq \lambda^{-1}, 
 \ \ \ \lambda > 0 .
$$
The Hille-Yosida theorem then implies that $\mathcal{L}-M$ generates a strongly continuous semigroup $e^{t(\mathcal{L}-M)}$ satisfying
$$
\|e^{t(\mathcal{L}-M)}\|_{\mathscr{L}_2\to \mathscr{L}_2}\leq 1,
$$
from which the proposition follows.
\end{proof}

We conclude this section by showing how condition \eqref{eq:sumLL} is related to a lower bound on the friction \eqref{eq:fric}
\begin{lemm}
\label{cl2qu}
Let 
\begin{equation}
\label{eq:sumbr}
M_0 := \sup_{ \mathbb R^{2n}} \mu, \ \ \ \mu := \frac 1 {2i }\sum_{j=1}^J  \{ \ell_j, \bar \ell_j \}. 
\end{equation}
Then there is $C_0>0$ such that \eqref{eq:sumLL} holds with 
\begin{equation}
\label{eq:MM}  M =  \gamma M_0 + C_0 h \gamma , 
\end{equation}
for  $0 < h < 1 $.  Furthermore, if $\mu\equiv 0$, then~\eqref{eq:sumLL} holds with 
\begin{equation}
\label{e:MM2}
M=C_0h^2\gamma
\end{equation} for $0<h<1$.
\end{lemm}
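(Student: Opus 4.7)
My approach is to expand $\sum_j [L_j,L_j^*]$ via the Weyl composition formula and then convert the resulting symbol identity into an operator bound using sharp G\aa{}rding.

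First, I would apply the composition calculus for $S_{(1)}$ symbols (Proposition~\ref{p:comp}) to each commutator $[L_j,L_j^*]=[\ell_j^{\rm w},(\bar\ell_j)^{\rm w}]$. The essential structural feature of Weyl quantization is that under commutation only odd powers of $h$ survive, since $a\#b-b\#a$ has the expansion $2i\sin(hA(D))\,a(z)b(w)|_{z=w}$ with $A=-\tfrac12\sigma$. This gives
$$
[L_j,L_j^*]=\tfrac{h}{i}\{\ell_j,\bar\ell_j\}^{\rm w}+h^3 r_j^{\rm w}.
$$
Every Moyal term at order $k\geq 1$ differentiates both $\ell_j$ and $\bar\ell_j$ exactly $k$ times, so by~\eqref{eq:assa} each $r_j$ lies in $S(1)$ with seminorms uniform in $h$, even though $\ell_j$ itself grows linearly. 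Summing and invoking~\eqref{eq:fric},
$$
\sum_j [L_j,L_j^*]=2h\mu^{\rm w}+h^3 R^{\rm w},\qquad R=\sum_j r_j\in S(1).
$$

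Next, I would pass from this symbol identity to an operator inequality. Applying sharp G\aa{}rding to the non-negative symbol $M_0-\mu\in S(1)$ yields $\mu^{\rm w}\leq (M_0+C_1 h)I$, while Calder\'on--Vaillancourt bounds $R^{\rm w}$ by an absolute constant. Combining these,
$$
\sum_j[L_j,L_j^*]\leq 2hM_0+2C_0 h^2\cdot I\qquad (0<h<1),
$$
which, after multiplication by $\gamma/(2h)$, is exactly~\eqref{eq:sumLL} with $M=\gamma M_0+C_0 h\gamma$, proving~\eqref{eq:MM}. When $\mu\equiv 0$, the leading-order contribution vanishes identically, leaving $\sum_j[L_j,L_j^*]=h^3 R^{\rm w}\leq Ch^3 I$, which gives~\eqref{eq:sumLL} with $M=C_0 h^2\gamma$, as in~\eqref{e:MM2}.

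The only potentially delicate point I anticipate is the composition formula for symbols that grow linearly in the phase variables; this is settled by the appendix's $S_{(k)}$ calculus together with the observation above that every Moyal correction beyond the leading one differentiates both factors at least once, so the remainder lives in the bounded class $S(1)$ and the standard $L^2$ operator calculus applies.
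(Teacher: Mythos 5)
Your proof is correct and follows essentially the same route as the paper: identify the leading symbol of $\sum_j[L_j,L_j^*]$ as $2h\mu$ via Weyl composition and then apply the sharp G\aa{}rding inequality to the nonnegative $S(1)$ symbol $M_0-\mu$. The paper's own proof is terser, invoking sharp G\aa{}rding directly for \eqref{eq:MM} and stating the $h^3$ remainder only for the $\mu\equiv 0$ case; you spell out explicitly the odd-order cancellation $a\#b-b\#a=2i\sin(hA(D))(a\otimes b)|_{\mathrm{diag}}$ and the observation that every Moyal correction beyond the leading one differentiates both $\ell_j$ and $\bar\ell_j$ at least once, so the remainder sits in $S(1)$ despite the linear growth of $\ell_j$ -- exactly the ingredient the paper leaves implicit. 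The arithmetic checks out: $2h(M_0+C_1h)+C_2h^3\leq 2h(M_0+C_0h)=\tfrac{2Mh}{\gamma}$ for $M=\gamma M_0+C_0h\gamma$, and in the frictionless case the surviving $h^3R^{\mathrm w}$ term gives $M=C_0h^2\gamma$.
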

\begin{proof}
Since \eqref{eq:assa} shows that $ \mu \in S(1) $, the first estimate is a straightforward application of sharp G{\aa}rding inequality for the class $ S(1) $ -- see 
\cite[Theorem 7.1]{DiSj} or \cite[\S 4.7.2]{z12}. When $\mu\equiv0$, we use that $[L_j,L_j^*]=\Op(\frac{h}{2i}\{\ell_j,\bar{\ell}_j\}+h^3 e)$ for some $e\in S(1)$ and hence the second estimate follows.
\end{proof}

\section{The Classical Dynamics}
\label{s:cla}

It will be convenient to rewrite the Lindbladian as 
$$
\mathcal{L}A= \frac{i}{h}[P,A]+\frac{\gamma}{2h}\sum_j\left( [L_jA,L_j^*]+[L_j,AL_j^*] \right) .
$$
Our first goal is to motivate the classical Fokker--Planck equation \eqref{eq:FP} from the
evolution equation for $ \mathcal L $.

Observe that for $0\leq \rho <1$, and $a\in S^{L^2}_{\rho}$,
$$
L_j A=\Op(\ell_j a+\frac{h}{2i}\{\ell_j,a\}+ h^{2-2\rho}e_1),\qquad AL_j^*=\Op(a\bar{\ell}_j+\frac{h}{2i}\{a,\bar{\ell_j}\}+h^{2-2\rho}e_2),
$$
with $e_j\in S^{L^2}_{\rho}$.
Hence for $a\in S^{L^2}_{\rho}$
\begin{equation}
\label{eq:main}
\begin{split}
\mathcal{L}A  = & \Op(H_pa) +\frac{\gamma}{2i} \sum_j \Op( (2 \{\ell_j,\bar{\ell}_j\}a -\ell_j H_{\bar{\ell}_j}a+\bar{\ell}H_{\ell}a) )\\
&  +\frac{h\gamma}{4}\sum_j \Op(H_{\bar{\ell}_j}H_{\ell_j} a +H_{\ell_j}H_{\bar{\ell_j}}a)+ h^{2 -3 \rho}(1+\gamma)\Op(e),
\end{split}
\end{equation}
with $e\in S^{L^2}_{\rho}$. {Heuristic arguments in the physics literature -- see \cite{hrr} and the discussion and references given there -- suggest that the natural classical evolution should be given by the equation up to the diffusion term
$ \sum_j H_{\bar{\ell}_j}H_{\ell_j} +H_{\ell_j}H_{\bar{\ell_j}}$ which is a non-positive differential operator acting on the classical observable $ a $ (see \eqref{eq:exam} for a striking example). 
Hence as the generator of the classical flow (a form of Fokker--Planck operator) we take
 $Q\in \operatorname{Diff}^2(\mathbb{R}^{2n})$ given by 
$$
Q:=H_p+\frac{\gamma}{2i}\sum_j (2 \{\ell_j,\bar{\ell}_j\} -\ell_j H_{\bar{\ell}_j}+\bar{\ell}_jH_{\ell_j}) +\frac{h\gamma}{4}\sum_j(H_{\bar{\ell}_j}H_{\ell_j}  +H_{\ell_j}H_{\bar{\ell}_j}).
$$}

The key estimate for evolution by $Q$ is given as follows. We need here the additional technical assumption \eqref{eq:tech}.  To state the next estimate we recall
the definition of semiclassical Sobolev norms:
\begin{equation}\label{eq:Sob}
\| u \|_{ H_\varepsilon ^s }^2 := \int ( 1 + | \varepsilon \zeta |^2 )^s | \widehat u ( \zeta ) |^2 d \zeta, 
 \ \ \  \widehat u ( \zeta ) := \int u ( z ) e^{- i z \zeta } d z . \end{equation}
\begin{prop}
\label{p:advect}
Suppose that \eqref{eq:tech} holds, 
and $\gamma \leq h^{-1}$.
Let $U(t):L^2(\mathbb{R}^{2n})\to L^2(\mathbb{R}^{2n})$ be defined by 
\begin{equation}
\label{e:defU}
(\partial_t -Q )U(t)=0,\qquad U(0)=\Id.
\end{equation}
Then, for all $s\geq 0$, there is $C>0$ such that for all $t\geq 0$,
\begin{equation}
\label{eq:estUg}
\|U(t)  \|_{H_{\varepsilon}^s\to H_{\varepsilon}^s}\leq Ce^{M_0\gamma t}  \sqrt{1+t\gamma \varepsilon}  ,\qquad \|U(t)\|_{L^2\to L^2}\leq Ce^{M_0\gamma t}.
\end{equation}
where $ M_0 $ is given in \eqref{eq:sumbr}, the norms are defined in \eqref{eq:Sob}, and
\begin{equation}
\label{eq:vareps}
\varepsilon := \sqrt{ \gamma h } .
\end{equation}
If, $\sum_{j}\{\bar{\ell}_j,\ell_j\} \equiv 0$, that is there is no friction \eqref{eq:fric},  then 
\begin{equation}
\label{eq:estU0}
\|U(t)  \|_{H_{\varepsilon}^s\to H_{\varepsilon}^s}\leq C.
\end{equation}
\end{prop}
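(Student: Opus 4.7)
The plan is a semiclassical energy estimate on $Q$, exploiting its dissipative structure. I first decompose
$$
Q = iA + \gamma\mu - \varepsilon^2 D,
$$
where $iA$ is formally anti-self-adjoint on $L^2(\mathbb{R}^{2n})$ (collecting $H_p$ together with the anti-self-adjoint first-order pieces produced by the $\ell_j$; cf.~\eqref{eq:Bj}), $\mu\in S(1)$ is the real friction \eqref{eq:fric} with $\mu\le M_0$, and
$$
D := -\tfrac{1}{4}\sum_j\bigl(H_{\bar\ell_j}H_{\ell_j}+H_{\ell_j}H_{\bar\ell_j}\bigr)
$$
is formally self-adjoint. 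Condition \eqref{eq:nondeg} says exactly that the principal symbol $\tfrac{1}{2}\sum_j|\langle H_{\ell_j}(z),\zeta\rangle|^2$ of $D$ is bounded below by $c|\zeta|^2$, so $D$ is elliptic of order two; condition \eqref{eq:tech} ensures that the coefficients of both $iA$ and $D$ lie in $S(1)$ despite the possible linear growth of $\ell_j$. The $L^2$ bound is then immediate from $\partial_t\|v\|^2 = 2\gamma\langle\mu v,v\rangle - 2\varepsilon^2\langle Dv,v\rangle \le 2\gamma M_0\|v\|^2$ together with Gr\"onwall.

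For the $H_\varepsilon^s$ bound I reduce to integer $s=k$ (non-integer $s$ follows by interpolation) and consider
$$
F_k(t) := \sum_{|\alpha|\le k}\|w^\alpha(t)\|^2, \qquad w^\alpha(t) := (\varepsilon\partial_z)^\alpha v(t).
$$
Each $w^\alpha$ satisfies $\partial_t w^\alpha = Q w^\alpha + [(\varepsilon\partial_z)^\alpha,Q]v$; differentiating the energy and summing gives
$$
\tfrac{d}{dt}F_k \le 2\gamma M_0 F_k - 2\varepsilon^2\sum_{|\alpha|\le k}\langle D w^\alpha,w^\alpha\rangle + 2\Re\sum_{|\alpha|\le k}\bigl\langle[(\varepsilon\partial_z)^\alpha,Q]v,w^\alpha\bigr\rangle.
$$
The commutator splits into three pieces corresponding to $iA$, $\gamma\mu$, and $-\varepsilon^2 D$. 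The decisive one is with $\gamma\mu$: since $\mu$ is multiplication, every non-trivial term in $[(\varepsilon\partial)^\alpha,\mu]$ carries at least one $z$-derivative on $\mu$ and hence a prefactor $\varepsilon$, giving $|\langle[(\varepsilon\partial)^\alpha,\gamma\mu]v,w^\alpha\rangle|\le C\gamma\varepsilon\|v\|_{L^2}\|w^\alpha\|$. The commutators with $iA$ and with $\varepsilon^2 D$ produce lower-order differential operators whose coefficients are derivatives of $p$ and $\ell_j$ of order $\ge 2$ --- hence bounded by \eqref{eq:assa} and \eqref{eq:tech} --- and these are absorbed into the coercive dissipation $-2c\|\varepsilon\partial w^\alpha\|^2$ provided by G\aa rding's inequality for the elliptic operator $D$, via Cauchy--Schwarz and Young. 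The outcome is the forced differential inequality
$$
\tfrac{d}{dt}F_k(t) \le 2\gamma M_0\,F_k(t) + C\gamma\varepsilon\,\|v(t)\|_{L^2}^2.
$$

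Inserting $\|v(t)\|_{L^2}^2\le e^{2\gamma M_0 t}\|u\|_{L^2}^2$, multiplying through by $e^{-2\gamma M_0 t}$ and integrating gives $F_k(t)\le Ce^{2\gamma M_0 t}(1+t\gamma\varepsilon)\|u\|_{H_\varepsilon^k}^2$, yielding \eqref{eq:estUg} after taking square roots. When $\mu\equiv 0$ the forcing vanishes (it is produced solely by derivatives of $\mu$), so the same argument gives the uniform-in-time bound \eqref{eq:estU0}. The hardest step is genuinely absorbing the $[(\varepsilon\partial)^\alpha,H_p]$-part of the $iA$-commutator into the diffusive gain: a crude Cauchy--Schwarz bound yields $C\Gamma\|w^\alpha\|^2$ and thus a spurious exponential factor $e^{C\Gamma t}$. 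Avoiding this requires splitting each term so that one full $\varepsilon\partial$ lands on $w^\alpha$ (to be absorbed by $\|\varepsilon\partial w^\alpha\|^2$ from the diffusion) and the remaining piece is bounded by $C\|w^\alpha\|^2$, which can then be carried by the $M_0\gamma$-rate.
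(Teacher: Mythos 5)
Your decomposition of $Q$ into an anti--self-adjoint part, the friction $\gamma\mu$, and the elliptic dissipation $-\varepsilon^2D$ agrees with the paper's (see \eqref{eq:Bj}), and the $L^2$ bound is fine. The difficulty lies in the $H^s_\varepsilon$ estimate, and there your proposal has a genuine gap at the point you yourself flag as ``the hardest step.''

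First, the claimed commutator bound $\lvert\langle[(\varepsilon\partial)^\alpha,\gamma\mu]v,w^\alpha\rangle\rvert\le C\gamma\varepsilon\|v\|_{L^2}\|w^\alpha\|$ is incorrect. Leibniz gives $[(\varepsilon\partial)^\alpha,\mu]v=\sum_{0<\beta\le\alpha}\binom{\alpha}{\beta}\varepsilon^{|\beta|}(\partial^\beta\mu)(\varepsilon\partial)^{\alpha-\beta}v$, so the commutator carries derivatives of $v$ up to order $|\alpha|-1$; only the single term $\beta=\alpha$ produces $\|v\|_{L^2}$. The correct bound involves $\|v\|_{H^{k-1}_\varepsilon}$ on the right, and feeding that back into $F_k$ gives at best $\partial_tF_k\le(2\gamma M_0+C\gamma\varepsilon)F_k$, i.e.\ $e^{C\gamma\varepsilon t}$ rather than the polynomial $\sqrt{1+t\gamma\varepsilon}$ of \eqref{eq:estUg}.

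Second, and more fundamentally, a pointwise-in-time Gr\"onwall inequality for $F_k(t)$ cannot absorb the $H_p$-commutator into the dissipation. For $|\beta|=1$ the term $\varepsilon^{|\alpha|}(\partial^\beta H_p)\partial^{\alpha-\beta}v$ is, after $\varepsilon$-scaling, of size $\|w^{\delta}\|$ with $|\delta|=|\alpha|$ and \emph{no} small prefactor, since the one derivative lost to $\partial^\beta$ is regained from the first-order operator $H_p$. Pairing with $w^\alpha$ then gives a term $\sim\|w^\delta\|\,\|w^\alpha\|$; absorbing it into $-c\sum_{1\le|\delta'|\le k+1}\|w^{\delta'}\|^2$ by Young forces the Young parameter and its reciprocal both to lie below the ellipticity constant $c$, which is impossible when $c$ is small. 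This is exactly why the paper does \emph{not} argue pointwise in time: it integrates the energy identity over $[0,T]$ and runs an induction on the derivative order $k$, in which the integrated dissipation $\int_0^T\sum_{1\le|\alpha|\le k+1}\|(\varepsilon\partial)^\alpha u\|^2\,dt$ produced at level $k$ is precisely what bounds the commutator forcing at level $k+1$ (see \eqref{e:induction}, \eqref{e:basicEnergy}, and the use of \eqref{eq:estcom}). That structure bypasses the constant-matching problem entirely and produces the linear-in-$T$ forcing $CT\gamma\varepsilon\|u_0\|^2$, whence the $\sqrt{1+t\gamma\varepsilon}$. To repair your argument you would need to replace the differential inequality for $F_k$ by the time-integrated estimate and set up the induction over $k$ as in the paper.
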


\noindent
{\bf Remark.}  The estimates \eqref{eq:estUg} and \eqref{eq:estU0} do not address the smoothing effect of the evolution by 
\eqref{e:defU}. Obtaining quantitative estimates seems to require stronger assumptions than \eqref{eq:assa} and we restrict ourselves
to that case. 

\begin{proof}
Recall from \eqref{eq:FP} and \eqref{eq:Bj} that $ Q $ is given by 
$ H_p+ \gamma \sum_j B_j + \mu $ plus a second order divergence form operator and the first two terms are anti-selfadjoint. Hence, for $u\in H^2$, 
$$
\Re \langle Qu, u\rangle= \frac{\gamma}{2i}\sum_j \langle \{ \ell_j,\bar{\ell}_j\}u,u\rangle -\frac{h\gamma}{4}\sum_j\left(\|H_{\ell_j}u\|_{L^2}^2+\|H_{\bar{\ell}_j}u\|_{L^2}^2\right).
$$
We start with  an estimate on the solution, $v,$ to 
\begin{equation}
\label{eq:conM0}
e^{ -t M_0 \gamma}  (\partial_t-Q ) (e^{ t M_0 \gamma} v (t)  ) = 
 (\partial_t-Q+M_0\gamma ) v(t) =f,\qquad v(0)=v_0.
\end{equation}
We have
$$
\begin{aligned}
\langle f,v\rangle&=\Re \langle (\partial_t- Q+M_0\gamma)v,v\rangle\\ 
&=\frac{1}{2}\partial_t\|v\|_{L^2}^2+\gamma \langle (M_0- \sum_j\tfrac{1}{2i}\{ \ell_j,\bar{\ell}_j\})v,v\rangle +\frac{h\gamma}{4}\sum_j(\|H_{\ell_j}v\|_{L^2}^2+\|H_{\bar{\ell}_j}v\|_{L^2}^2).
\end{aligned}
$$
Hence, 
$$
\partial_t\|v\|_{L^2}^2 +\frac{h\gamma}{2}\sum_j\left(\|H_{\ell_j}u\|_{L^2}^2+\|H_{\bar{\ell}_j}u\|_{L^2}^2\right)\leq 
2|\langle f,v\rangle|.
$$
For $T>0$ the ellipticity hypothesis \eqref{eq:nondeg} then gives
\begin{equation}
\label{e:basicEnergy}
\begin{aligned}
\|v(T)\|_{L^2}^2+\gamma h c\int_0^T  \|\nabla v\|_{L^2}^2
& \leq \|v(T)\|_{L^2}^2+\frac{h\gamma}{2}\int_0^T\sum_j\left(\|H_{\ell_j}v\|_{L^2}^2+\|H_{\bar{\ell}_j}v\|_{L^2}^2\right) \\
&\leq 2\int_0^T\|f(t)\|_{L^2}\|v(t) \|_{L^2}dt+\|v_0\|_{L^2}^2.
\end{aligned}
\end{equation}

Now let $u$ solve 
$$
(\partial_t- Q+M_0\gamma )u=0,\qquad u(0)=u_0.
$$
Then, applying~\eqref{e:basicEnergy}, we obtain
\begin{equation}
\label{eq:kenergy}
 \|u(T)\|_{L^2}^2+ c\int_0^T\| \varepsilon \nabla u (t) \|_{L^2}^2ds\leq\|u_0\|_{L^2}^2 , 
  \ \ \ \varepsilon = \sqrt{ 
  \gamma h}. 
\end{equation}
To proceed by induction let us assume that for $k\geq 0$
\begin{equation}
\label{e:induction}
\begin{aligned}
&\sum_{ 0\leq |\alpha|\leq k}\|(\varepsilon\partial)^{\alpha} u(T)\|_{L^2}^2+ \int_0^T \sum_{{1} \leq |\alpha|\leq k+1}\|(\varepsilon\partial)^{\alpha} u (t) \|_{L^2}^2\\
&\qquad \leq C \sum_{|\beta|\leq k}\|(\varepsilon\partial)^\beta u_0\|_{L^2}^2+ {C }T\gamma \varepsilon \|u_0\|^2_{L^2}.
\end{aligned}
\end{equation}
We  set 
\begin{gather*}
Q_1:=\sum_j i \{\bar{\ell}_j,\ell_j\},\quad Q_2:=\frac{1}{2i}\sum_j( -\ell_j H_{\bar{\ell}_j}+\bar{\ell}_jH_{\ell_j}),\\
 Q_3:=\frac{1}{4}\sum_j(H_{\bar{\ell}_j}H_{\ell_j}  +H_{\ell_j}H_{\bar{\ell}_j}).
\end{gather*}
so that
\begin{equation}
\label{eq:Qal}
\begin{split} 
 (\partial_t-  Q + {M_0} )\partial^\alpha u & =[H_p,\partial^\alpha ]u+\gamma [Q_1,\partial^\alpha]u+\gamma [Q_2,\partial^\alpha]u+\gamma h [Q_3,\partial^\alpha]u, \\
 \partial^\alpha (0)  & =\partial ^\alpha u_0.
\end{split} 
\end{equation}
We have the following estimates on the commutators appearing on the right hand side: 
\begin{equation}
\label{eq:estcom}
\begin{gathered}
\|[H_p,\partial^\alpha ]u\|_{L^2}\leq C\sum_{1\leq |\beta| \leq |\alpha|}\|\partial^\beta u\|_{L^2},\qquad \|[Q_1,\partial^\alpha ]u\|_{L^2}\leq C\sum_{0\leq |\beta|\leq |\alpha|-1}\| \partial^\beta u\|_{L^2},\\
\|[Q_2,\partial^\alpha ]u\|_{L^2}\leq C\sum_{1\leq |\beta|\leq |\alpha|}\|\partial^\beta u\|_{L^2},\qquad \|[Q_3,\partial^\alpha ]u\|_{L^2}\leq C\sum_{1\leq |\beta|\leq |\alpha|+1}\|\partial^\beta u\|_{L^2}. 
\end{gathered}
\end{equation}
It is important here that in the estimates {\em not} involving $ Q_1 $, we have $ |\beta | \geq 1 $ on the right hand sides. To obtain the estimate on commutators with $Q_2$, we use assumption~\eqref{eq:tech}.

Applying ~\eqref{e:basicEnergy} to \eqref{eq:Qal} and using \eqref{eq:estcom} we obtain
\begin{align*}
&\sum_{|\alpha|=k+1}\|\partial^\alpha u(T)\|_{L^2}^2+c\gamma h \int_0^T \sum_{|\alpha|=k+1} \|\nabla \partial^\alpha u\|_{L^2}^2dt \\
& \ \ \  \leq C\int_0^T\sum_{|\alpha|=k+1}\Big(\sum_{1\leq |\beta | \leq k+1}\|\partial^\beta u\|_{L^2}+ \gamma\|u\|_{L^2}+\gamma h \sum_{1\leq |\beta'|\leq k+2}\|\partial^{\beta'} u\|_{L^2}\Big)\|\partial^\alpha u\|_{L^2}dt\\
& \ \ \ \ \ \ \ \ \ \ \ \ \ \ \ \ +\sum_{|\alpha|=k+1}\|\partial^\alpha u_0\|_{L^2}^2\\
& \ \ \ \leq C\int_0^T\sum_{|\alpha|=k+1}\Big(\sum_{1\leq |\beta| \leq k+1}\|\partial^\beta u\|_{L^2}+\gamma h\sum_{ |\beta'|= k+2}\|\partial^{\beta'} u\|_{L^2}\Big)\|\partial^\alpha u\|_{L^2}dt\\
& \ \ \ \ \ \ \ \ \ \ \ \ \ \ \ \   +\sum_{|\alpha|=k+1}\|\partial^\alpha u_0\|_{L^2}^2 +CT\gamma \|u_0\|^2_{L^2}.
\end{align*}
 Young's inequality ($ 2 ab \leq \delta^{-1} a^2 + \delta b^2 $) allows us to move the highest order terms 
 from the right hand side to the left hand side and that gives 
\begin{align*}
&\sum_{|\alpha|=k+1}\|\partial^\alpha u(T)\|_{L^2}^2+c\gamma h \int_0^T \sum_{|\alpha|=k+1} \|\nabla \partial^\alpha u\|_{L^2}^2dt \\
&\leq C\int_0^T\sum_{1\leq |\beta| \leq k+1}\|\partial^\beta u\|^2_{L^2}dt+\sum_{|\alpha|=k+1}\|\partial^\alpha u_0\|_{L^2}^2 +CT\gamma \|u_0\|^2_{L^2}.
\end{align*}
We now use the inductive hypothesis \eqref{e:induction} (with $ \varepsilon = \sqrt{ \gamma h } \leq 1 $) to obtain
\begin{align*}
&\sum_{|\alpha|=k+1}\|(\varepsilon \partial)^\alpha u(T)\|_{L^2}^2+c \int_0^T \sum_{|\beta|=k+2} \|(\varepsilon  \partial)^\beta u\|_{L^2}^2dt \\
&\ \ \ \leq C\sum_{1\leq |\beta| \leq k+1} \varepsilon^{k+1-|\beta|}\int_0^T \|(\varepsilon \partial)^{\beta} u\|^2_{L^2}dt +\sum_{|\alpha|=k+1}\|(\varepsilon\partial)^\alpha u_0\|_{L^2}^2\\
&\ \ \ \qquad+CT\gamma \varepsilon^{k+1}\|u_0\|^2_{L^2}.\\
&\ \ \ \leq C\Big(\sum_{1\leq| \beta |\leq k+1}\varepsilon^{k+1-|\beta|}(\sum_{|\alpha|\leq|\beta| }\|(\varepsilon \partial)^{\alpha}u_0\|_{L^2}^2+C T\gamma (\gamma h)\|u_0\|_{L^2}^2) \\
&\ \ \ \qquad +\sum_{|\alpha|=k+1}\|(\varepsilon\partial)^\alpha u_0\|_{L^2}^2+CT\gamma \varepsilon^{k+1}\|u_0\|^2_{L^2}\\
&\ \ \ \leq C\sum_{|\alpha|\leq k+1}\|(\varepsilon\partial)^\alpha u_0\|_{L^2}^2 + {C} T\gamma \varepsilon \|u_0\|^2_{L^2}.
\end{align*}
Combined with the inductive hypothesis this shows that~\eqref{e:induction} holds with $k$ replaced by $k+1$.

Returning to \eqref{eq:conM0} we see that \eqref{e:induction} gives \eqref{eq:estUg}. When 
$ \sum_j \{ \bar \ell_j, \ell_j \} \equiv 0 $ then we can take $ M_0  $ and $ Q_2 \equiv 0 $ in the proof and that
gives \eqref{eq:estU0} (note that in this case $Q_1$ vanishes and hence the last term on the right hand side of \eqref{e:induction} does not appear).
\end{proof}

\section{Agreement of quantum and classical dynamics}
\label{s:aqd}

In this section we obtain an accurate approximation to the solution of the Lindblad master equation
which is a far reaching strengthening of Theorem \ref{t:1} in \S \ref{s:intr}. 
\begin{theo}
\label{t:2}
Suppose that $ \mathcal L $ is given by \eqref{e:Lindblad}, assumptions \eqref{eq:assa}, \eqref{eq:nondeg}, and \eqref{eq:tech} hold, 
$  h^{2 \rho - 1 }\leq \gamma \leq h^{-1} $ for some $ 0 \leq \rho \leq \frac23 $. 
There is $C_0>0$ such that if $A(t)$ satisfies (in the notation of \S \ref{s:sym})
\begin{equation*}
\partial_t A ( t) = \mathcal L A ( t ) , \  \  \ A(0) = \Op (a_0 ) , \ \  a_0 \in S^{L^2}_{\rho} ,  
\end{equation*}
then, for every $ N $ there exist $C_N>0$ and $ a ( t ) \in S^{L^2}_{\rho} $ such that 
\begin{equation}
\label{eq:t2}
\begin{gathered}
\|A(t)-\Op(a(t))\|_{\operatorname{\mathscr{L}_2}}\leq C_Ne^{(M_0 + C_0 h ) \gamma t} {(1+\gamma)^{N+1}}
( 1 +  \gamma^{\frac32} h^{\frac12}  t)^{\frac{N+2}2} 
t^{N+1}h^{(2- 3 \rho)(N+1)}, \\ 
 a(t)-U(t)a_0 \in e^{M_0\gamma t}t h^{(2 -3 \rho) }(1+\gamma)
 (1+ t\gamma^{\frac 32} h^{\frac12}  )  S^{L^2}_{\rho}, 
\end{gathered}
\end{equation}
where $U(t) $ was defined by~\eqref{e:defU}. 

If $ \sum_j \{ \ell_j , \bar \ell_j \} \equiv 0 $, that is there is no friction \eqref{eq:fric}, then 
\begin{equation}
\label{eq:t20}
\begin{gathered}
\|A(t)-\Op(a(t))\|_{\operatorname{\mathscr{L}_2}}\leq C_N e^{C_0{h^2}\gamma t}
(1+\gamma)^{N+1}t^{N+1}h^{(2- 3 \rho)(N+1)}, \\ 
 a(t)-U(t)a_0 \in  h^{(2 -3 \rho) }t(1+\gamma)S^{L^2}_{\rho}, 
\end{gathered}
\end{equation}
\end{theo}

\begin{proof}[Proof of Theorem \ref{t:1Gauss} assuming Theorem \ref{t:2}] 
Let $a_0\in S^{L_2}_{1/2}$. Then  observe that by Proposition~\ref{p:HY} and Lemma~\ref{cl2qu}, together with the fact that for $a_0\in S^{L_2}_{1/2} $, $\|\Op(a_0)\|_{\mathscr{L}_2}\leq C$, 
$$
\|A(t)\|_{\mathscr{L}_2}\leq Ce^{(M_0+C_0h)\gamma t}.
$$
Next, using Proposition~\ref{p:advect}
$$
\|\Op(U(t)a_0)\|_{\mathscr{L}_2}\leq Ce^{M_0\gamma t}. 
$$ 
Therefore, since our estimates are trivially valid when $t(\gamma +\gamma^{-\frac{3}{2}})h^{\frac12}$, we may assume without loss of generality that $t(\gamma +\gamma^{-\frac{3}{2}})h^{\frac12}\leq 1$.

We now consider two cases: $\gamma=h^{2\rho-1}$ for some $\rho\geq \frac12 $ and $\rho=\frac{1}{2}$ with $\gamma \geq 1$. Observe that when $\gamma=h^{2\rho-1}$ for some $\rho\geq \frac{1}{2}$, then, using that $\gamma \leq 1$, the estimate~\eqref{eq:t2} reads
\begin{equation}
\label{eq:t3}
\begin{gathered}
\|A(t)-\Op(a(t))\|_{\operatorname{\mathscr{L}_2}}\leq C_Ne^{(M_0 + C_0 h ) \gamma t} (th^{\frac12}\gamma^{-\frac32})^{N+1} (1+ t \gamma^{\frac32} h^{\frac12}   )^{\frac{N+2}2} 
\\ 
 a(t)-U(t)a_{\lambda_h} \in e^{(M_0 + C_0 h ) \gamma t } (th^{\frac12}\gamma^{-\frac32})  (1+ t \gamma^{\frac32} h^{\frac12}   )S^{L^2}_{\rho}.
\end{gathered}
\end{equation}
Hence, since $t\gamma^{-\frac32}h^{\frac12}\leq 1$, the estimate~\eqref{eq:t21} follows in this case.
On the other hand, when $\gamma \geq 1$ and we set $\rho=\frac 12$, the estimate~\eqref{eq:t2} reads
\begin{equation}
\label{eq:t4}
\begin{gathered}
\|A(t)-\Op(a(t))\|_{\operatorname{\mathscr{L}_2}}\leq C_Ne^{(M_0 + C_0 h ) \gamma t} {t h^{\frac12} \gamma^{N+1}}t^{N+1}h^{\frac{1}{2}(N+1)} (1+ t \gamma^{\frac32} h^{\frac12}   )^{\frac{N+2}2}  ,
 \\ 
 a(t)-U(t)a_{\lambda_h} \in e^{(M_0 + C_0 h ) \gamma t}  t h^{\frac{1}{2} }\gamma  (1+ t \gamma^{\frac32} h^{\frac12}   )  S^{L^2}_{\frac{1}{2}}, 
\end{gathered}
\end{equation}
Taking $N=0$ and using $t\gamma h^{\frac12}\leq 1$, we obtain
$$
\|A(t)-\Op(U(t)a_{\lambda_h})\|_{\operatorname{\mathscr{L}_2}}\leq C e^{(M_0 + C_0 h ) \gamma t } th^{\frac{1}{2} }(1+ t \gamma^{\frac32} h^{\frac12}   )  . $$
\end{proof}

\begin{proof}[Proof of Theorem \ref{t:2}]
Define $a_0(t):= U(t)a_0$, with $U$ given in~\eqref{e:defU}. Then, recalling that $ \varepsilon =
\sqrt { \gamma h/2 } $,  $ h^\rho \leq \varepsilon \leq 1 $, 
\eqref{eq:estUg} gives 
$$a_0(t)\in e^{M_0\gamma t}(1+ t\gamma^{\frac32} h^{\frac12}  )^{\frac12}   S_{\rho}^{L^2} ,  \ \ \text{uniformly in $t\geq 0$.} $$
 Set $A_0(t):=\Op(a_0(t))$. Then, using Lemma~\ref{l:compose} as in the derivation of \eqref{eq:main}, we obtain
$$
\begin{aligned}
\dot A_0(t)&= \Op(\dot a_0(t))
= \Op(Qa_0(t))
= \mathcal{L}A_{{0}}(t)+\Op(e_1(t)),
\end{aligned}
$$
where 
$$ e_1(t)\in h^{2-3 \rho} (1+\gamma)e^{M_0\gamma t}
(1+ t\gamma^{\frac32} h^{\frac12}  )^{\frac12}   
S^{L^2}_{\rho}.$$
Suppose, by induction that we have found 
\[ a_j(t)\in e^{M_0\gamma t}t^j h^{(2 -3 \rho) j}(1+\gamma)^j
(1+ t\gamma^{\frac32} h^{\frac12}  )^{\frac{j+1}2}
S^{L^2}_{\rho}, \ \ \ \ j=0,\dots, N-1 \]
 such that, with $A_{N-1}:=\sum_{j=0}^{N-1}\Op(a_j(t))$, we have
\begin{equation*}
\dot A_{N-1}= \mathcal{L}A_{N-1}(t)+\Op(e_N(t)),
\end{equation*}
with
$$e_N(t)\in e^{M_0\gamma t} t^{N-1} 
(1+ t\gamma^{\frac32} h^{\frac12}  )^{\frac{N}2}
(1+\gamma)^N S^{L^2}_{\rho}. $$
Using $ e_N $ we define
$$
a_N(t)= -\int_0^t U (t-s)e_N(s)ds.
$$
Since, 
\[ \begin{split} \int_0^t s^{N-1} ( 1 + \gamma^{\frac32} h^{\frac12}  s )^{\frac N 2 } ( 1 + \gamma^{\frac32} h^{\frac12}  ( t - s) )^{\frac12} ds 
& \leq  ( 1 +  \gamma^{\frac32} h^{\frac12}  t)^{\frac{N+1}2}  t^N / ( N + 1 ) , 
\end{split} \]
$$
a_N(t)\in e^{M_0\gamma t}t^N h^{(2 - 3 \rho )N}  
 ( 1 +  \gamma^{\frac32} h^{\frac12}  t)^{\frac{N+1}2} 
 S^{L^2}_{\rho},
$$
and hence, with $A_N(t)=A_{N-1}(t)+\Op(a_N(t))$, we have
$$
\begin{aligned}
\dot A_N(t)&=\mathcal{L}A_{N-1}(t)+\Op(e_N(t))+\Op(\dot a_N(t))\\
&=\mathcal{L}A_{N-1}(t)+\Op(Q a_N(t))\\
&= \mathcal{L}A_N(t)+\Op(e_{N+1}(t)),
\end{aligned}
$$
with 
$$e_{N+1}\in e^{M_0\gamma t}t^{N}(1+\gamma)^{N+1}h^{(2 - 3 \rho ) (N+1)} 
( 1 +  \gamma^{\frac32} h^{\frac12}  t)^{\frac{N+1}2}
S^{L^2}_{\rho}. $$  Note that in the last line we used Lemma~\ref{l:compose} to obtain the estimates on $e_{N+1}$. This gives $ a = \sum_{j \leq N} a_j $.

We next use Proposition \ref{p:HY} and Lemma \ref{cl2qu} to compare $A(t)$ and $A_N(t)$:
\begin{align*}
 \big\| A(t)-A_N(t)\big\|_{\operatorname{\mathscr{L}_2}} & \leq \int_0^t \big\|e^{(t-s)\mathcal{L}}\Op(e_{N+1}(s))\big\|_{\operatorname{\mathscr{L}_2}}ds\\
&  \leq  C_Ne^{(M_0 + C_0 h ) \gamma t} (1+\gamma)^{N+1}t^{N+1}h^{(2- 3 \rho)(N+1)}
( 1 +  \gamma^{\frac32} h^{\frac12}  t)^{\frac{N+2}2} .
\end{align*}
The stronger version under the assumption that $ \sum_j \{ \ell_j, \bar \ell_j \} = 0 $ follows from the stronger estimates in {\eqref{e:MM2} and} \eqref{eq:estU0}.
\end{proof}

\section{Bounds on the Hilbert--Schmidt norm of Lindblad Evolution}
\label{s:lindbladBound}

In this section we use Theorem~\ref{t:2} to give lower bounds on the Hilbert--Schmidt norm of the Lindblad evolution in the case of Example \eqref{eq:exam}. We will consider two special cases: quadratic Hamiltonians and confining Hamiltonians.

\subsection{Quadratic Hamiltonians}

We first show that when $p$ is quadratic and the initial condition is Gaussian, it is possible to solve~\eqref{eq:FP} exactly. For the purposes of this section, we let $B$ be a real, symmetric, matrix and suppose that 
\begin{equation}
\label{e:quadraticP}
p(x,\xi):= \frac{1}{2}\langle B\rho,\rho\rangle,\qquad \rho:=\begin{pmatrix}x\\\xi\end{pmatrix}.
\end{equation}
We also use the notation $\Omega:=\begin{pmatrix}0&I\\-I&0\end{pmatrix}$ for the standard symplectic form.
\begin{lemm}
\label{l:quadraticP}
Let $p$ as in~\eqref{e:quadraticP}, $A_0$ be a real, symmteric, positive definite matrix. $\rho_0\in \mathbb{R}^{2n}$, and $u$ solve
\begin{equation}
\label{e:quadraticFP}
(\partial_t -H_p -\tfrac{1}{2}\gamma h\Delta_{x,\xi})u=0,\qquad u(0)=\exp\Big(-\frac{1}{2h}\langle A_0(\rho-\rho_0,\rho-\rho_0)\rangle\Big).
\end{equation}
Then,
$$
u(t)= e^{f(t)}\exp\Big(-\frac{1}{2h}\langle A(t)(\rho-\rho_0(t),\rho-\rho_0(t))\rangle\Big),
$$
where $A(0)=A_0$, $\rho_0(0)=\rho_0$, $f(0)=0$, and
\begin{equation}
\label{e:parameterEquations}
\begin{gathered}\dot \rho_0(t)=-\Omega B\rho_0\\ \dot A(t)=(A+A^t)\Omega B-\tfrac \gamma 4 (A+A^t)^2\qquad \dot f(t)=-\tfrac\gamma 4 \tr (A(t)+A^t(t)).
\end{gathered}
\end{equation}
\end{lemm}
\begin{proof}
We compute
\begin{align*}
\partial_t u&=u\Big(\dot f -\frac{1}{2h}\langle \dot A(\rho-\rho_0),\rho-\rho_0\rangle +\frac{1}{2h}\langle A\dot\rho_0,\rho-\rho_0\rangle+\frac{1}{2h}\langle A(\rho-\rho_0),\dot\rho_0\rangle\Big),\\
&=u\Big(\dot f -\frac{1}{2h}\langle \dot A(\rho-\rho_0),\rho-\rho_0\rangle +\frac{1}{2h}\langle (A+A^t)\dot\rho_0,\rho-\rho_0\rangle\Big),\\
H_p&u= u\Big(-\frac{1}{2h}\langle A\Omega B \rho, (\rho-\rho_0)\rangle -\frac{1}{2h}\langle A(\rho-\rho_0), \Omega B \rho\rangle  \Big)\\
&=u\Big(-\frac{1}{2h}\langle (A+A^t)\Omega B \rho_0, \rho-\rho_0\rangle -\frac{1}{2h}\langle (A+A^t)\Omega B(\rho-\rho_0),\rho-\rho_0\rangle\Big) \\
\gamma h \Delta u&= u\gamma h\Big(\frac{1}{4h^2}\langle (A+A^t)(\rho-\rho_0),(A+A^t)(\rho-\rho_0)\rangle -\frac{1}{2h}2\tr (A+A^t)\Big)\\
&=u\gamma h\Big(\frac{1}{4h^2}\langle (A+A^t)^2(\rho-\rho_0),\rho-\rho_0\rangle -\frac{1}{2h}\tr (A+A^t)\Big).
\end{align*}
Then, using that $u$ satisfies~\eqref{e:quadraticFP}, and equating terms by homogeneity in $\rho-\rho_0$, we obtain~\eqref{e:parameterEquations}.
\end{proof}

\noindent 
{\bf Remark.} As an easy corollary of Lemma~\ref{l:quadraticP}, we see that if $A_0=2I$, $B=0$, then 
$$
u(t)= \frac{1}{(1+2\gamma t)^{n}}e^{-\frac{1}{h(1+2\gamma t)}\langle \rho-\rho_0,\rho-\rho_0\rangle },\qquad \|u(t)\|_{L^2}=\Big(\frac{ \pi h}{2(1+2\gamma t)}\Big)^{\frac{n}{2}}.
$$
When $p=0$, the Lindblad evolution is exactly given by the Fokker--Planck evolution, and thus the solution $A(t)$ to~\eqref{eq:evolGauss}, satisfies
$$
\|A(t)\|_{\mathscr{L}_2}= \Big(\frac{ 1}{1+2\gamma t}\Big)^{\frac{n}{2}}.
$$

\begin{figure}
\includegraphics[width=12cm]{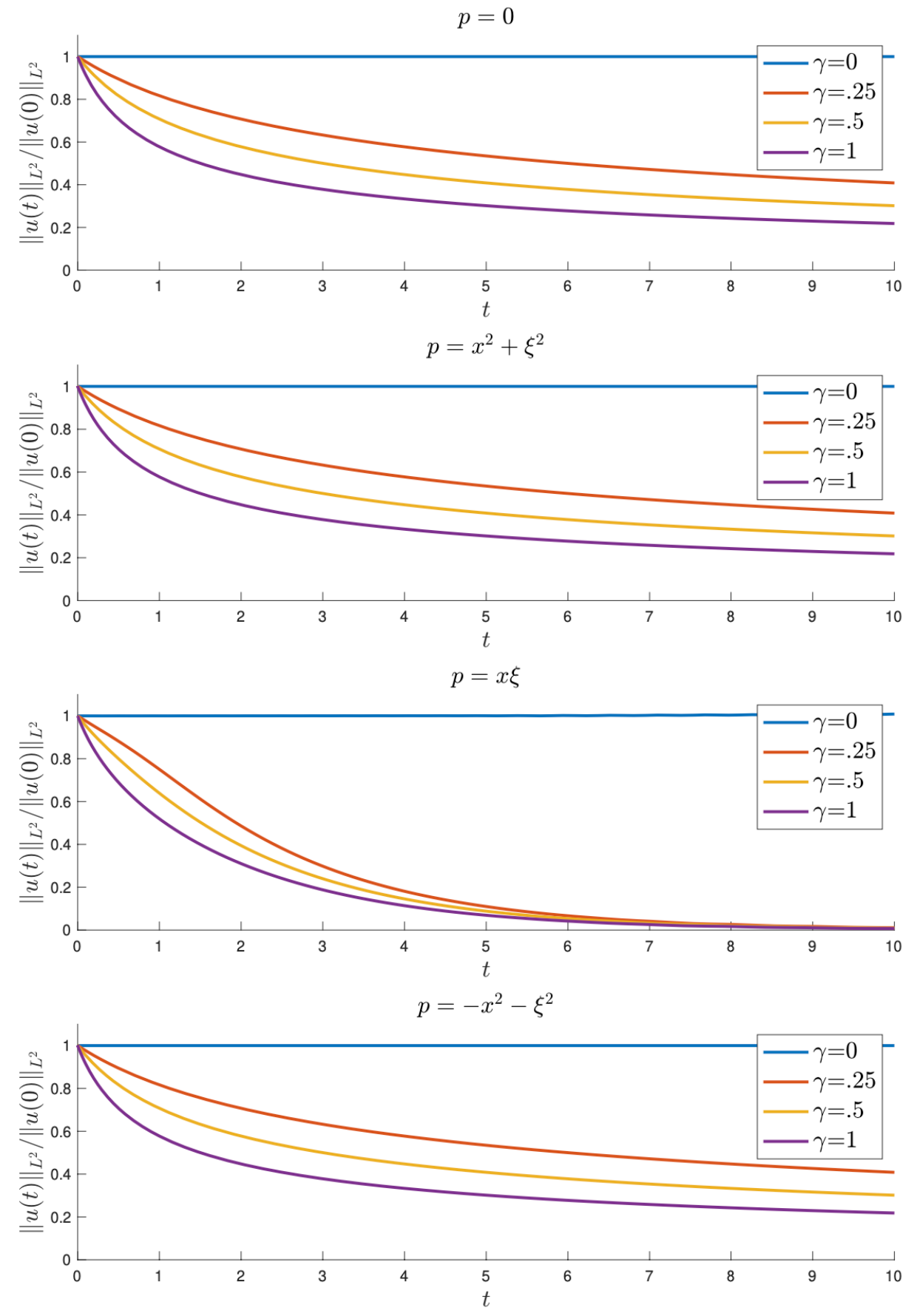}
\caption{The $\|u(t)\|_{L^2}/\|u(0)\|_{L^2}$ for the solution to~\eqref{e:quadraticFP} with $A=2I$ for various choices of $B$ in~\eqref{e:quadraticP}.}
\end{figure}

\subsection{Confining Hamiltonians}
We next consider the case where the Hamiltonian $p$ is confining. We assume in this subsection that there are $c,m>0$ and $M\in \mathbb{R}$ such that
\begin{equation}
\label{e:pConfines}
\begin{gathered}
\Delta p \geq 0, \qquad p\geq c|\nabla p|^2,\qquad \text{ on }|p|\geq M, \\
|p|\geq c\langle (x,\xi)\rangle^m-{1}/{c},\qquad (x,\xi)\in \mathbb{R}^{2n}.
\end{gathered}
\end{equation}
We show in Proposition~\ref{p:confined} that under this assumption, for sufficiently dispersed initial data, the Hilbert--Schmidt norm of the Lindblad evolution is bounded from below for long times.

We will use a maximum principle type argument to show that, in the presence of a confining Hamiltonian, the Fokker--Planck evolution remains well confined in $L^1$ for long times. We start by constructing an effective barrier with which to apply the maximum principle.
\begin{lemm}
\label{l:barrier}
Suppose that~\eqref{e:pConfines} holds. Then for any $f\in C^\infty(\mathbb{R})$, such that $\supp f'\subset (M,\infty)$, $f,f''\geq 0$, $f''\in C_c^\infty$, there is $C>0$ such that, defining
\begin{equation}
\label{eq:defgt} 
g(t):=\frac{g(0)}{C\gamma h t g(0)+1},\qquad v(t,x,\xi):= \exp (-g(t)f(p(x,\xi))), 
\end{equation}
we have 
\begin{equation}
\label{eq:supers}
(\partial_t -H_p -\tfrac12\gamma h \Delta)v\geq 0, \ \ t \geq 0 .
\end{equation}
\end{lemm}
\begin{proof}
We calculate
\begin{equation}
\label{eq:calcsup}
\begin{split}
v_t-\gamma h\Delta v&= [-g'(t)f(p)+\gamma h ( (-g^2[f'(p)]^2+gf''(p))|\nabla p|^2 +g  f'(p)\Delta p )]v\\
&\geq [-g'(t)f(p)+\gamma h (-g^2[f'(p)]^2|\nabla p|^2)]v\\
&\geq [-g'(t)f(p)+C\gamma h (-g^2[f'(p)]^2p)]v.
\end{split}
\end{equation}
Since $f\geq 0$ and $|f''(p)|\leq C$, 
$$
|f'(p)|\leq C\sqrt{f(p)}.
$$
Hence, {using that $f''\in C_c^\infty$, we have that there is $p_0$ such that  $f'(p)=L$ for $p\geq p_0$ large enough and  $ f'(p) = 0 $ for $ p < M $, }
$$
f(p)\geq {\max(c[f'(p)]^2, L(p-p_0)+f(p_0))}\geq c [f'(p)]^2p.
$$
Thus, for $C>0$ large enough, and $ g $ given in \eqref{eq:defgt} (so that 
$
 -g'(t)-C\gamma h g^2 =  0
$)
the last inequality in \eqref{eq:calcsup} gives \eqref{eq:supers}.
\end{proof}

In the next lemma, we show that, given some apriori assumptions on the Fokker--Planck solution, we are able to confine the majority of its $L^1$ mass to a bounded set. As a consequence, we obtain that the $L^2$ norm cannot decay for long times.
\begin{lemm}
\label{l:maximum}
Suppose that~\eqref{e:pConfines} holds.  
Then, $ \forall \, R_0, c_1 > 0  \, \exists\, R_1 >0 \, \forall \, \delta, N> 0 \, \exists\, C_{N, \delta} $ such that if
\begin{equation}
\label{e:upperBound}
\begin{gathered}
0\leq a_0(x,\xi)\leq e^{-c_1\langle (x,\xi)\rangle^2/h},\qquad |(x,\xi)|\geq R_0,\ \ \ \ 0\leq a_0\leq 1
\end{gathered}
\end{equation}
then for any solution $a(t) \in  L^\infty(\{t>0\}\times \mathbb{R}^{2n}_{(x,\xi)}) $ to 
$$
(\partial_t -H_p-\tfrac12\gamma h \Delta)a=0,\qquad a(0)=a_0,
$$
we have
\begin{equation}
\label{e:uIsConfined}
\|a(t)\|_{L^1(\mathbb{R}^d\setminus B(0,R_1))}=C_{N,\delta}h^N\qquad 0\leq  t\leq h^{-1+\delta}\gamma^{-1}.
\end{equation}
In particular, 
\begin{equation}
\label{e:uL2}
\|a_0\|_{L^1}-C_{N,\delta}h^N\leq C\|a(t)\|_{L^2},\qquad 
\end{equation}
\end{lemm}
\begin{proof}
Let  $M_0\geq 0$ such that
\[ p\geq M_0 \ \ \Longrightarrow \  \ {c}\langle (x,\xi)\rangle^m/2 \leq p \leq C_1 \langle (x,\xi)\rangle^2 
\  \text{ and } \ |(x,\xi)|\geq R_0. \]
  Then, let $\psi\in C_c^\infty((M_0,M_0+1);[0,\infty))$ with $\int \psi =1$ and define $f(x):=\int_0^x\int_0^s\psi(t)dtds$ so that $f''(x)=\psi(x)$,  $\supp f\subset (M_0,\infty)$ with $f,f''\geq 0$. Let $R_1\geq 0$ such that $p\geq M_0+1$ on   $|(x,\xi)|\geq R_1$. 

Since $f(p(x,\xi )) = 0 $ on $|(x,\xi)|\leq R_0$, $f(p)\leq Cp\leq C\langle(x,\xi)\rangle^2$, and $a_0$ satisfies~\eqref{e:upperBound} there exists $c_0>0$ such that 
$$
\exp(-c_0f(p(x,\xi))/h)\geq a_0.
$$
We now apply Lemma~\ref{l:barrier} with $ g(0) = c_0/h $: for $ v $ in \eqref{eq:defgt} 
\[  ( \partial_t -H_p - \tfrac 12\gamma h\Delta ) ( v - a ) \geq 0 ,  \ \ \ \  t \geq  0. \]
The maximum principle~\cite[Theorem 1]{cos} then shows that $ 0\leq a\leq v$ and consequently, {using that $f(p)\geq cp\geq c\langle (x,\xi)\rangle^m$ on $\mathbb{R}^{2n}\setminus B(0,R_1)$,}
\[ 
\begin{split}
\|a(t)\|_{L^1(\mathbb{R}^{2n}\setminus B(0,R_1))} & \leq \|v(t)\|_{L^1(\mathbb{R}^{2n}\setminus B(0,R_1)}
\\ & \leq \int_{\mathbb{R}^{2n}\setminus B(0,R_1)} e^{  - c_0 \langle (x,\xi)\rangle^m  / (C\gamma t + h ) } dxd\xi \leq C e^{ -c_0 /(C\gamma t h + h)  } , 
\end{split} \]
from which~\eqref{e:uIsConfined} follows.

To obtain~\eqref{e:uL2}, observe that 
$$
\partial_t \int a dxd\xi = \int (H_p+\tfrac 12\gamma h\Delta )a dxd\xi =0.
$$
Hence, 
\begin{align*}
\|a_0\|_{L^1}=\|a(t)\|_{L^1}&\leq \|a(t)\|_{L^1(B(0,R_1))}+C_{N,\delta}h^N\\
&\leq CR_1^{n}\|a(t)\|_{L^2(B(0,R_1))}+C_{N,\delta}h^N\\
&\leq CR_1^n\|a(t)\|_{L^2}+C_{N,\delta}h^N.
\end{align*}
\end{proof}

Finally, we show that the Hilbert--Schmidt norm of the Lindblad evolution with a confining Hamiltonian can, in many cases, be effectively controlled from below.

\begin{prop}
\label{p:confined}
Suppose that $ \mathcal L $ is given by \eqref{e:Lindblad}, assumptions \eqref{eq:assa}, 
and~\eqref{e:pConfines} hold, that $\ell_j$'s are as in \eqref{eq:exam}.
If $  h^{2 \rho - 1 }\leq \gamma \leq h^{-1} $ for some $ 0 \leq \rho \leq \frac23 $
then 
\[ \exists\, C_0 >  0 \, \forall \, c_1 > 0, R_0 > 0, a_0 \in  S^{L^2}_\rho
\text{ with $a_0/\|a_0\|_{L^\infty}$ satisfying~\eqref{e:upperBound}, }N, \delta > 0 \, \exists \, C > 0 \]
such that 
 if $A(t)$ satisfies (in the notation of \S \ref{s:sym})
\begin{equation*}
\partial_t A ( t) = \mathcal L A ( t ) , \  \  \ A(0) = \Op (a_0 ) ,  
\end{equation*}
then, for $0\leq t\leq h^{-1+\delta}\gamma^{-1}$,
$$
(2\pi h)^{\frac{n}{2}}\tr A(0)-Ch^N\|a_0\|_{L^\infty}-C e^{C_0{h^2}\gamma t}
(1+\gamma)th^{(2- 3 \rho)}\leq C\|A(t)\|_{\mathscr{L}_2}.
$$
\end{prop}
\begin{proof}
By Theorem~\ref{t:2}, 
\begin{equation}
\label{e:approxError}
\|A(t)-\Op(a(t))\|_{\mathscr{L}_2}\leq C e^{C_0{h^2}\gamma t}
(1+\gamma)th^{(2- 3 \rho)}
\end{equation}
where $a(t)=U(t)a_0$. In particular, $a(t)$ satisfies
$$
(\partial_t -H_p- \tfrac12 \gamma h \Delta)a=0,\qquad a(0)=0,
$$
and $a(t)\in S_{\rho}^{L^2}$. Since $a(t)\in S_{\rho}^{L_2}$, {uniformly in $t>0$}, by the Sobolev embedding, {$a(t)\in L^\infty(\{t>0\}\times \mathbb{R}^{2n}_{(x,\xi)})$} and hence applying Lemma~\ref{l:maximum} then yields
\begin{equation}
\label{e:lowerTrace}
\begin{aligned}
(2\pi h)^{n}\tr A(0)-Ch^N\|a_0\|_{L^\infty}&=\|a_0\|_{L^1}-Ch^N\|a_0\|_{L^\infty}\\
&\leq C\|a(t)\|_{L^2}=C(2\pi h)^{\frac{n}{2}}\|\Op(a(t))\|_{\mathscr{L}_2}. 
\end{aligned}
\end{equation}
The Proposition now follows from combining~\eqref{e:approxError} and~\eqref{e:lowerTrace}.
\end{proof}

\appendix
\section{operators with quadratic symbol growth}

\label{a:app}

We start with the composition formula of operators quantizing symbols in $ S_{(k)} $ where
that space was defined in \eqref{eq:defSk}. 

\begin{prop}
\label{p:comp}
Suppose that $ a_j \in S_{(k_j)} $, $ j =1,2$. Then $ \Op (a_1 ) \Op (a_2 ) = \Op (b )$, where
for any $ N \geq \max(k_1, k_2 ) $, 
\begin{equation}
\label{eq:comp} b ( x, \xi , h )  - \sum_{ \ell=0}^{N-1} \frac1 {\ell!}  \left(\frac{h}{2i}
\sigma ( D_x, D_\xi , D_y, D_\eta ) \right)^\ell a_1 ( x, \xi) a_2 ( y, \eta ) |_{ x=y, \xi = \eta } 
\in h^N S_0 , 
\end{equation}
where $ \sigma $ is the standard symplectic form on $ \mathbb R^{2n} $. 
\end{prop}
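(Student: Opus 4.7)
The plan is to reduce to the standard Moyal/Weyl composition formula and then identify precisely when the Taylor remainder of the exponential symbol lies in $S_0$. Since $a_j \in S_{(k_j)}$ and derivatives of order $\geq k_j$ are bounded, integrating gives $a_j \in S(\langle z\rangle^{k_j})$ with respect to the constant metric $dz^2$. Consequently, the standard composition formula for Weyl quantization (see~\cite[Theorem 4.18]{z12}) applies with order function $m_1m_2 = \langle z\rangle^{k_1+k_2}$, yielding
\[
\Op(a_1)\Op(a_2) = \Op(b), \qquad b(z) = e^{\frac{ih}{2}\sigma(D_z,D_w)}\bigl(a_1(z)a_2(w)\bigr)\big|_{z=w},
\]
interpreted as an oscillatory integral in $\mathscr{S}'$.

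Next, I would apply Taylor's formula with integral remainder to the exponential $e^{\frac{ih}{2}\sigma(D_z,D_w)}$, obtaining
\[
b = \sum_{\ell=0}^{N-1}\frac{1}{\ell!}\Bigl(\frac{h}{2i}\sigma(D_z,D_w)\Bigr)^{\!\ell}\!(a_1\otimes a_2)\big|_{z=w} + R_N,
\]
with
\[
R_N(z) = \frac{1}{(N-1)!}\int_0^1 (1-t)^{N-1}\, e^{\frac{ith}{2}\sigma(D_z,D_w)}\Bigl(\frac{h}{2i}\sigma(D_z,D_w)\Bigr)^{\!N}\!(a_1\otimes a_2)\Big|_{z=w}\,dt.
\]
The key observation is that each factor of $\sigma(D_z,D_w)$ produces exactly one derivative acting on $a_1$ and one on $a_2$. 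Hence $\sigma^N(a_1\otimes a_2)$ is a finite sum of products $(\partial^{\alpha}a_1)(z)(\partial^{\beta}a_2)(w)$ with $|\alpha|=|\beta|=N$. Since $N\geq \max(k_1,k_2)$, both factors lie in $S(1)$ uniformly, and further differentiations in $(z,w)$ keep them in $S(1)\otimes S(1)$ with seminorms controlled by those of $a_1,a_2$ in $S_{(k_j)}$.

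It remains to show that $e^{\frac{ith}{2}\sigma(D_z,D_w)}$ maps this tensor product class into one whose restriction to the diagonal is in $S_0$, uniformly in $t\in[0,1]$. This is a stationary-phase/Fourier-multiplier statement for the nondegenerate quadratic form $\sigma$, and is exactly the $\rho=0$ case of the argument carried out in Lemma~\ref{l:Q2Tens}—splitting into a compactly supported piece (handled by stationary phase) and a tail (handled by integration by parts using $|\partial\varphi|\gtrsim\langle(z_1,w_1)\rangle$), now with $L^\infty$ seminorms in place of $L^2$ ones. Combining these estimates gives $R_N \in h^N S_0$, which is~\eqref{eq:comp}.

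The main obstacle is purely bookkeeping: verifying that the oscillatory-integral manipulations are legitimate when $a_1,a_2$ have polynomial growth of order $k_1,k_2$. This is handled by regularizing $a_j$ by a cutoff $\chi(\epsilon z)$ and passing to the limit $\epsilon\to 0$, using that all estimates on the remainder $R_N$ (once $N\geq\max(k_1,k_2)$) depend only on the $S_{(k_j)}$-seminorms and are therefore uniform in the cutoff. The explicit lower-order terms $\ell<\max(k_1,k_2)$ in the expansion may themselves still grow polynomially, but they appear on the left-hand side of~\eqref{eq:comp} and require no further analysis.
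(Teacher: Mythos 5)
Your proposal follows essentially the same route as the paper: embed $S_{(k_j)}$ into the standard class $S(m_{k_j})$ with $m_{k_j}(z)=(1+|z|)^{k_j}$, apply the Weyl composition formula of \cite[Theorem 4.18]{z12}, Taylor-expand $e^{ihA(D)}$, and observe that $A(D)^N(a_1\otimes a_2)\in S_0(\mathbb{R}^{4n})$ once $N\geq\max(k_1,k_2)$, so the remainder lies in $h^N S_0$. The only (cosmetic) differences are that the paper simply cites \cite[Theorem 4.17]{z12} for the uniform-in-$t$ boundedness of $e^{ithA(D)}$ on $S(1)$ rather than re-deriving it via a sup-norm variant of Lemma~\ref{l:Q2Tens}, and that the extra cutoff/regularization step you append is not needed since the $S(m)$ calculus already makes the oscillatory integral manipulations legitimate.
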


\noindent
{\bf Remark.} Note that $ b $ in the statement of the proposition is not 
necessarily in an $ S_{(k)} $ class since they are not closed under multiplication.

\begin{proof}
We observe that $ S_{(k)} \subset S ( m_k )$, $ m_k ( x, \xi ) = ( 1 + |x| + |\xi | )^k$. Hence 
\cite[Theorem 4.18]{z12} applies and, writing $ z = ( x, \xi ) $, $ w = ( y , \eta ) $, 
\[ b ( z , h ) = \exp ( i h A ( D ) ) ( a_1 ( x) a_2 ( w ) )  |_{z = w}, \ \ \ 
A ( D_{z,w} ) = - \tfrac12 \sigma ( D_x, D_\xi , D_y, D_\eta ) . \]
By Taylor's formula, 
\[ b  ( z , h ) = \sum_{ \ell=0}^{N-1} \frac{1}{\ell!} ( i h A ( D ) )^\ell ( a_1 ( z ) a_2 ( w) )|_{
z = w} + R_N ( z , h ) \]
where 
\[ R_N ( z, h ) : \frac{1}{(N-1)!} ( 1 - t)^{N-1} e^{ i t h A ( D ) } ( i h A ( D ) )^{N } ( 
( a_1 ( z ) a_2 ( w) )|_{z, w} . \]
For $ N \geq \max(k_1, k_2 ) $, $  A( D)^N a_1 ( z ) a_2 ( w ) \in S_0 ( \mathbb R^{4n}_{z,w}  ) $ and since $ e^{ i h t A ( D ) } : S_0 ( \mathbb R^{4n} )  \to S (  \mathbb R^{4n}) $ (with uniform bounds
for $ 0 \leq t \leq 1 $ -- see \cite[Theorem 4.17]{z12}) we conclude that 
$ R_N \in h^N S_0 (\mathbb R^{2n} ) $ which is \eqref{eq:comp}.
\end{proof}

We now present a general spectral result following the proof of a special case in \cite{hor} (see the example in \cite[\S C.2.2]{z12}):

\begin{prop}
\label{p:spec}
Suppose that $ p ( x, \xi ) \in C^\infty ( \mathbb R^d \times \mathbb R^d ) $ satisfies
\begin{equation}
\label{eq:assp}
 | \partial^\alpha p ( x, \xi ) | \leq C_\alpha ( 1 + |x| + |\xi| ) , \ \ \ |\alpha | \geq 1, 
\end{equation}
and define 
\[  \begin{split}
& N_p u = p^{\rm{w}} ( x, D ) u ,  \ \ \mathcal D ( N_p ) := \mathscr S ( \mathbb R^d ) , \\ 
& M_p u = p^{\rm{w}} ( x, D ) u ,  \ \ \mathcal D ( M_p ) := \{ u \in L^2 ( \mathbb R^d ) : 
 p^{\rm{w}} ( x, D ) u \in L^2 ( \mathbb R^d ) \} , 
 \end{split} 
 \]
 where in the case of $ u \in L^2 ( \mathbb R^d ) \subset \mathscr S' ( \mathbb R^d ) $
 we consider $  p^{\rm{w}} ( x, D ) u  \in \mathscr S' ( \mathbb R^d ) $. Then 
 $ M_p $ is closed and 
 \begin{equation}
 \label{eq:closure}
 M_p = \overline{N_p } , \ \ \  M_p^* = M_{\bar p } . 
 \end{equation}
  \end{prop}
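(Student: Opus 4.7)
The plan is to organize the argument in three stages: closedness of $M_p$, identification of the adjoint via distributional pairings (both soft consequences of the symbol calculus), and finally a phase-space cutoff approximation showing $\mathscr S$ is a core for $M_p$ (the technical heart).

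First I would observe that \eqref{eq:assp} gives $p\in S(m)$ with $m(z)=1+|z|$, a tempered weight, so $p^{\mathrm{w}}$ is continuous both $\mathscr S\to\mathscr S$ and $\mathscr S'\to\mathscr S'$. Closedness of $M_p$ then follows immediately from $\mathscr S'$-continuity: if $u_n\to u$ and $p^{\mathrm{w}} u_n\to v$ in $L^2$, then $p^{\mathrm{w}} u=v\in L^2$. To compute the adjoint I would argue that a vector $v\in L^2$ lies in $\mathcal D(N_p^*)$ iff $u\mapsto\langle p^{\mathrm{w}} u,v\rangle=\langle u,\bar p^{\mathrm{w}} v\rangle_{\mathscr S,\mathscr S'}$ extends continuously to $L^2$, i.e.\ iff $\bar p^{\mathrm{w}} v\in L^2$, yielding $N_p^*=M_{\bar p}$. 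The same argument applied with $\bar p$ in place of $p$ gives $N_{\bar p}^*=M_p$. Abstract duality then produces $\overline{N_p}=N_p^{**}=M_{\bar p}^*$, so \eqref{eq:closure} is reduced to showing that $\mathscr S$ is a core for $M_p$; the analogous statement with $\bar p$ then gives $M_p^*=\overline{N_{\bar p}}=M_{\bar p}$.

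For the core property I would use a phase-space cutoff: fix $\chi\in C_c^\infty(\mathbb R^{2d})$ with $\chi=1$ near $0$ and set $\chi_\epsilon(z):=\chi(\epsilon z)$. The family $\{\chi_\epsilon\}$ is bounded in $S_0$, so $\chi_\epsilon^{\mathrm{w}}$ is uniformly $L^2$-bounded; compactness of $\supp\chi_\epsilon$ in phase space makes $\chi_\epsilon^{\mathrm{w}}:L^2\to\mathscr S$; and $\chi_\epsilon^{\mathrm{w}}\to I$ strongly on $L^2$. Setting $u_\epsilon:=\chi_\epsilon^{\mathrm{w}} u\in\mathscr S$ for $u\in\mathcal D(M_p)$ gives $u_\epsilon\to u$ in $L^2$, and the identity
\[
p^{\mathrm{w}} u_\epsilon=\chi_\epsilon^{\mathrm{w}}(p^{\mathrm{w}} u)+[p^{\mathrm{w}},\chi_\epsilon^{\mathrm{w}}]\,u
\]
reduces everything to showing $[p^{\mathrm{w}},\chi_\epsilon^{\mathrm{w}}]u\to 0$ in $L^2$ for each $u\in\mathcal D(M_p)$.

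The commutator estimate will be the main obstacle. By the composition formula (cf.\ Proposition~\ref{p:comp}) the commutator symbol is an asymptotic sum of odd Moyal brackets, each term of the form $g(\epsilon z)\,r(z)$ with $g\in C_c^\infty$ vanishing at $0$ and $r$ a product of derivatives of $p$. These symbols do \emph{not} tend to zero in $L^\infty$ — on the transition annulus $|z|\sim\epsilon^{-1}$ one has $|\partial p|\sim\epsilon^{-1}$ while $|\partial\chi_\epsilon|\sim\epsilon$, making the bracket only $O(1)$ — but they vanish pointwise with uniform $L^2$-operator bound, so the required strong-operator convergence should follow by reducing to Schwartz test functions and applying dominated convergence in the oscillatory integral, then extending by density using the uniform bound. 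Once this commutator convergence is in place, $u_\epsilon\to u$ in the graph norm of $M_p$, completing the proof of \eqref{eq:closure}.
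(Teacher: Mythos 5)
Your plan follows the paper's strategy quite closely: closedness of $M_p$ from $\mathscr S'$-continuity of $p^{\mathrm w}$, the adjoint identity via the distributional pairing on $\mathscr S$, and the core property via the phase-space cutoff $\chi_\epsilon^{\mathrm w}$ and the commutator $[p^{\mathrm w},\chi_\epsilon^{\mathrm w}]$. You also correctly identify the one genuinely delicate point: the leading (Poisson bracket) term in the commutator symbol is only $O(1)$ in $S(1)$, not $o(1)$, so the commutator does not vanish in operator norm, and one must argue in the strong topology.

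Where you diverge is precisely at that point. You propose to prove $[p^{\mathrm w},\chi_\epsilon^{\mathrm w}]u\to 0$ for $u\in\mathscr S$ by ``dominated convergence in the oscillatory integral,'' and then extend to all of $L^2$ by the uniform operator bound. This is a valid route in principle, but you leave the step for Schwartz $u$ unjustified: one needs something like $\chi_\epsilon^{\mathrm w}u\to u$ in the $\mathscr S$-topology for $u\in\mathscr S$ (so that $p^{\mathrm w}\chi_\epsilon^{\mathrm w}u\to p^{\mathrm w}u$), which in turn rests on the separation of the supports of $\chi_\epsilon-1$ and the phase-space concentration of $u$; ``dominated convergence'' alone does not name the mechanism. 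The paper avoids the density step altogether with a second cutoff $\psi_\epsilon$ supported where $\chi_\epsilon\equiv 1$: since $\{\chi_\epsilon,p\}\,\psi_\epsilon\equiv 0$, the product $\{\chi_\epsilon,p\}^{\mathrm w}\psi_\epsilon^{\mathrm w}$ is $O(\epsilon)$ in operator norm, while $\{\chi_\epsilon,p\}^{\mathrm w}(1-\psi_\epsilon^{\mathrm w})u\to 0$ from $\psi_\epsilon^{\mathrm w}u\to u$ and uniform boundedness. That trick replaces the abstract density argument by an explicit support separation, and gives a self-contained estimate directly on $u\in\mathcal D(M_p)$. Both routes work; the paper's is sharper and avoids the extra lemma your version would require. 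Your reorganization of the adjoint step (computing $N_p^*=M_{\bar p}$ first and then applying $\ast\ast$) is correct and equivalent to the paper's two-inclusion argument.

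One small terminological slip: you say the family $\{\chi_\epsilon\}$ is ``bounded in $S_0$''; in the notation of the paper this should be $S(1)$ (the $h$-free class $S_\delta(m)$ with $\delta=0$, $m=1$), so that Calder\'on--Vaillancourt gives the uniform $L^2$ bound. This does not affect the argument.
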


\begin{proof}
\medskip
\noindent
We recall that  $ p^{\rm{w}} ( x, D ) : {\mathscr S}' \to {\mathscr S}' $ is continuous
and hence, if $  u_j  \to u $ and
$ p^{\rm{w}} ( x, D ) u_j  \to v $ in  $ L^2 $, then $ u_j \to u $ in 
$ {\mathscr S}' $. Consequently, $ v = p^{\rm{w} } ( x, D ) u \in L^2 $, 
 $ u \in {\mathcal  D} ( M_p ) $ and $M_pu = v$. This shows that $ M_p $ is closed.

 To show that $ M_p $ is the closure of $ N_p $ we have to 
show that for any $  u \in {\mathcal D} ( M_p  )$ there exists a family $ u_\varepsilon \in
{\mathscr S} $ such that $ u_\varepsilon \to u $ and 
\begin{equation}
\label{eq:closure1}  p^{\rm{w}} ( x, D
) u_\varepsilon \to p^{\rm{w}} ( x , D ) u \ \ \  \text{ in $ L^2 $ as   $ \varepsilon \to
0 $.}
\end{equation}
To construct $ u_\varepsilon $ we take
$\chi \in \CIc ( \RR^{2d} ) $ equal to one in $ B_{\mathbb R^{2n}} ( 0, 1 ) $,  and put  
\[  u_\varepsilon := \chi_\varepsilon^{\rm{w}} ( x, D ) u \in {\mathscr S }, \ \ \ 
\chi_\varepsilon ( x, \xi ) := \chi ( \varepsilon x , \varepsilon \xi ), \ \ \ 
 \ u_\varepsilon \to u \ \text{ in $L^2$.}  \] 
Then 
$  p^{\rm{w}}  u_\varepsilon  = \chi_\varepsilon^{\rm{w}} 
p^{\rm{w}} u_\varepsilon+ [ p ^{\rm{w}}  , \chi_\varepsilon
^{\rm{w}} ] u_\varepsilon
 $ 
and as $ \chi_\varepsilon^{\rm{w}} 
p^{\rm{w}}  u \to p^{\rm{w}}  $ in $ L^2 $,  we need to show
that 
\begin{equation}
\label{eq:commp} [ p ^{\rm{w}} ( x, D ) , \chi_\varepsilon ^{\rm{w}} ( x, D ) ] u \to 0  \ \text{ in $ L^2 $ }  , \ \ 
\ \varepsilon \to 0.
  \end{equation}
To see this we note that \cite[Theorem 4.18]{z12} and the two term Taylor expansion of
$ e^{ i A ( D ) } $ give
\begin{equation}
\label{eq:decompp}
[ p^{\rm{w}} ( x, D ) , \chi_{\varepsilon } ( x, D ) ] 
= a_\varepsilon^{\rm{w}} ( x, D ) , 
\ \ \ a_\varepsilon ( x, \xi ) = i \{ \chi_\varepsilon , p \} ( x, \xi ) + b_\varepsilon ( x, \xi ) ,
\end{equation}
where 
\[ 
b_\varepsilon ( x, \xi ) := \int_0^1 ( 1 - t ) \left(e^{ it A ( D)  } 
( i A( D)  )^2 \left( p ( x, \xi ) \chi_\varepsilon ( y, \eta ) 
- p ( y, \eta ) \chi_\varepsilon ( x, \xi ) \right)\right) |_{ x = y, \xi= \eta } dt ,  
 \]
and  $ A ( D ) :=  \sigma ( D_x, D_\xi;  D_y , D_\eta ) $.
In view of \eqref{eq:assp}, 
\begin{equation*}
 \{ \chi_\varepsilon , p \} ( x, \xi ) = \varepsilon \sum_{j=1}^n 
\left(  \partial_{x_j} p ( x, \xi ) (\partial_{\xi_j} \chi) (\varepsilon x, \varepsilon \xi )  
-  \partial_{\xi_j} p ( x, \xi ) (\partial_{x_j} \chi) (\varepsilon x, \varepsilon \xi )  \right) 
 \end{equation*}
is bounded in $ S ( 1 ) $, uniformly as $ \varepsilon \to 0 $. 

To obtain estimates on $ b_\varepsilon $ we observe that, for some  $c_{\alpha\beta}\in\mathbb{C}$,
\begin{align*}
&(iA(D))^2 (p( x, \xi ) \chi_\varepsilon ( y, \eta ) 
- p ( y, \eta ) \chi_\varepsilon ( x, \xi ))\\
&=\sum_{|\alpha|=|\beta|=2}c_{\alpha\beta}\varepsilon^2( \partial^\alpha p(x,\xi)\partial^\beta \chi(\varepsilon y,\varepsilon \eta)-\partial^\alpha \chi(\varepsilon x,\varepsilon \xi)\partial^\beta p(y,\eta))\in \varepsilon ( S (m) + S (1/m )) , 
\end{align*}
where the order function is given by $ m ( x, \xi , y , \eta ) := \langle x, \xi \rangle \langle y , \eta \rangle^{-1} $. 
The inclusion follows from the fact  $\varepsilon \leq C \langle \varepsilon (x,\xi)\rangle^{-1}$ for $(x,\xi)\in \supp\chi$ and from  the assumption~\eqref{eq:assp}. By~\cite[Theorem 4.17]{z12}, the operators $e^{ithA(D)}:S(m^{\pm 1} )\to S(m^{\pm 1} ) $ are bounded uniformly in $ t $. Since
$ m|_{ x = y, \xi=\eta} = 1 $,  $b_\varepsilon \in \varepsilon S(1)$, and 
\cite[Theorem 4.23]{z12} gives, uniformly as $ \varepsilon \to 0$,
\begin{equation}
\label{eq:commu}
\| b_\varepsilon^{\rm{w}} ( x,  D)  \|_{ L^2 \to L^2 } \leq C \varepsilon  . 
\end{equation}

We now choose $ \psi \in \CIc ( \RR^n )
$ supported in $ B_{\mathbb R^{2n} } ( 0 , 1 ) $, equal to one near $ 0 $, 
and  put  $ \psi_\varepsilon ( x , \xi ) = \psi (\varepsilon x, \varepsilon \xi ) $. 
Then $   \{ \chi_\varepsilon , p
\} (x, \xi ) \psi_\varepsilon ( x, \xi ) \equiv 0 $, and \cite[Theorems 4.18 and 4.23]{z12} imply 
\begin{equation*}
\| \{ \chi_\varepsilon , p\}^{\rm{w}} ( x,  D)  \psi_\varepsilon^{\rm{w}} (x , D ) \|_{ L^2 \to L^2 } \leq C \varepsilon .  
\end{equation*} 
This and 
\eqref{eq:commu} give
\[ \begin{split}  [ p ^{\rm{w}} ( x, D ) , \chi_\varepsilon ^{\rm{w}} ( x, D ) ]
 &  =   \{ \chi_\varepsilon , p \}^{\rm{w}}( x, D )(1-\psi_\varepsilon^{\rm{w}} (x, D))+ 
 \{ \chi_\varepsilon , p\}^{\rm{w}} ( x,  D)  \psi_\varepsilon ^{\rm{w}} (x, D  ) 
   + b_\varepsilon^{\rm w} ( x, D ) 
 \\ &  = 
 \{ \chi_\varepsilon , p \}^{\rm{w}} ( x , D) ( 1 - \psi_\epsilon^{\rm{w}} (x, D) ) + 
 \mathcal O( \varepsilon )_{L^2 \to L^2 } .  
 \end{split} \]
Since $ \psi_\varepsilon^{\rm{w}} ( x, D)  u \to u $ in $ L^2 $ and $\{\chi_\varepsilon,p\}\in S(1)$
(hence by \cite[Theorem 4.23]{z12} $ \| \{ \chi_\varepsilon , p \}^{\rm{w}} ( x , D) \|_{L^2 \to L^2} $ is
uniformly bounded), this and \eqref{eq:commu} give
 \eqref{eq:commp}. 
 
It remains to show the last assertion in \eqref{eq:closure}. For that 
we recall that $ v \in {\mathcal D} ( M_p^* ) $ if and only if there exists  $ C = C(v) $
such that for all $ u \in {\mathcal D } ( M_p )  $
\begin{equation} 
\label{eq:DMp}  \langle M_p u , v \rangle \leq C \| u \|_{L^2} . 
\end{equation}
For $ u \in \mathscr S \subset {\mathcal D } ( M_p ) $ we have
$ \langle M_p u , v \rangle = \langle u , \bar p^{\rm{w}} ( x, D ) v
\rangle, $  
where $ {\bar p}^{\rm{w}} ( x, D ) v \in {\mathscr S } ' $ and 
\eqref{eq:DMp} implies that $ {\bar p} ^{\rm{w}} ( x, D ) v \in L^2 $. 
Hence $  M_p^*  \subset M_{\bar p }$.
Since $ M_p^* $ is closed, 
$ N_{\bar p} \subset N_p^* = {\bar N_p}^* = M_p^*.
$
It follows that 
$ M_{\bar p } = \bar N_{\bar p } \subset M_p^* $ and that $  M_p^*  =  M_{\bar p }$.
\end{proof}

\section{by Zhen Huang and Maciej Zworski}
\label{a:num}

\begin{figure}[h]
\centering
\includegraphics[width=16cm]{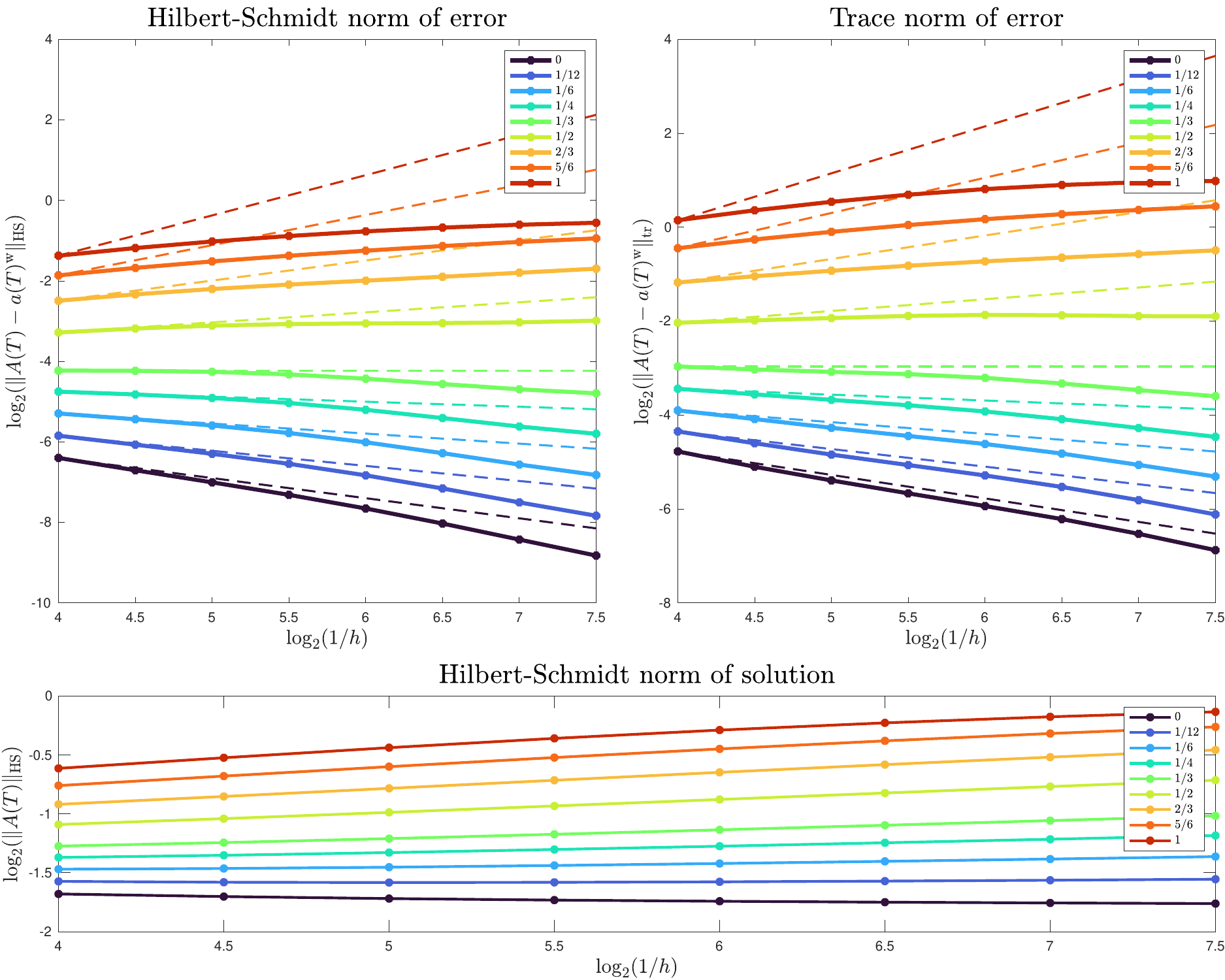}
\caption{Comparison of Lindblad and Fokker--Planck evolutions for $P$ as in~\eqref{eq:defV} and $L_j$ as in~\eqref{eq:defL} at time $ T = 2 $ and for initial data given by a coherent state \eqref{eq:a0} with $ h_0 = h $, $ x_0 = -1$, $ \xi_0 = 0 $, and 
$ \gamma = h^\delta $,  $ \delta = 0, \frac1{12} , \frac16, \frac14, \frac13, \frac12, \frac23, \frac56, 1$. The figure illustrates Theorem \ref{t:1Gauss} and the estimates \eqref{eq:HRR} from \cite{hrr}, with the dashed lines given by $ \log(1/h) \to \log h^{ \frac{3 \delta- 1 }2} $, which corresponds to the exponents in those estimates. 
The change of behaviour at $ \delta = \frac13 $ (the norms of the error decrease with $h$ for $\delta\leq \frac{1}{3}$) suggests that the bounds in \eqref{eq:nofric0} and
\eqref{eq:HRR} are accurate but we do not have conclusive evidence, see Figure~\ref{f:uh_Th}. 
We also show the behaviour of the Hilbert--Schmidt norm
which is consistent with the results of \S \ref{s:lindbladBound} (see the Remark after Theorem \ref{t:1Gauss}). \label{f:uh_T=2}}
\end{figure}

{We describe the results of numerical experiments illustrating the difference between the Schr\"odinger 
\eqref{eq:Schr} and Lindblad  \eqref{e:Lindblad} evolutions, and their relation to the corresponding 
classical evolutions given by the Hamiltonian flow and the Fokker--Planck equation \eqref{eq:FP}, 
respectively. In particular, for both the Hilbert--Schmidt and trace class norms, we compare the results to the bounds 
\eqref{eq:nofric0}, \eqref{eq:nofric}, and the corresponding bounds in \cite[Theorem 1.1]{hrr}.}

\begin{figure}[h]
  \centering
  \includegraphics[width=16cm]{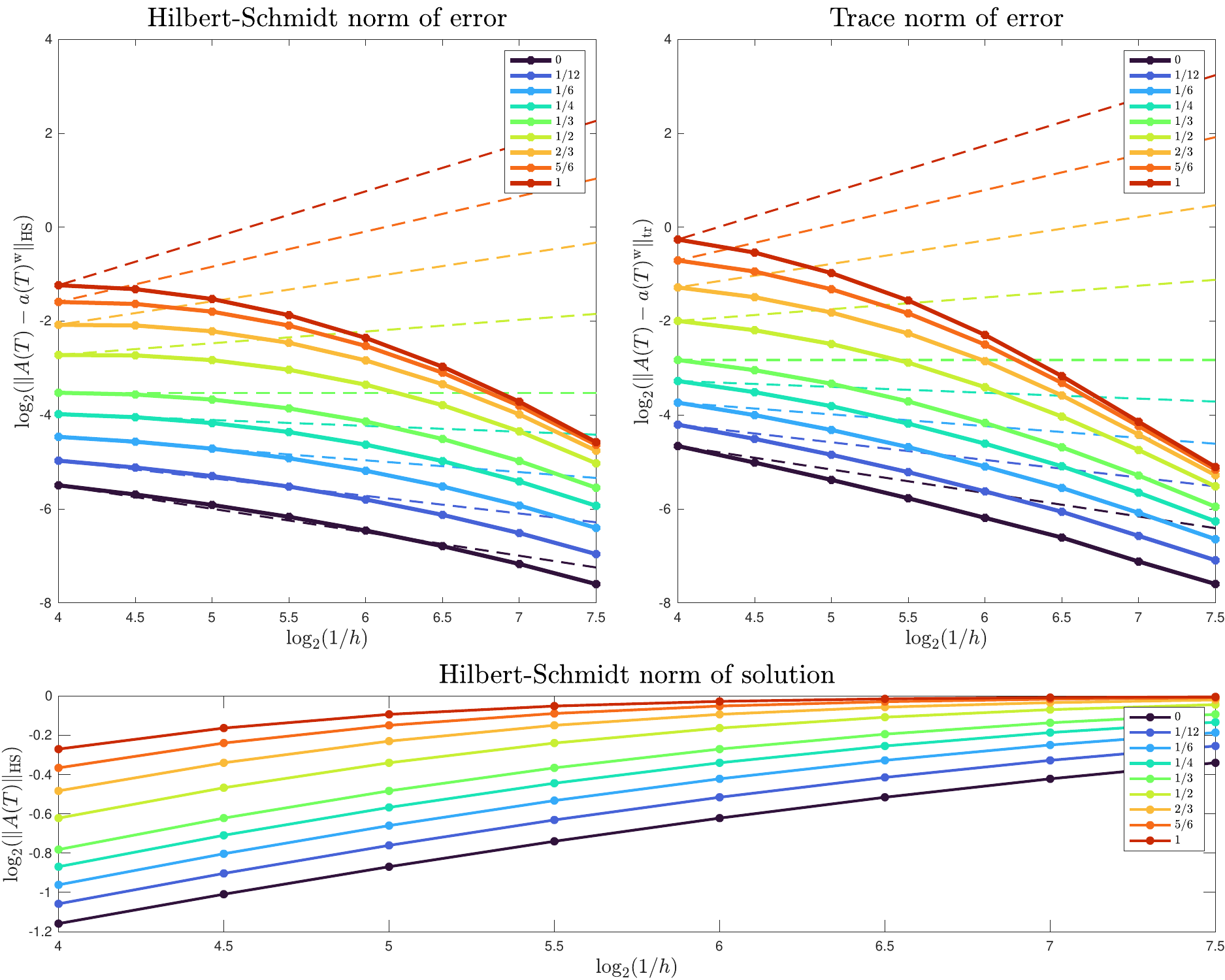}
  \caption{An analogue of Figure \ref{f:uh_T=2} for the initial data given by $ a_0 $ with 
  $ h_0  = 2^{-3} $, $ x_0=-0.8$, $ \xi_0 = 0 $. The agreement of the evolutions seems dramatically better than the estimates given in Theorem \ref{t:1}, especially in view of the fact that those estimates are only non-trivial for 
  $ \gamma = h^\delta $ and $\delta \leq \frac13$. 
  The behaviour of the Hilbert Schmidt norm is consistent with the results of \S \ref{s:lindbladBound}  (see Remark 2 after Theorem \ref{t:1}). 
  \label{f:ufix_T=2}}
  \end{figure}

{If the Hamiltonian acts on $ L^2 ( \mathbb R^n ) $ then the computations for the Lindblad evolution 
and the Fokker--Planck equation have to be performed in $ \mathbb R^{2n} $, which is a dramatic increase of dimension. This forces us, at this early stage, to restrict our attention to $ n = 1 $ which does not allow for chaotic behaviour. However, see \cite[\S 1.3]{hrr} for pointers to the physics literature where time 
dependent chaotic systems were considered.}

{We choose a one dimensional model in which a hyperbolic point occurs in the classical dynamics and hence 
we observe a $ \log 1/h $ Ehrenfest time: the classical/quantum correspondence breaks down 
at that time -- see \eqref{eq:eg} and Figures~\ref{f:1},\ref{f:2}. The model is given by the Schr\"odinger operator with 
a double well potential:
\begin{equation}
\label{eq:defV}
\begin{gathered}
  P=\left(h D_x\right)^2+V(x), \quad D_x=(1 / i) \partial_x,\\
    V(x)=\left(x^2-\tfrac14 \right)^2.  
\end{gathered} 
\end{equation}
We choose the jump operators, $L_j$, so that the Fokker--Planck equation \eqref{eq:FP} takes
the simple form \eqref{eq:exam}:
\begin{equation}
\label{eq:defL}
  L_1=x, \quad L_2=h D_x, \ \ \ \ \ L_j^* = L_j . 
\end{equation}
The specific choice of constants in $ V $ appearing in \eqref{eq:defV} is dictating by numerical considerations: we need our dynamics to be confined to a box of size $ [ 2 \pi , 2 \pi ] $.}

{We write the Lindblad evolution equation \eqref{e:Lindblad} as a differential equation for the
Schwartz kernel of  $ A $, $ A ( x, y ) $, $ x, y \in \mathbb R$:
\begin{equation}
\label{eq:LinA}
  \begin{gathered}
    \partial_t A(t)=\mathcal{L} A(t), \quad A(0)=A_0, \ \ \ A = A ( x, y ) , \\
      (\mathcal{L} A) ( x , y )  =\frac{i}{h}\left(-\left(h \partial_x\right)^2+V(x)+\left(h \partial_y\right)^2-V(y)\right) A-\frac{\gamma}{2 h}(x-y)^2 A .
  \end{gathered} 
  \end{equation}
 The Fokker-Planck equation \eqref{eq:FP} is an evolution equation for functions in phase space, that is, functions of
 $ ( x, \xi ) $:
  \begin{equation}
    \begin{gathered}
      \partial_t a(t)=Q a(t), \quad a(0)=a_0, \ \ \    a ( t ) = a ( t, x, \xi ) , \\
      Q=2 \xi \partial_x-V^{\prime}(x) \partial_{\xi}+\tfrac{1}{2} \gamma h\left(\partial_x^2+\partial_{\xi}^2\right) .
    \end{gathered}
    \label{eq:FPA}
  \end{equation}}

\begin{figure}[h]
  \centering
  \includegraphics[width=160mm]{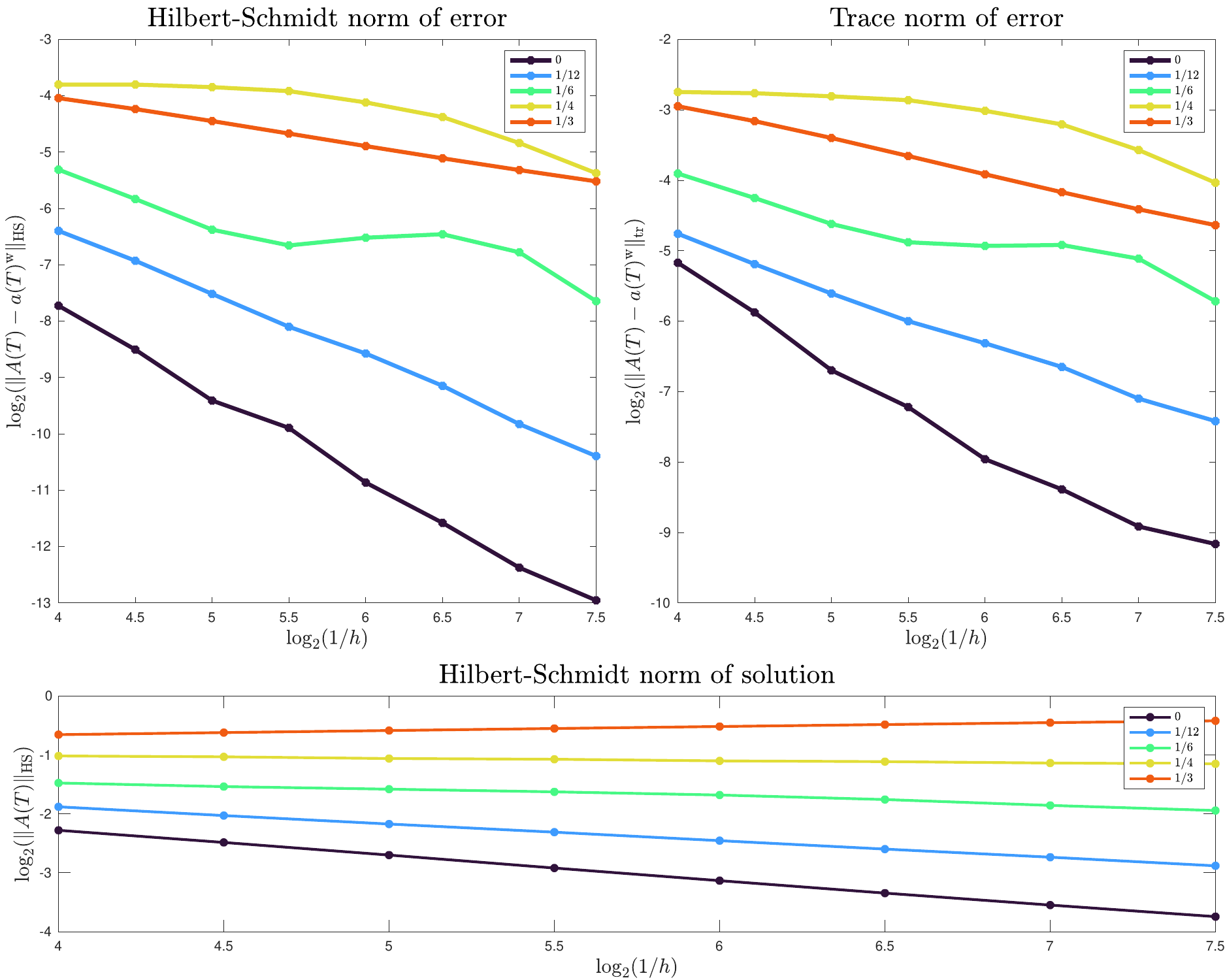}
  \caption{For the initial data in Figure~\ref{f:uh_T=2} (a coherent state with an $ h$ dependent 
  symbol) we take $ \gamma =  h^\delta $ with $ \delta= \frac13, \frac14, \frac16, \frac1{12}, 0$ and time equal to $T=h^{-\frac{1-3\delta}{2}}$ which is the limit of the validity of estimates
  in Theorem \ref{t:1Gauss} and in \eqref{eq:HRR} (from \cite{hrr}). 
   The results confirm their validity but do not indicate optimality in our model.\label{f:uh_Th}}
    \end{figure}
  
  \begin{figure}[h]
    \centering
    \includegraphics[width=160mm]{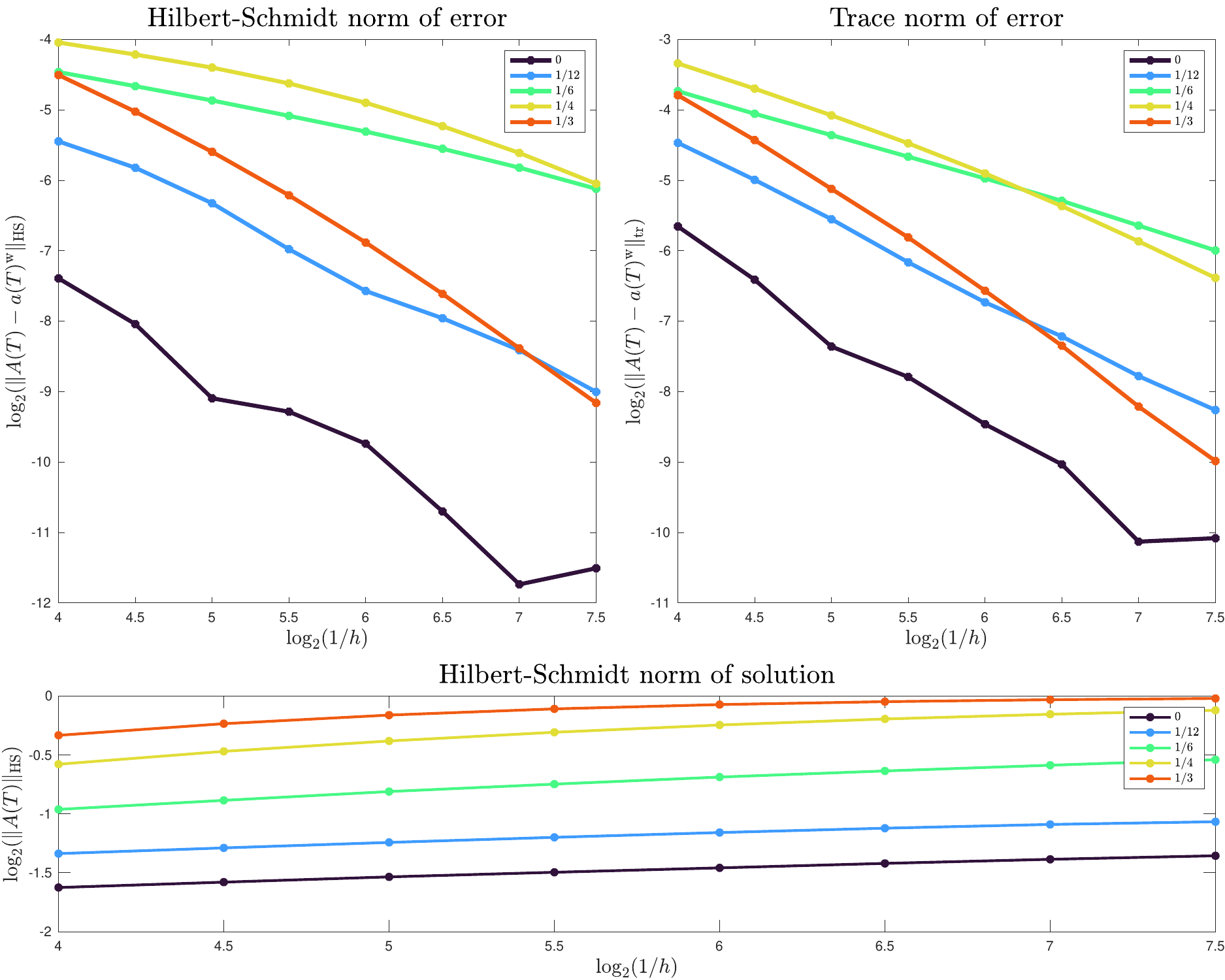}
    \caption{The analogue of Figure \ref{f:uh_Th} with the initial data from Figure~\ref{f:ufix_T=2}.
    It confirms the validity of the estimates in Theorem \ref{t:1} but suggests that in the model we considered they might not be optimal.
        \label{f:ufix_Th}}
    \end{figure}

\subsection{Numerical schemes} \label{s:num}
{We use the following numerical methods to solve \eqref{eq:FPA}. The operator 
 $Q$ is split into two parts, $Q=Q_1+Q_2$, where
\begin{equation}
\label{eq:Q2Q12}
    Q_1:=2 \xi \partial_x + \tfrac{1}{2}\gamma h \partial_x^2,\quad Q_2:=-V^{\prime}(x) \partial_{\xi}+\tfrac{1}{2} \gamma h\partial_{\xi}^2.
\end{equation}
On the Fourier transform side in the $ x $ variable (with the Fourier transform variable denoted by $ k_x $), the operator $Q_1$ acts  by multiplication:
$$
\widehat Q_1 \widehat a_1 ( k_x, \xi ) := \widehat {Q_1 a_1} (k_x, \xi) = \left(2 i\xi k_x - \tfrac12 {\gamma h} k_x^2\right) \widehat a_1(k_x, \xi).  
$$
and thus the corresponding evolution operator could also be written explicitly:
$$
 e^{t \widehat Q_1}\widehat a_1(k_x, \xi) =  e^{t\left(2i\xi k_x - \frac12{\gamma h} k_x^2\right)}\widehat a_1(k_x, \xi).
$$
Similarly, on the Fourier transform side in the $\xi$ variable (with the Fourier variable denoted by 
$ k_\xi$), the operator $Q_2$ acts diagonally:
$$
\widehat Q_2\widehat a_2(x, k_{\xi}) :=\widehat { Q_2 a_2} (x, k_{\xi}) = \left(- i V^{\prime}(x)k_{\xi} - \tfrac12 {\gamma h}k_{\xi}^2\right) \widehat a_2(x, k_{\xi}).
$$
and the corresponding evolution operator is
$$
 e^{t \widehat Q_2}\widehat a_2(x, k_{\xi}) =  e^{t\left(-iV^{\prime}(x)k_{\xi} - \frac12{\gamma h}k_{\xi}^2\right)}\widehat a_2(x, k_{\xi}).
$$
This motivates the following operator-splitting scheme for time-stepping, known as the (second-order) Lie-Suzuki-Trotter formula: 
$$
\begin{aligned}
   e^{n\Delta tQ} = \left( e^{\Delta tQ} \right)^n= &\left( e^{\frac{\Delta t}{2}Q_1} e^{\Delta tQ_2} e^{\frac{\Delta t}{2}Q_1}\right)^n+\mathcal O((\Delta t)^2)\\
 = &  e^{\frac{\Delta t}{2}Q_1}\left( e^{\Delta tQ_2} e^{\Delta tQ_1}\right)^{n-1} e^{\Delta tQ_2} e^{\frac{\Delta t}{2}Q_1}+\mathcal O((\Delta t)^2). 
\end{aligned}
$$
Numerically, the Fourier transform in 1D is handled by the 1D FFT function.}

To solve the Lindblad equation \eqref{eq:LinA}, we first change variable and define
$$
B(w, z):=A(w+z, w-z).
$$
The Lindbladian then becomes
\begin{equation}
\label{eq:L2B}
  \mathcal{L}=\frac{i}{h}\left(-h^2 \partial_z \partial_w+V(w+z)-V(w-z)\right)-\frac{\gamma}{2 h}\left(4 z^2-h^2 \partial_w^2\right). 
\end{equation}
As in \eqref{eq:Q2Q12} we write $\mathcal L = \mathcal L_1 + \mathcal L_2$,
\begin{equation*}
    \mathcal L_1:=\frac{i}{h}\left(V(w+z)-V(w-z)\right)-\frac{2\gamma}{ h}z^2,\quad
    \mathcal L_2 : =-ih \partial_z \partial_w+\frac{\gamma h}{2 }\partial_w^2.
\end{equation*}
The operator $\mathcal L_2$ acts by multiplication on the Fourier transform side:
$$
\mathcal L_2 \widehat B (k_w, k_z) = \left(i h k_w k_z - \frac{\gamma h}{2}k_w^2\right) \widehat B(k_w, k_z),
$$
while $\mathcal L_1$ is already a multiplication operator in the physical space.  We again proceed using the 
Lie--Suzuki--Trotter formula. {The MATLAB codes are included in \S \ref{s:cod}.}


\subsection{Numerical experiments}
\label{s:exp}
The numerical results are shown in figures in \S \ref{s:intr} and in this Appendix and we refer
to their captions for the interpretation of the results. 

When comparing the quantum (Lindblad) and classical (Fokker--Planck) evolution
in Hilbert--Schmidt norms we take
\[   A ( 0 ) = a( x, h D, h  ) , \ \ \ a( x, \xi , h ) = \sqrt{ 2 \pi h } \, a_0 ( x, \xi ), \]
where 
\begin{equation}
\label{eq:a0}
  a_0  (x,\xi)= \sqrt{\frac{2}{\pi h_0}} e^{-\frac{(x-x_0)^2}{h_0}-\frac{(\xi-\xi_0)^2}{h_0}}, \ \ \
  \| a_0 \|_{L^2} = 1 . 
\end{equation}
The case of $ h_0 $ fixed but small (so that $a_0 $ is numerically compactly supported) 
corresponds to the assumptions in Theorem \ref{t:1} and $ h_0 = h $ to those in 
Theorem \ref{t:1Gauss}, and in \cite{hrr}. When we compare trace class norms we normalize
the initial data so that $ \| A ( 0) \|_{\mathscr L_1} = 1 $. 

The initial value for Lindblad equation \eqref{eq:L2B} (for the rotated Schwartz kernel $ B $) is 
\begin{equation}
B(w,z) = \frac{2}{\sqrt{\pi h}} e^{-\frac{(w-x_0)^2}{h_0}} e^{-\frac{h_0}{h^2}z^2}  e^{i\frac{2\xi_0 z }{h}}.
\end{equation}
We recall the general relation between 
$a(x,\xi)$ and $B(w,z)$: 
\begin{equation}
  \begin{aligned}
    B(x,y) = \sqrt{\frac{2}{\pi h}}\int a(x,\xi) e^{i\frac{2y\xi }{h}} d\xi, \ \ \ \ 
    a(x,\xi) = \sqrt{\frac{1}{2\pi h}}\int B(x,y) e^{-i\frac{2\xi y}{h}} dy.
  \end{aligned}
  \label{eq:relation}
 \end{equation}

\subsection{Matlab codes}
\label{s:cod}
The first code solves the Fokker--Planck equation as described in \S \ref{s:num}:
\begin{Verbatim}[numbers=left]
function u = fp(dV, h, gamma_x, gamma_p, u, Nt, dt, N)
% solving Fokker-Planck equation on [-a,a] * [-a,a] using N * N grid points, 
% with initial data u, time step dt, for Nt time steps,
% with force dV(x), semiclassical parameter h,
% and diffusion coefficients gamma_x and gamma_p.

% Setup grids
a = pi; dx = 2*a/N;dp = 2*a/N;
x = -a:dx:a-dx; p = -a:dp:a-dp; [XX, PP] = ndgrid(x, p);
K = [0:N/2-1 0 -N/2+1:-1]*(pi/a);
[Kx, Kp] = ndgrid(K, K);

% construct exp(Q1*dt/2) and exp(Q2*dt)
Q1 = 2*PP.*(1j*Kx) - gamma_x*h/2.0 * (Kx.^2);
Q2 = -dV(XX).*(1j*Kp) - gamma_p*h/2.0 * (Kp.^2);
expQ1_2 = exp(Q1*dt/2); expQ1 = exp(Q1*dt); expQ2 = exp(Q2*dt);
for n = 0 : Nt-1
    % evolve Q1 for half a time step if t = 0
    if n == 0
        u1 = fft(u); 
        u1 = expQ1_2.*u1;
        u = ifft(u1);
    end

    %  evolve Q2 for a full time step
    u1 = transpose(fft(transpose(u)));
    u1 = expQ2.*u1;
    u = transpose(ifft(transpose(u1)));

    % evolve Q1 for a full time step
    u1 = fft(u); 
    if n < Nt-1
        u1 = expQ1.*u1;
    else 
        u1 = expQ1_2.*u1;  % evolve Q1 for a half time step if t = T
    end
    u = ifft(u1);
end
\end{Verbatim}
 
The next code gives the solution to the Lindblad equation \eqref{eq:LinA}: in the notation 
of \eqref{eq:L2B} we obtain $ B $ (denoted by {{rho}} in the code) with the initial data 
given by {{rho}}. An additional feature is allowing for different $ \gamma$'s in front of
$ \Delta_x $ and $ \Delta_\xi $ in \eqref{eq:exam}, $ \gamma_x $ and $ \gamma_p$ respectively.
In our experiment we take $ \gamma_x = \gamma_p $. 

\begin{Verbatim}[numbers=left]
  function rho = lindblad(V, h, gamma_x, gamma_p, rho, Nt, dt, N)
  % solving Lindblad equation on [-a,a] * [-a,a] using N * N grid points, 
  % with initial data u, time step dt, for Nt time steps,
  % with force V(x), semiclassical parameter h,
  % and diffusion coefficients gamma_x and gamma_p.

  % Setup grids
  a = pi; dw = 2*a/N;dz = 2*a/N;
  w = -a:dw:a-dw; z = -a:dz:a-dz; [ww, zz] = ndgrid(w, z);
  K = [0:N/2-1 0 -N/2+1:-1]*(pi/a);
  [Kw, Kz] = ndgrid(K, K);
  
  % construct exp(Q1*dt/2) and exp(Q2*dt)
  L1 = 1i*(V(ww+zz)-V(ww-zz))/h - 2*gamma_p*(zz.^2)/h;
  L2 = 1i*(Kw.*Kz)*h-h*gamma_x*(Kw.^2)/2;  
  expL1dt_2 = exp(L1*dt/2.0); expL1dt = exp(L1*dt); expL2dt = exp(L2*dt);
  
  for n = 0 : Nt-1
      % evolve L1 for half a time step if t = 0
      if n == 0
          rho_tmp = expL1dt_2.*rho;
      end
  
      %  evolve L2 for a full time step
      rho_hat = ifft2(rho_tmp);
      rho_hat = expL2dt.*rho_hat;
      rho_tmp = fft2(rho_hat);
  
      % evolve Q1 for a full time step
      if n < Nt-1
          rho_tmp = expL1dt.*rho_tmp;
      else 
          % evolve Q1 for a half time step if t = T
          rho = expL1dt_2.*rho_tmp;  
      end
  end
  \end{Verbatim}
 
The next matlab codes implements the transformations between $a(x,\xi)$ and $B(w,z)$ given in 
\eqref{eq:relation} via equidistance quadrature:
\begin{Verbatim}[numbers=left]
  function rho = a2B(h, p, u)
  % transform a(x,p) to B(w,z), h is the semiclassical parameter,
  % p is the grid points as defined before, i.e. p = -a:dp:a-dp;
  % u is function a(x,p) evaluated on the N*N mesh,
  % as defined in previous codes.

  Kernel_ift = exp(1j*2*p'*p/h)*dp/(sqrt(2));
  rho = u* Kernel_ift;

  end 

  function u = B2a(h, z, rho)
  % transform B(w,z) to a(x,p), h is the semiclassical parameter,
  % z is the grid points as defined before, i.e. z = -a:dz:a-dz;
  % rho is function B(w,z) evaluated on the N*N mesh,
  % as defined in previous codes.
  
  Kernel_ft = exp(-1j*2*y'*y/h)*dy*sqrt(2)/(pi*h);
  u = rho* Kernel_ft;
 
  end
\end{Verbatim}
Finally we give the code for evaluating the Hilbert-Schmidt norm and the trace norm of data given as $B(w,z)$.
Since $B(w,z)=A(w+z, w-z)$, therefore $\|A\|_{\text{HS}} = \frac{1}{\sqrt 2}\|B\|_{\text{HS}}$. To calculate the trace norm,
  we first reconstruct $A(x,y)$ on a $2N\times 2N$ grid from $B(w,z)$ on a $N\times N$ grid by rotation and interpolation,
  and then calculate the trace norm (also known as the nuclear norm in linear algebra) of this $2N\times 2N$ matrix using singular value decomposition (SVD). 
\begin{Verbatim}[numbers=left]
  function [HS_norm, Tr_norm] = norm_calc(h, rho, a, N)
  % calculate the Hilbert-Schmidt norm and the trace norm of B(w,z)
  % h is the semiclassical parameter, rho is the function B(w,z) 
  % evaluated on the N*N mesh,
  % a is the computation region, 
  % i.e. the calculation is done on [-a,a]*[-a,a],
  % N is the number of grid points in each direction.

  err_HS = sqrt((sum(abs(rho(:)).^2))*((2*a/N)^2) /2);

  rho_rot = rho_rotate(rho);
  err_trace = (sum(svd(rho_rot)))*(2*a/(2*N));
  end 

  function rho_rotate = rho_rotate(rho)
  % by rotation and interpolation, reconstruct A(x,y) from B(w,z):
  N = size(rho,1); rho_reflect = rho(:, N:-1:1);
  rho_rotate = zeros(2*N);
  for i = 1:2*N-1
      rho_diagonal = diag(rho_reflect, N-i);
      L = length(rho_diagonal);
      rho_fit = rho_diagonal;
      if L>1
          rho_fit = spline(1:L,rho_diagonal,1:0.5:L) ;
      end
      idy = (1:(2*L-1))-L + N+1;
      rho_rotate(i,idy) = rho_fit;
  end
  end
\end{Verbatim}

\end{document}